\long\def\ca#1\cb{} 
\newcommand{\ceil}[1]{\left\lceil{#1}\right\rceil}
\newcommand{\floor}[1]{\left\lfloor{#1}\right\rfloor}
\newcommand{\bramatket}[3]{\langle #1 \hspace{1pt} | #2 | \hspace{1pt} #3 \rangle}
\newcommand{\dya}[1]{\ket{#1}\!\bra{#1}}
\newcommand{\BC}{\mathcal{B}}
\newcommand{\DC}{\mathcal{D}}
\newcommand{\GC}{\mathcal{G}}
\newcommand{\IC}{\mathcal{I}}
\newcommand{\MC}{\mathcal{M}}
\newcommand{\OC}{\mathcal{O}}
\newcommand{\TC}{\mathcal{T}}
\newcommand{\UC}{\mathcal{U}}
\newcommand{\VC}{\mathcal{V}}
\newcommand{\WC}{\mathcal{W}}
\newcommand{\YC}{\mathcal{Y}}
\newcommand{\Tr}{{\rm Tr}}
\renewcommand{\geq}{\geqslant}
\renewcommand{\leq}{\leqslant}
\renewcommand{\vec}[1]{\boldsymbol{#1}}  
\newcommand*{\id}{\openone}
\newcommand{\thv}{\vec{\theta}}
\newtheorem{lemma}{Lemma}
\newtheorem*{remark}{Remark}
\newtheorem{supproposition}{Supplemental Proposition}
\newtheorem{corollary}{Corollary}
\newtheorem{proposition}{Proposition}
\newtheorem{definition}{Definition}
\newcommand{\dbloverline}[1]{\overline{\dbl@overline{#1}}}
\newcommand{\dbl@overline}[1]{\mathpalette\dbl@@overline{#1}}
\newcommand{\dbl@@overline}[2]{%
  \begingroup
  \sbox\z@{$\m@th#1\overline{#2}$}%
  \ht\z@=\dimexpr\ht\z@-2\dbl@adjust{#1}\relax
  \box\z@
  \ifx#1\scriptstyle\kern-\scriptspace\else
  \ifx#1\scriptscriptstyle\kern-\scriptspace\fi\fi
  \endgroup
}
\newcommand{\dbl@adjust}[1]{%
  \fontdimen8
  \ifx#1\displaystyle\textfont\else
  \ifx#1\textstyle\textfont\else
  \ifx#1\scriptstyle\scriptfont\else
  \scriptscriptfont\fi\fi\fi 3
}
\def\x{\hspace{3ex}}
\def\y{\hspace{2.45ex}}
\def\z{\hspace{1.9ex}}
\begin{document}

\title{The battle of clean and dirty qubits in the era of partial error correction}

\author{Daniel Bultrini$^*$}
\affiliation{Theoretical Division, Los Alamos National Laboratory, Los Alamos, NM 87545, USA}
\affiliation{Theoretische Chemie, Physikalisch-Chemisches Institut, Universität Heidelberg, INF 229, D-69120 Heidelberg, Germany}

\author{Samson Wang$^*$}
\affiliation{Theoretical Division, Los Alamos National Laboratory, Los Alamos, NM 87545, USA}
\affiliation{Imperial College London, London, UK}

\author{Piotr Czarnik}
\affiliation{Theoretical Division, Los Alamos National Laboratory, Los Alamos, NM 87545, USA}
\affiliation{Institute of Theoretical Physics, Jagiellonian University, Krakow, Poland.}

\author{Max Hunter Gordon}
\affiliation{Theoretical Division, Los Alamos National Laboratory, Los Alamos, NM 87545, USA}
\affiliation{Instituto de Física Teórica, UAM/CSIC, Universidad Autónoma de Madrid, Madrid 28049, Spain}

\author{M. Cerezo}
\affiliation{Information Sciences, Los Alamos National Laboratory, Los Alamos, NM 87545, USA}
\affiliation{Quantum Science Center, Oak Ridge, TN 37931, USA}

\author{Patrick J. Coles}
\affiliation{Theoretical Division, Los Alamos National Laboratory, Los Alamos, NM 87545, USA}
\affiliation{Quantum Science Center, Oak Ridge, TN 37931, USA}

\author{Lukasz Cincio}
\affiliation{Theoretical Division, Los Alamos National Laboratory, Los Alamos, NM 87545, USA}
\affiliation{Quantum Science Center, Oak Ridge, TN 37931, USA}

\begin{abstract}
When error correction becomes possible it will be necessary to dedicate a large number of physical qubits to each logical qubit. Error correction allows for deeper circuits to be run, but each additional physical qubit can potentially contribute an exponential increase in computational space, so there is a trade-off between using qubits for error correction or using them as noisy qubits. In this work, we look at the effects of using noisy qubits in conjunction with noiseless qubits (an idealized model for error-corrected qubits), which we call the ``clean and dirty'' setup. We employ analytical models and numerical simulations to characterize this setup. Numerically we show the appearance of Noise-Induced Barren Plateaus (NIBPs), i.e., an exponential concentration of observables caused by noise, in an Ising model Hamiltonian variational ansatz circuit. We observe this even if only a single qubit is noisy and given a deep enough circuit, suggesting that NIBPs cannot be fully overcome simply by error-correcting a subset of the qubits. On the positive side, we find that for every noiseless qubit in the circuit, there is an exponential suppression in the concentration of gradient observables, showing the benefit of partial error correction. Finally, our analytical models corroborate these findings by showing that observables concentrate with a scaling in the exponent related to the ratio of dirty-to-total qubits.
\def\thefootnote{*}\footnotetext{These authors contributed equally to this work.}\def\thefootnote{\arabic{footnote}}

\end{abstract}
\maketitle

\section{Introduction} 

Since Feynman's seminal proposal~\cite{feynman1982simulating}, quantum computing has moved from a purely theoretical exercise to real devices. Yet, quantum computers today are still far from an ideal machine able to truly harness the full power of fault-tolerant computing. Current devices do not have the qubit counts or gate fidelities required to implement large-scale error correction. This means that only small-scale implementations of error-correcting codes are currently feasible~\cite{egan2021fault}. Therefore, implementing Shor's factoring algorithm \cite{shor1994algorithms}, the Harrow-Hassidim-Lloyd algorithm~\cite{harrow2009quantum}, and the myriad of possibilities this computational paradigm could bring are still a distant reality.

An important primitive in quantum computing is estimating the expectation values of some operator at the end of a quantum circuit. This appears in the Noisy Intermediate Scale Quantum (NISQ)~\cite{preskill2018quantum} setting for the case of variational quantum algorithms~\cite{cerezo2020variationalreview,bharti2021noisy} and quantum machine learning~\cite{biamonte2017quantum},  but also in fault-tolerant algorithms~\cite{nielsen2000quantum}. However, it has been shown that noise can have detrimental effects resulting in the exponential concentration of expectation values \cite{aharonov1996limitations, ben2013quantum,franca2020limitations,wang2020noise}. In the specific case of variational quantum algorithms and quantum machine learning, this implies untrainability of the models, i.e., exponential scaling of resources required for training~\cite{wang2020noise,mcclean2018barren, cerezo2021cost, arrasmith2021equivalence, arrasmith2020effect, cerezo2020impact}.
This scaling phenomenon is known as a Noise-Induced Barren Plateau (NIBP)~\cite{wang2020noise}, which is a specific type of barren plateau ~\cite{mcclean2018barren, cerezo2021cost,arrasmith2021equivalence, arrasmith2020effect, cerezo2020impact,marrero2020entanglement, larocca2021diagnosing,holmes2021connecting, thanasilp2021subtleties}. In addition, it was also shown that error mitigation cannot reverse the effect of noise on expectation value concentration~\cite{wang2021can}. Therefore, noise presents one of the largest obstacles to the practical utility of quantum computers and to obtaining near-term quantum advantage. This leads to a critical question that we aim to address in this work: \textit{how can we mitigate, or remove, the effect of NIBPs and exponential concentration due to noise?}

To investigate this question we construct a quantum computational model that employs ``clean'' and ``dirty'' qubits. Here, we define clean qubits as qubits on which no noise channels are applied, which we envision as the best-case implementation of an error-correcting code. On the other hand, the dirty qubits in our setup are affected  by noise. This setup presents a novel, physically motivated paradigm where we can explore the scalability of cost concentration in quantum algorithms. We can motivate this setup as an idealized scenario of the regime where the number of qubits becomes sufficiently large to implement error correction on some, but not all, of the qubits in a quantum computer. The intersection of NISQ and quantum error correction has been considered in other recent works~\cite{cao2021nisq, holmes2020nisq+, suzuki2022quantum}. We note our setup is different from earlier clean qubit models \cite{knill1998power,fujii2016power,morimae2017power}, which work on a system where one or more qubits are in a known state, such as $\ket{0}$, and the rest are in the maximally mixed state~\footnote{Such models have the ability to factor numbers \cite{gidney2017factoring}, compute partition functions \cite{chowdhury2021computing} and are hard to simulate classically \cite{fujii2018impossibility}.}.

We find that using clean qubits can mitigate the effect of NIBPs. Specifically, we find that using clean qubits increases the depth accessible to an ansatz before the magnitude of the gradients vanishes. However, NIBPs are not avoided unless all qubits are clean. We analytically explore the effect of clean qubits on the concentration of expectation values and gradient using two toy models. The first is a CNOT ladder acting on an input state followed by depolarizing noise on the dirty qubits. The second explores two initially disjoint subsystems, one or both of which are noisy, that are subsequently connected via CNOTs. Both show that the expectation value concentrates with a scaling in the exponent related to the ratio of dirty to total qubits.

Our numerics arrive at essentially the same conclusions as our analytics. Here, we numerically explore the magnitude of the gradient when optimizing the Hamiltonian Variational Ansatz (HVA) for systems of $4,\ 6,$ and $8$ qubits under a depolarizing noise model and also a more realistic model of a trapped ion quantum computer. This allows us to investigate how the gradient scales with system size under the effect of noise. We show that using clean qubits leads to gradient scaling similar to the case of  reducing  error rates of the noise on every qubit. This in turn means that deeper circuits can be employed before NIBPs become an issue. Therefore, our work presents a possible avenue to improve the reach of  quantum computing using a combination of error-corrected and dirty qubits.

\section{The Clean and Dirty Computational Model} \label{sec:clean_dirty}

In this work, we investigate how the emergence of exponential concentration and NIBPs is affected by introducing clean, noiseless  qubits in a quantum computer. Such a setup may provide insights into an early era of quantum error correction when, due to limited numbers of physical qubits available, the number of error-corrected, high-quality logical qubits will be severely restricted~\cite{laflamme1996perfect,gottesman2009introduction,fowler2012surface}. In such a case, a  question arises as to whether combining a limited number of high-quality logical qubits  with uncorrected, noisy qubits could mitigate or perhaps even prevent the exponential concentration  of expectation values and gradients. If so, this would highlight the power of post-NISQ quantum computing architectures, relative to their NISQ counterparts.

\begin{figure}[t]
    \centering
    \begin{quantikz}
       \lstick[wires=3]{$\underset{\textrm{Dirty qubits}}{n_\textrm{d}}$}    &\gate[wires=1]{U}&\gate[wires=1,style={rounded corners}]{\mathcal N}& \gate[wires=1,style={}]{U}   & \gate[wires=1,style={rounded corners}]{\mathcal {N}} & \qw\\
                                                                    &\vdots           &                                                    & \gate[nwires=1]{U}\qwbundle[alternate=2]{}& \gate[nwires=1,style={rounded corners}]{\mathcal {N}}\qwbundle[alternate=2]{}&\vdots \\
                                                                    &\gate[wires=1]{U}&\gate[wires=1,style={rounded corners}]{\mathcal N}& \gate[wires=2]{U}                       & \gate[wires=1,style={rounded corners}]{\mathcal {N}} & \qw\\
       \lstick[wires=3]{$\underset{\textrm{Clean qubits}}{n_\textrm{c}}$}    &\gate[wires=1]{U}& \qw                                              & \qw                                     & \qw                                                    & \qw\\
                                                                    &\vdots           &                                                     & \gate[nwires=1]{U}\qwbundle[alternate=2]{}& \qwbundle[alternate=2]{}&\vdots\\
                                                                    &\gate[wires=1]{U}& \qw                                              & \gate[wires=1,style={}]{U}   & \qw                                                    & \qw
    \end{quantikz}
    \caption{\textbf{The clean and dirty setup.} A schematic example of the clean and dirty setup.  We have  $n_c$ noiseless, clean qubits at which  no errors happen. For these qubits, we represent the action of a gate by its unitary $U$. We also have $n_d$ uncorrected, dirty qubits. The action of a gate at these qubits is represented by $U$ followed by a noise channel $\mathcal{N}$.  In the case of gates acting at both the dirty and the clean qubits, we represent them by $U$ followed by a noise channel $\mathcal{N}$ acting  only at the dirty qubits.}\label{fig:NCsetup}
\end{figure}
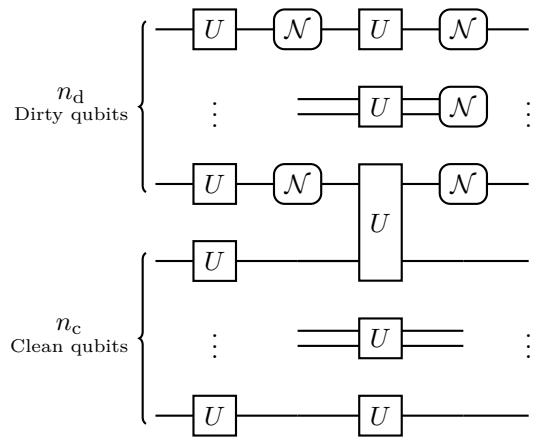

To provide insight into these questions, we consider an idealized model of such a quantum computer, with $n$ qubits,  partitioned into $n_c$ noiseless, clean qubits, and $n_d$ noisy, dirty qubits. We assume that  the action of a gate $g$, defined by a unitary $U$, on the clean qubits of a state $\rho$ is
\begin{equation}
\rho \to U \rho U^{\dag}.
\label{eq:clean}
\end{equation}
When $g$ acts at the dirty qubits we represent its action by  $U$ followed by a noise channel 
$\mathcal{N}$ acting only at the dirty qubits:
\begin{equation}
\rho \to \mathcal{N}(U \rho U^{\dag}). 
\label{eq:noisy}
\end{equation}
For a gate $g$ acting between a dirty and clean qubit, we assume no noise acts on the clean qubit. We call this setup  the  clean and dirty setup, and we illustrate it schematically in  Figure~\ref{fig:NCsetup}. This simple setup can be used to understand the fundamental  limitations  of a quantum computation involving both high-quality logical,  and low-quality, either  uncorrected or logical  qubits. For $n_d=0$ the setup is equivalent to a perfect, noiseless quantum computation, and for $n_c=0$ it becomes equivalent to a noisy quantum device.

We note that this idealized setup neglects error rates of real-world logical qubits due to imperfections of quantum error correction. Furthermore, the model does not account for the larger number of native hardware gates necessary to implement logical non-Clifford gates compared to the Clifford ones~\cite{kitaev1997quantum}. This overhead may result in a higher effective error rate for logical non-Cliffords that the clean-and-dirty setup does not consider.   
Therefore, we treat this work as an exploration of the potential of quantum algorithms combining qubits with large and small effective error rates rather than a realistic model of future quantum devices. 

An alternative approach for early implementations of quantum error correction is to use less robust quantum error correction codes with lower distances to detect and correct a part of the errors. An example of this approach is presented in Ref.~\cite{self2022protecting}.
 In general, the error rates of logical qubits are expected to decay exponentially with the code distance, with the base of the exponential being proportional to the ratio of the physical qubits' error rate to its threshold value~\cite{fowler2012surface}. Therefore, at the beginning of the quantum error correction era, when the physical error rates will be close to the threshold,  obtaining high-quality logical qubits would require a large code distance and consequently a large number of physical qubits per logical qubit~\cite{fowler2012surface}.
Consequently, implementing less powerful codes would require much fewer physical qubits making them more suitable for the first realizations of quantum error correction.

Nevertheless, to obtain a physical qubit count large enough for a quantum advantage, it might be necessary to supplement such logical qubits with noisy, physical ones. Furthermore, a partition of the device qubits into lower-quality error-corrected qubits and noisy ones would increase the number of computational qubits whilst potentially retaining some advantages of quantum error correction. Therefore, a combination of the noisy and quantum-error-corrected qubits modeled by the clean and dirty setup might be the optimal solution when a limited number of device qubits is the primary limitation on the computer's computational power. 

While this motivation applies to the first generations of quantum computers utilizing quantum error correction, the clean and dirty setup can also provide insights into later, more advanced architectures combining high and lower-quality logical qubits. Such a combination might be desirable to optimally utilize qubit counts of quantum hardware when the number of qubits is insufficient to use high-quality error corrections for all logical qubits. Furthermore, in the case of algorithms having qubits with particularly many gates acting upon them, much more errors will occur at these qubits. In such a case, using a code with a higher distance for these qubits might be a natural choice. An example of such an algorithm is quantum phase estimation with a large unitary controlled by an ancilla qubit~\cite{somma2020quantum}.
Therefore,  we expect insights gained with the clean and dirty model to potentially remain relevant beyond the early era of quantum error correction.

\section{Analytical Investigation}\label{sec:analytics}

\subsection{Concentration of expectation values}\label{sec:expval-conc}

In this section, we present two analytical models  in which exponential scaling effects due to noise are observed, even though a proportion of the registers are noise-free. Such scaling effects are present even if there is only one noisy register. The two settings demonstrate two different mechanisms in which noise can spread through a system via entangling gates. In the first setting, we consider a simple circuit where clean and dirty qubits interact via cycles of CNOT gates. Here, we find that the CNOTs allow entropy to spread from noisy to clean registers despite the fact that if the initial state is in the computational basis, no entanglement is actually created between the subsystems. In the second case, we consider a setting where two subsystems, which can in principle be initially disconnected, undergo an entangled measurement. We show that, due to the entangled measurement, local noise in the dirty qubits leads to decoherence and information loss in the joint system of all clean and dirty qubits. In both models, we find that the output distribution of the circuit, as measured in the computational basis, is exponentially indistinguishable from the uniform distribution with increasing circuit depth. However, in both settings, this concentration effect is dampened with increasing number of clean qubits, compared to the number of dirty qubits.

In the first setting, we consider the interaction of local noise with CNOT ladders, as shown in Figure \ref{fig:ladder}. Ladders of 2-qubit gates can be found as a primitive quantum machine learning settings (e.g. in \cite{havlivcek2019supervised}) as well as in variational quantum algorithms settings (e.g.  for the unitary coupled cluster ansatz when one implements an exponentiation of a multi-qubit Pauli operator \cite{taube2006new}). In order to understand how such ladders interact with noise, we consider a simplified model where layers of CNOT ladders are interleaved with instances of local depolarizing noise, where the local noise only acts on the first $n_d$ qubits. In the following proposition, proved in  Appendix~\ref{app:th}, we show how the output distribution converges with circuit depth.

\begin{figure}[t]
    \begin{center}
	\includegraphics[width= .7 \columnwidth]{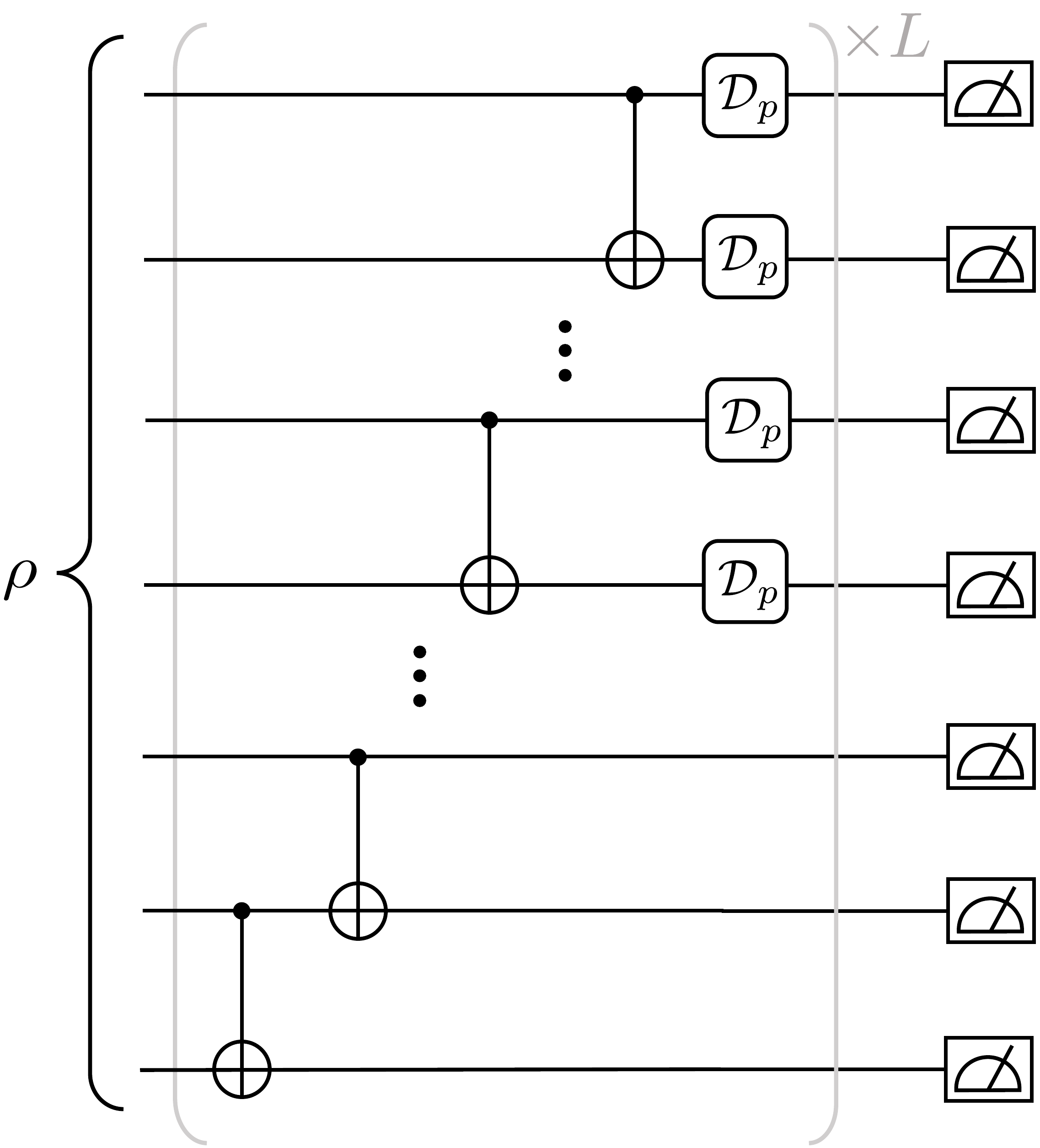}
    \end{center}
	\caption{\textbf{CNOT ladders.} In our first analytical example we consider an input state $\rho$ acted on by $L$ layers of CNOT ladders. After each layer, we consider an instance of local depolarizing noise $\DC_p$ with depolarizing probability $p$ acting on each of the first $n_d$ registers. We compare the output distribution as measured in the computational basis to the uniform distribution.
	}\label{fig:ladder}
\end{figure}

\begin{proposition}\label{prop:conc-ladder-cost}
Consider the circuit in Figure \ref{fig:ladder} with $L$ layers, $n_d$ dirty qubits each with depolarizing noise with depolarizing probability $p$ in each layer, and computational basis measurement $O_{z_n}=\dya{z_n}$; $z_n \in \{0,1\}^n$. Denote the channel that describes the action of the CNOT ladders and noise in the circuit as $\WC^{(n_d)}_L$. For any input state $\rho$, the expectation value concentrates as
\begin{equation}\label{eq:conc-ladder-cost}
    \Tr[\WC^{(n_d)}_L(\rho)O_{z_n}] - \frac{1}{2^n} \leq  (1-p)^{\Gamma_{n,n_d,L}}\sqrt{P(\rho)}\,,
\end{equation}
for any computational basis measurement, where $P(\rho)=\Tr[\rho^2]$ denotes the purity of $\rho$,
and we have defined
\begin{align}
    \Gamma_{n,n_d,L} =
    \begin{cases}
    \left\lfloor L\cdot \frac{n_d }{2^{\floor{\log_2 n}}} \right\rfloor\,, & \text{for}\ n_d = 1 \,,\\
    \left\lfloor L\cdot \frac{n_d }{2^{\ceil{\log_2 n}}} \right\rfloor\,, & \text{for}\ n_d > 1\,.
    \end{cases} 
\end{align}
\end{proposition}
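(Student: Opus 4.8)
The plan is to reduce the statement to a combinatorial fact about how $Z$-type Pauli strings are moved around by the CNOT ladders, and then to finish with a Cauchy--Schwarz estimate that produces the purity factor $\sqrt{P(\rho)}$.

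First I would pass to the Heisenberg picture and expand the measurement operator in the Pauli basis, using $O_{z_n}=|z_n\rangle\!\langle z_n| = 2^{-n}\sum_{s\in\{0,1\}^n}(-1)^{z_n\cdot s}\,Z^s$ with $Z^s:=\prod_{i:\,s_i=1}Z_i$. Two elementary facts then do most of the work: (i) conjugating a pure-$Z$ string by a CNOT again gives a pure-$Z$ string, with no sign, so one layer of the ladder acts on the exponent vector $s$ by an invertible $\mathbb{F}_2$-linear map $T$ (for the standard chain ladder, $(Ts)_i=\bigoplus_{j\ge i}s_j$); and (ii) a single-qubit depolarizing channel with parameter $p$ multiplies a pure-$Z$ string by $(1-p)$ for each of the first $n_d$ (dirty) coordinates in its support. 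Writing $w_d(v)$ for the number of dirty coordinates in $\supp(v)$, this yields
\begin{equation}
(\WC^{(n_d)}_L)^{\dagger}(O_{z_n}) = \frac{1}{2^n}\sum_{s}(-1)^{z_n\cdot s}\,(1-p)^{\sum_{k=0}^{L-1} w_d(T^k s)}\,Z^{T^L s}\,.
\end{equation}

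Second, I would observe that $\id/2^n$ is a fixed point of $\WC^{(n_d)}_L$ (both CNOTs and depolarizing channels fix it), so the $s=0$ term contributes exactly $2^{-n}$ and $\Tr[\WC^{(n_d)}_L(\rho)O_{z_n}]-2^{-n}$ is the sum over $s\ne 0$. Granting the combinatorial lemma $\sum_{k=0}^{L-1}w_d(T^k s)\ge \Gamma_{n,n_d,L}$ for all $s\ne 0$, I can pull $(1-p)^{\Gamma_{n,n_d,L}}$ out of the sum (since $1-p\le1$), reindex the remainder by the bijection $s\mapsto T^L s$, and bound it by Cauchy--Schwarz together with the Parseval identity $\sum_{P}\Tr[\rho P]^2 = 2^n\Tr[\rho^2]$ over all $n$-qubit Paulis $P$; one is then left with at most $(1-p)^{\Gamma_{n,n_d,L}}\sqrt{P(\rho)}$. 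This wrapper is routine.

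The real content is the combinatorial lemma. The structural input is that $T=\id+N$ with $N$ the strictly upper-triangular all-ones matrix, so over $\mathbb{F}_2$ one has $T^{2^m}=\id+N^{2^m}$ and $N^{2^m}=0$ exactly when $2^m\ge n$; hence $T$ has order $D:=2^{\ceil{\log_2 n}}$ and every orbit is periodic with period dividing $D$. One then argues that, because in this ladder $Z$-weight propagates toward the dirty block, the orbit of a nonzero $s$ cannot stay inside the clean block for a whole period, and more quantitatively that each period contributes a controlled amount of dirty weight; splitting the $L$ layers into full periods plus a remainder and using a floor inequality such as $\floor{Ln_d/D}\ge n_d\floor{L/D}$ (or a slightly sharper per-coordinate count) delivers $\Gamma_{n,n_d,L}$. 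The single-dirty-qubit case is treated separately: there $w_d(T^k s)=(T^k s)_1$ equals the parity of the Hamming weight of $T^{k-1}s$, a sequence with self-similar (Pascal-mod-$2$) structure whose density of ones can be pinned down sharply, which is what lets $\ceil{\log_2 n}$ be improved to $\floor{\log_2 n}$ in that case.

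I expect the main obstacle to be precisely this combinatorial lemma --- pinning down the exact constant $2^{\ceil{\log_2 n}}$ in the exponent, and establishing the stronger single-dirty-qubit behavior --- while the operator-theoretic part is straightforward. Two things to be careful about: the orientation of the CNOT ladder must be the one for which $Z$-weight flows toward the dirty qubits (otherwise a string supported only on clean qubits never contributes, and no concentration occurs); and the estimate must hold for every input $\rho$, which is exactly why one subtracts $\id/2^n$ and expresses the residual through the purity.
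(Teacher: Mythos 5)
Your skeleton is essentially the paper's: pass to the Heisenberg picture, note that the ladder acts on $\{\id,Z\}^{\otimes n}$ strings as an invertible $\mathbb{F}_2$-linear map while each noise layer multiplies a string by $(1-p)$ per dirty site in its support, and close with a Schatten-2-norm estimate (your Cauchy--Schwarz/Parseval step is equivalent to the paper's purity lemma plus tracial H\"older) that produces $\sqrt{P(\rho)}$. Your algebraic derivation of the periodicity --- $T=\id+N$ with $N$ nilpotent of index $n$, hence $T^{2^{\ceil{\log_2 n}}}=\id$ --- is a cleaner route to the paper's maximum-cycle-period lemma than its merged-Pascal-triangle induction, and your caveat about the ladder's orientation relative to the dirty block is apt (your explicit formula $(Ts)_i=\bigoplus_{j\ge i}s_j$ corresponds to a different gate ordering than the paper's $s_i\mapsto s_i\oplus s_{i+1}$, but both are $\id$ plus a nilpotent and nothing structural changes).

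The genuine gap is the combinatorial lemma you grant, $\sum_{k=0}^{L-1}w_d(T^k s)\ge\Gamma_{n,n_d,L}$ for all $s\neq0$: this is essentially the entire content of the paper's proof, and the route you sketch does not reach the stated exponent. Showing that each full period of length $D=2^{\ceil{\log_2 n}}$ contributes at least $n_d$ dirty hits and then splitting $L$ into complete periods gives only $n_d\floor{L/D}$ hits; the inequality you invoke, $\floor{Ln_d/D}\ge n_d\floor{L/D}$, points the wrong way, so this argument yields $(1-p)^{n_d\floor{L/D}}$, which is weaker than the claimed $(1-p)^{\floor{Ln_d/D}}$ (take $L<D$ with $n_d=n$: your count gives zero while $\Gamma$ is of order $n$). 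Partial periods are therefore the crux, and the paper handles them with a light-cone count: every occurrence of a ``$1$'' in the first column forces at least $n_d$ ones among the first $n_d$ columns in the $n_d$ preceding rows, which is what produces $\floor{Ln_d/D}$. Likewise the $n_d=1$ sharpening from $2^{\ceil{\log_2 n}}$ to $2^{\floor{\log_2 n}}$ is asserted via ``density of ones'' but not established; the paper proves it by exhibiting the extremal string (a single ``$1$'' at position $2^{\floor{\log_2 n}}$, of period exactly $2^{\floor{\log_2 n}}$ with one first-column hit per cycle) and by showing that any string of maximal period $2^{\ceil{\log_2 n}}$, possible only when $n$ is not a power of two, incurs at least two first-column hits per cycle because its leftward light cone reaches the first column before the cycle closes. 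Until these counting statements are supplied, your argument proves a strictly weaker bound than the proposition.
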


Proposition \ref{prop:conc-ladder-cost} shows that the output distribution of the CNOT ladder circuit exponentially concentrates in the number of layers of the circuit. Moreover, compared to the setting where all registers are noisy, the exponent is scaled by a factor approximately equal to $n_d/n$. Thus, we can consider the scaling to depend on an \textit{effective} noisy depth $L_{\textrm{eff}}:= Ln_d/n$. We note that the bound in Eq.~\eqref{eq:conc-ladder-cost} holds even if there is no entanglement in the circuit - that is, it holds even if the input state $\rho$ is a tensor-product state diagonal in the computational basis.

An alternative interpretation of the clean and dirty computational model follows by inspecting Proposition~\ref{prop:conc-ladder-cost}
in the low error limit. Namely, considering for simplicity $n$ to be a power of $2$, then we have $(1-p)^{\Gamma_{n,n_d,L}}=(1-p)^{L_{\textrm{eff}}}$ and hence, if $p\ll 1$, then 
\begin{equation}\label{eq:rescaling}
    (1-p)^{L_{\textrm{eff}}}\sim (1-p\frac{n_d}{n}L)\sim(1-p\frac{n_d}{n})^{L}\,,
\end{equation}
up to first order in $p$.  Equation~\eqref{eq:rescaling} shows that adding clean qubits (left-hand side) is also approximately equivalent to a fully noisy system with an effective noise parameter rescaled by $n_d/n$. The identification  of the right-hand-side can be made precise from Proposition~\ref{prop:conc-ladder-cost} as $(1-p\frac{n_d}{n})^{L}$ is precisely the scaling coefficient one finds in an $L$-layered system where all qubits are subject to noise with a probability $p\frac{n_d}{n}$. As such, having fewer dirty qubits has a similar effect to reducing the effective noise in a system where all qubits undergo the same rate of depolarizing noise. This can be thought of as arising due to the fact that the CNOTs propagate the noise from dirty qubits through to the rest of the system. 

Thus, for sufficiently deep circuits, the localized contributions of error instances wash into the rest of the circuit, and the decoherence can be broadly characterized by the \textit{total number of error instances}, or equivalently effective error rate, and depth $L_{\textrm{eff}}$.  This interpretation is explored more in our numerical studies in Section \ref{sec:num}.
We note that our first setting can also be adapted to give similar results for Pauli observables. Specifically, by considering the reverse circuit with Pauli observables we observe the same exponential concentration. We discuss this further in Appendix \ref{sec:pauli-observables}.

\begin{figure}[t]
	\includegraphics[width=  \columnwidth]{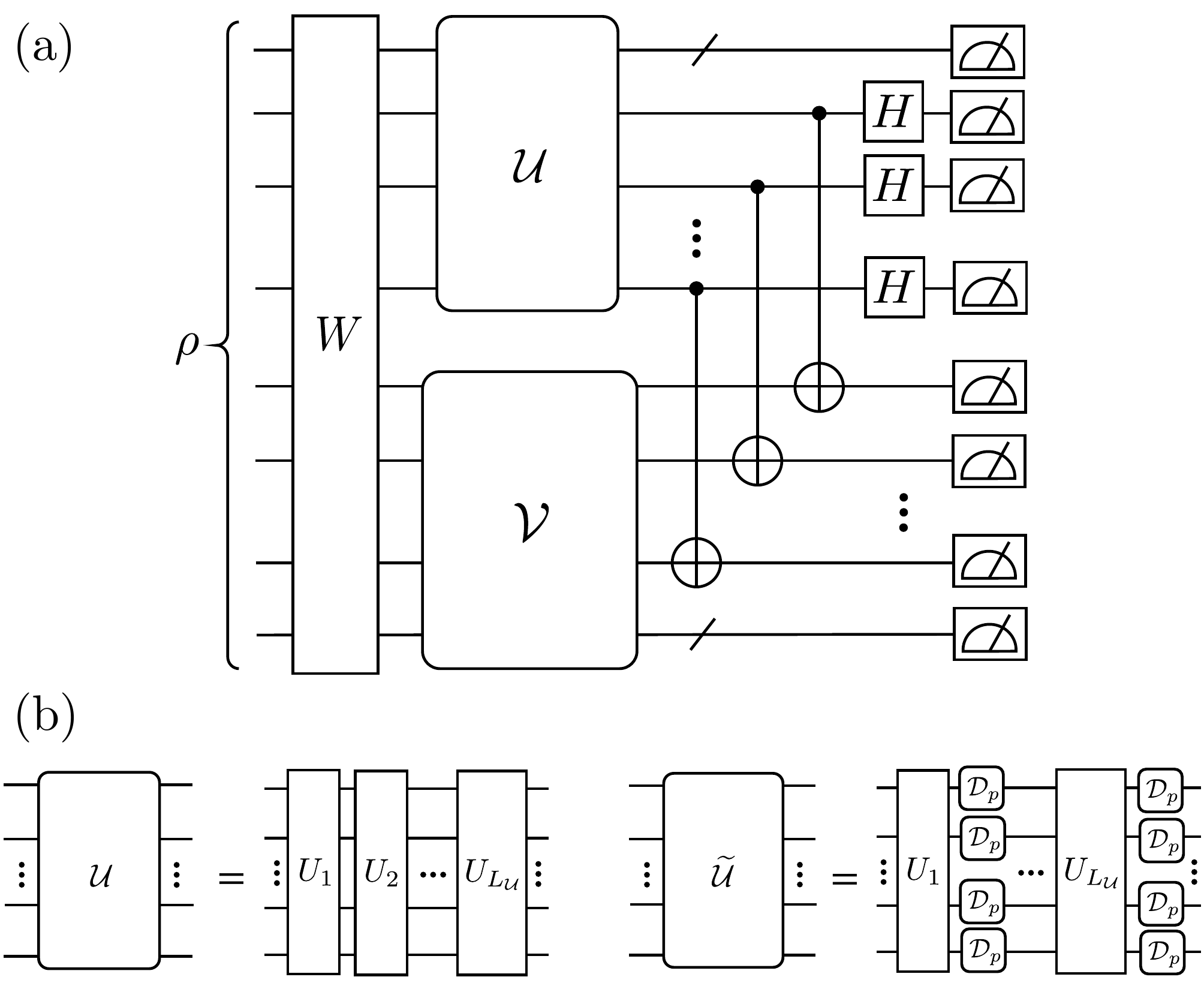}
	\caption{\textbf{Entangled registers} (a) In our second analytical example we show that entanglement can also cause local noise to spread globally. In this circuit, we consider an arbitrary input state $\rho$ that is acted on by some unitary $W$. We then partition the system into two subsystems, each of which undergoes its own unitary evolution, denoted as $\UC$ and $\VC$. In our analytics, we consider a modification where one of the subsystems undergoes noisy evolution whilst the other is noise-free. At the end of the circuit, a subset of the qubits undergoes an entangled measurement across the two subsystems.
	(b)  We define the depth of $\UC$ to be the number of non-parallelizable unitary layers $L_{\UC}$ in a given hardware implementation. When we consider a noisy modification $\UC \rightarrow \widetilde{\UC}$, we insert layers of local depolarizing noise with depolarizing probability $p$ in between each unitary layer. We define $L_{\VC}$ and $\widetilde{\VC}$ in an analogous manner.}\label{fig:entangleUV}
\end{figure}

In the second setting,  we show how local noise impacts the global system due to entangled measurements. This can also equally be considered to demonstrate the role of entanglement when considering the Heisenberg picture (i.e., considering the circuit acting in the  reverse order). We consider the circuit in Figure \ref{fig:entangleUV}(a) and suppose that the registers are split into two subsystems $A$ and $B$. Both subsystems are acted on by a global unitary $W$ (which can in principle even be separable across the cut $A|B$), before they undergo evolution according to some channel $\TC$ which is also separable across the cut $A|B$. One choice for $\TC$ is to consider noise-free unitary evolution which we denote as $\UC \otimes \VC$. In order to characterize the effects of noise, we can also modify these channels to account for hardware noise, as demonstrated in Figure \ref{fig:entangleUV}(b), in the sense that between each non-parallelizable layer of gates there is an instance of local depolarizing noise acting on each of the noisy qubits. We denote such a noisy modification as $\UC \rightarrow \widetilde{\UC}$ or $\VC \rightarrow \widetilde{\VC}$. Further, we denote the number of such unitary layers as $L_{\UC}$ and $L_{\VC}$ respectively. Finally, at the end of the circuit there is an entangled measurement across $m$ pairs of registers in the two subsystems, which we denote with the measurement operator $\BC_m(\dya{z_{2m}})$, where $\BC_m$ is the map corresponding to a Hadamard gate followed by a CNOT on $m$ pairs of qubits. We note that a special case of this circuit is the (global) Hilbert-Schmidt test circuit \cite{khatri2019quantum}. We find that the following proposition, proved in Appendix~\ref{app:th}, holds.

\begin{proposition}\label{prop:conc-UV-cost}
Consider the circuit in Figure \ref{fig:entangleUV}(a) partitioned into two subsystems $A$ and $B$. We have entangling operation $\BC_m$ on $m\geq1$ pairs of qubits across $A$ and $B$, noisy unitary evolution $\TC$ separable across the cut $A|B$, global unitary evolution $\WC$, and computational basis measurement $O_{z_n}=\dya{z_n}$; $z_n \in \{0,1\}^n$. For any input state $\rho$ and any computational basis measurement, the expectation value concentrates as
\begin{equation}
    \Tr\big[\BC_m^\dag\circ\TC\circ\WC(\rho)O_{z_n}\big] - \frac{1}{2^n} \leq  (1-p)^{L_\TC} \sqrt{P(\rho)}\,,
\end{equation}
where $P(\rho)$ is the purity of $\rho$ and we have defined
\begin{align}
    L_\TC = 
    \begin{cases}
    {L_{\UC}}\,,& \text{for}\ \TC = \widetilde{\UC} \otimes \VC\,, \\
    {L_{\VC}}\,,& \text{for}\ \TC = {\UC} \otimes \widetilde{\VC}\,, \\
    {L_{\UC}+L_{\VC}}\,,& \text{for}\ \TC = \widetilde{\UC} \otimes \widetilde{\VC} \,.
    \end{cases}     
\end{align}
\end{proposition}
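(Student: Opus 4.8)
The plan is to reduce the bound to a statement about how local depolarizing noise contracts states in Hilbert–Schmidt norm, and then track how the entangled measurement $\BC_m$ and the global unitary $\WC$ interact with this contraction. The key observation is that $\BC_m^\dag \circ \TC \circ \WC(\rho)$ is a (possibly subnormalized) Hermitian operator, and measuring $O_{z_n}=\dya{z_n}$ against it picks out a single diagonal entry; the quantity $\Tr[X O_{z_n}] - 2^{-n}$ is therefore bounded by $\|X - \id/2^n\|_{\HS}$ via Cauchy–Schwarz with $\|O_{z_n} - \id/2^n\|_{\HS} \leq 1$. So the real task is to show $\|\BC_m^\dag \circ \TC \circ \WC(\rho) - \id/2^n\|_{\HS} \leq (1-p)^{L_\TC}\sqrt{P(\rho)}$.

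First I would note that $\BC_m$ is unitary (Hadamard followed by CNOT on $m$ pairs), so $\BC_m^\dag$ is trace-preserving and unitary; conjugation by a unitary is an isometry on Hilbert–Schmidt norm and fixes $\id/2^n$. Hence $\BC_m$ can be dropped entirely: it suffices to bound $\|\TC\circ\WC(\rho) - \id/2^n\|_{\HS}$. (This is where the earlier remark that the global Hilbert–Schmidt test is a special case comes in — $\BC_m$ is the measurement-basis-change of that test.) Likewise $\WC$ is a global unitary, an HS-isometry fixing the maximally mixed state, so it too can be absorbed: we reduce to bounding $\|\TC(\sigma) - \id/2^n\|_{\HS}$ for $\sigma = \WC(\rho)$, with $P(\sigma) = P(\rho)$ since unitaries preserve purity. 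Now $\TC$ is separable across $A|B$, i.e. $\TC = \TC_A \otimes \TC_B$ where one or both factors is a noisy unitary channel $\widetilde{\UC}$ or $\widetilde{\VC}$ and the other (if clean) is a plain unitary.

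The core estimate is then: for a channel of the form $\EC = \mathcal{D}_p^{\otimes k} \circ (\text{unitary}) \circ \cdots \circ \mathcal{D}_p^{\otimes k}\circ(\text{unitary})$ with $L$ layers of depolarizing noise on $k$ fixed qubits, and any normalized state $\tau$, one has $\|\EC(\tau) - \id/2^n\|_{\HS} \leq (1-p)^{L}\sqrt{P(\tau)}$. I would prove this by decomposing $\tau - \id/2^n$ in the Pauli basis: write $\tau = 2^{-n}\sum_{\vec{a}} c_{\vec a} \sigma_{\vec a}$, so that $\tau - \id/2^n = 2^{-n}\sum_{\vec a \neq 0} c_{\vec a}\sigma_{\vec a}$ and $\|\tau - \id/2^n\|_{\HS}^2 = 2^{-n}\sum_{\vec a\neq 0} c_{\vec a}^2 \leq 2^{-n}\sum_{\vec a} c_{\vec a}^2 = P(\tau)$. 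Unitaries permute Pauli strings and preserve HS norm, while a single-qubit depolarizing channel multiplies every nontrivial single-qubit Pauli component by $(1-p)$. The subtlety is that a unitary layer can move weight from a Pauli string that is identity on the noisy qubits to one that is not (and vice versa), so one cannot naively say "each layer kills a factor $(1-p)$ on everything." Instead I would argue that in each noisy layer, the depolarizing channel $\mathcal{D}_p^{\otimes k}$ multiplies the Pauli coefficient of $\sigma_{\vec a}$ by $(1-p)^{|\vec a|_{\text{noisy}}}$ where $|\vec a|_{\text{noisy}}$ is the number of noisy qubits on which $\sigma_{\vec a}$ acts nontrivially; since $\BC_m^\dag$ or $\WC$ is irrelevant, the one place structure must be used is: after the full unitary part of the circuit the measurement $O_{z_n}$ is diagonal, which constrains which Pauli strings survive — but actually the cleanest route is to use that the channel $\TC\circ\WC$ applied to $\rho$ then measured gives a subnormalized probability, so we only need the HS contraction, and there the worst case is when all the noise hits, giving the stated $L_\TC$ in each of the three cases. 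I expect the main obstacle to be making this "$(1-p)$ per noisy Pauli weight, per layer" bookkeeping rigorous in the presence of the intervening unitaries — i.e. showing that one genuinely accumulates $(1-p)^{L_\TC}$ and not something weaker — and handling the clean-subsystem factor cleanly so that no spurious contraction is claimed on the clean side while still getting the product structure $\|\cdot\|_{\HS}^2 = \|\cdot\|_{\HS,A}^2 \cdot(\text{stuff})$ to close with $\sqrt{P(\rho)}$ rather than a larger constant.

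For the three cases in the definition of $L_\TC$: when only $\widetilde{\UC}$ is noisy we get $(1-p)^{L_{\UC}}$ from the $A$-factor and an isometry on the $B$-factor; symmetrically for $\widetilde{\VC}$; and when both are noisy the contractions multiply across the tensor product, giving $(1-p)^{L_{\UC}+L_{\VC}}$. I would assemble these, combine with the Pauli-expansion purity bound $\|\sigma-\id/2^n\|_{\HS}\leq\sqrt{P(\sigma)}=\sqrt{P(\rho)}$, and then restore $\BC_m$ and conclude via Cauchy–Schwarz against $O_{z_n}$ as above. The whole argument is structurally parallel to the proof of Proposition~\ref{prop:conc-ladder-cost}; the only genuinely new ingredient is observing that the entangling measurement $\BC_m$ and the global unitary $\WC$ drop out because they are HS-isometries fixing $\id/2^n$, so that entanglement enters purely through how $\WC$ redistributes purity into the noisy subsystem before $\TC$ acts.
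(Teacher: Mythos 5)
There is a genuine gap, and it sits exactly where you flagged your "main obstacle." Your plan is to discard $\BC_m$ and $\WC$ as Hilbert--Schmidt isometries fixing $\id/2^n$ and then prove the core estimate $\big\|\TC(\sigma)-\id/2^n\big\|_2\leq(1-p)^{L_\TC}\sqrt{P(\sigma)}$ for an \emph{arbitrary} state $\sigma=\WC(\rho)$. That estimate is false. Since $\TC$ is separable across $A|B$, a Pauli component of $\sigma-\id/2^n$ that acts as the identity on the noisy side is never damped: take $\sigma=\dya{0}_A\otimes\frac{\id_B}{2^{n_B}}$ with $\TC=\UC\otimes\widetilde{\VC}$ (noise only on $B$, trivial unitary layers); then $\TC(\sigma)=\sigma$ and $\|\sigma-\id/2^n\|_2$ is independent of $L_{\VC}$, while your claimed bound decays as $(1-p)^{L_{\VC}}$. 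Even in the doubly noisy case the exponent $L_{\UC}+L_{\VC}$ cannot be obtained for arbitrary $\sigma$, since a deviation supported only on $A$ picks up at most $(1-p)^{L_{\UC}}$. Your proposed fix --- "the worst case is when all the noise hits" --- is backwards: for an upper bound the dangerous terms are the Pauli strings the noise \emph{misses}, and the interleaved unitaries inside $\widetilde{\UC}\otimes\VC$ cannot move weight across the $A|B$ cut to rescue this.

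The ingredient you dropped is precisely the one the paper's proof relies on. The paper works in the Heisenberg picture: it propagates $O_{z_n}-\id/2^n$ backwards through $\BC_m$ (and bounds $\|\rho\|_2=\sqrt{P(\rho)}$ separately via the tracial H\"older inequality). The computational-basis operator expands in $\{\id,Z\}^{\otimes n}$, and on each entangled pair the map $\BC$ sends $\id\id\to\id\id$, $\id Z\to ZZ$, $Z\id\to XX$, $ZZ\to -YY$; the key observation is that every resulting non-identity string is non-identity on \emph{both} subsystems, so each layer of depolarizing noise on whichever side is dirty multiplies its coefficient by a factor of magnitude at most $1-p$, and the separable evolution $\TC$ preserves this tensor structure layer by layer. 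This is how the one-sided exponents $L_{\UC}$, $L_{\VC}$ and the combined exponent $L_{\UC}+L_{\VC}$ arise, and it is why entanglement in the measurement (or, equivalently, initial entanglement) is the mechanism spreading the noise --- the opposite of your closing remark that $\BC_m$ "drops out" and is not a new ingredient. To repair your argument you would have to keep $\BC_m$, carry out the Pauli bookkeeping on the image of $\{\id,Z\}^{\otimes n}$ under it, and only then apply the per-layer contraction on the noisy factor; your treatment of $\WC$ (absorbed into $\rho$, purity preserved) and the Cauchy--Schwarz reduction against $O_{z_n}$ are fine as stated.
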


Proposition \ref{prop:conc-UV-cost} shows the role entanglement (or entangled measurements) can play in spreading noise from a noisy subsystem to a clean subsystem. Namely, despite noise only occurring in one part of the circuit, the output distribution of the full system exponentially concentrates in the circuit depth. We also note that this occurs even if only one clean qubit is coupled to a dirty qubit ($m=1$). The proof follows from analysis in the Heisenberg picture and thus implies that entangling registers at the beginning of the computation also leads to an identical effect, which we demonstrate in Appendix \ref{sec:entangleUV}.

We note that, when only subsystem $A$ ($B$) is noisy, there is still exponential scaling, however, compared to the fully noisy setting the exponent is modified by factor $\frac{L_{\UC}}{L_{\UC}+ L_{\VC}}$ ($\frac{L_{\VC}}{L_{\UC}+ L_{\VC}}$). If we consider the unitaries to have linear depth in the number of qubits and scale proportionally with the same  factor (i.e. $L_{\UC}/n_A = L_{\VC}/n_B$), then the modification to the exponent is ${n_A}/{n}$ (${n_B}/{n}$). That is, the exponent is rescaled by the ratio of the number of dirty qubits to the total number of qubits, which is similar to the effect we find in Proposition \ref{prop:conc-ladder-cost}.

Propositions \ref{prop:conc-ladder-cost} and \ref{prop:conc-UV-cost} together show that despite some of the registers being clean, exponential concentration of expectation values in the circuit depth can still occur, in a similar spirit to that of NIBPs as presented in Ref.~\cite{wang2020noise}. Indeed, when we set $n_d=n$ we recover the same scaling.  Moreover, there is also exponential concentration with each increasing dirty qubit (for linear depth circuits, in the case of Proposition \ref{prop:conc-UV-cost}). We note that following the methods presented in Ref.~\cite{wang2020noise}, our results can also be generalized to classes of Pauli noise that have at least two types of Pauli error occurring with non-zero probability.

We remark that, despite the fact that our bounds are one-sided, there is some physical significance to the factor of $n_d/n$ that the exponent is rescaled to when transitioning from $n$ to $n_d$ dirty qubits. This is due to the fact that the exponent in our bounds corresponds to the Pauli string with the smallest amplitude decay. Thus, whilst we would not expect all circuits to exactly exhibit this scaling, we do expect it to be characteristic of circuits in the best case. Further details are given in the proofs of Propositions \ref{prop:conc-ladder-cost} and \ref{prop:conc-UV-cost}, which can be found in Appendices \ref{sec:ladder} and \ref{sec:entangleUV} respectively.

\subsection{Gradient scaling}\label{sec:gradient-scaling}

Our above results can also be readily transported to consider gradient scaling. First, we note that if either of the two above settings is modified to contain a trainable parameter (without breaking initial assumptions) and in settings amenable to the parameter shift rule, then Propositions \ref{prop:conc-ladder-cost} and \ref{prop:conc-UV-cost} automatically imply exponentially vanishing gradients in the depth of the circuit. Namely, this implies barren plateaus for linear depth circuits if $n_d$ scales linearly with $n$. This also implies barren plateaus for polynomial depth circuits of degree 2 or greater, for any $n_d>0$. Further, in more general settings where one cannot apply the parameter shift rule, we now explicitly demonstrate how gradient scaling results can be obtained for modified circuits, and present further details in Appendix \ref{sec:gradientscaling}. 

We consider a trainable unitary of the form
\begin{equation} \label{eq:trainable-unitary}
    Y(\thv)= W_0 \prod_{k=1}^K e^{-i \theta_{k} H_{k}} W_{k}\,,
\end{equation}
where $\{W_k\}_{k=0}^K$ are arbitrary fixed unitary operators and $\{H_k\}_{k=1}^K$ are Hermitian operators. We denote the channel that corresponds to this unitary as $\YC(\thv)$. In the following propositions, we modify the circuits in Section \ref{sec:expval-conc} by inserting $Y(\thv)$ in the middle of the circuit and constraining the input state to be a computational basis state.

\begin{proposition}\label{prop:ladder-grad}
Consider a cost function $C(\thv)=\Tr[\WC^{(n_d)}_{L_2}\circ\YC({\thv})\circ(\WC^{(n_d)}_{L_1})^{\dag}(\rho)O_{z_n}]$, where $\rho$ is a computational basis input state, $\WC^{(n_d)}_{L_i}$ denotes the channel corresponding to $L_i$ instances of CNOT ladders and local noise on $n_d$ qubits as presented in Figure \ref{fig:ladder}, and with computational basis measurement $O_{z_n}=\dya{z_n}$; $z_n \in \{0,1\}^n$. The partial derivative with respect to parameter $\theta_k$ is bounded as
\begin{equation}\label{eq:ladder-grad1}
    |\partial_{\theta_k} C| \leq (1-p)^{\Gamma_{n,n_d,L_1}+\Gamma_{n,n_d,L_2}}\|H_k\|_\infty\,,
\end{equation}
where we denote
\begin{align}
    \Gamma_{n,n_d,L_i} =
    \begin{cases}
    \left\lfloor L_i\cdot \frac{n_d }{2^{\floor{\log_2 n}}} \right\rfloor\,, & \text{for}\ n_d = 1 \,,\\
    \left\lfloor L_i\cdot \frac{n_d }{2^{\ceil{\log_2 n}}} \right\rfloor\,, & \text{for}\ n_d > 1\,,
    \end{cases} 
\end{align}
for $i \in \{1,2\}$.
\end{proposition}

\begin{proposition}\label{prop:entangleUV-grad}
Consider a cost function  $C(\thv)=\Tr[\BC^{\dag}_{m_2}\circ\TC_2\circ\YC(\thv)\circ\TC_1\circ\BC_{m_1}(\rho)O_{z_n}]$, where $\rho$ is a computational basis input state, $\BC_{m_i}$ is an entangling operation between ${m_i}$ pairs of qubits in subsystems $A_i$ and $B_i$ as considered in Proposition \ref{prop:conc-UV-cost}, $\TC_i$ denotes noisy unitary evolution separable across the cut $A_i|B_i$, and with computational basis measurement $O_{z_n}=\dya{z_n}$; $z_n \in \{0,1\}^n$. The partial derivative with respect to parameter $\theta_k$ is bounded as
\begin{equation}\label{eq:entangleUV-grad1}
    |\partial_{\theta_k} C| \leq (1-p)^{L_{\TC_1}+L_{\TC_2}}\|H_k\|_\infty\,,
\end{equation}
where we denote
\begin{align}
    L_{\TC_i} = 
    \begin{cases}
    {L_{\UC_i}}\,,& \text{for}\ \TC_i = \widetilde{\UC}_i \otimes \VC_i\,, \\
    {L_{\VC_i}}\,,& \text{for}\ \TC_i = {\UC_i} \otimes \widetilde{\VC}_i\,, \\
    {L_{\UC_i}+L_{\VC_i}}\,,& \text{for}\ \TC_i = \widetilde{\UC}_i \otimes \widetilde{\VC}_i \,.
    \end{cases}     
\end{align}
for $i \in \{1,2\}$, and $L_{\UC_i}$ and $L_{\VC_i}$ are defined in the same manner as $L_{\UC}$ and $L_{\VC}$ in Figure \ref{fig:entangleUV}.
\end{proposition}

Propositions \ref{prop:ladder-grad} and \ref{prop:entangleUV-grad} demonstrate that when modifying the settings considered in Section \ref{sec:expval-conc} to include a parameterized unitary, the partial derivative of measurement outcomes with respect to any parameter displays similar scaling to the concentration of expectation values. Specifically, if we assume that the largest singular values of the generators $\{H_k\}^K_{k=1}$ of the unitary $Y(\thv)$ scale at most polynomially in $n$, this implies NIBPs for linear depth circuits if $n_d$ also scales linearly in $n$, or for polynomial-depth circuits of degree 2 or greater for any $n_d>0$.

\section{Numerical Simulations}
\label{sec:num}
In this section we numerically investigate the behavior of the gradient of a cost function for the Hamiltonian Variational Ansatz (HVA) using both a single-qubit local depolarizing noise model and a model based on a trapped-ion quantum computer \cite{trout2018simulating, cincio2018learning}.  We start with an introduction of  the  HVA in Section~\ref{sec:HVA}, and follow with a presentation of the results in  Section~\ref{sec:num}.

\subsection{Hamiltonian Variational Ansatz}
\label{sec:HVA}

We consider an HVA  ~\cite{farhi2014quantum,hadfield2019quantum} for a task of variational  ground state  search for  a transverse field one-dimensional quantum Ising model.  The model  is given by  a Hamiltonian
\begin{equation}
    H = - \sum_{\langle i, j \rangle} X_i X_{j}  - g\sum_{i=1}^{n} Z_i\,.
\end{equation}
Here, $g$ is a constant, and $X,Z$ are Pauli matrices. We choose $g=1$ and assume Periodic Boundary Conditions (PBC).  
The ansatz  with $L$ layers takes the form 
\begin{equation}
\begin{split}
    \ket{\psi(\vec{\theta})} = \:  & e^{-i \theta_{2L} H_Z} e^{-i \theta_{2L-1} H_{XX}} \dots \\
 &   e^{-i \theta_2 H_Z} e^{-i \theta_1 H_{XX}}\ket{0}, \label{eq:HVA}
\end{split}
\end{equation}
where $H_{XX} = \sum_{\langle i, j \rangle} X_i X_j$, $H_Z = \sum_{i=1}^{n} Z_i$, and   $\vec{\theta}=(\theta_1, \theta_2, \dots,  \theta_{2L})$ are parameters. We decompose   $e^{-i \theta_k H_{XX}}$ and $e^{-i \theta_l H_{Z}}$ to entangling Molmer-S{\o}rens{\o}n gates and single qubit rotations, respectively,   as shown in Figure~\ref{fig:HVA}. 
We minimize the energy of the model using a cost function defined as
\begin{equation}\label{eq:cost-HVA}
C(\vec{\theta}) = \bramatket{\psi(\vec{\theta})}{H}{\psi(\vec{\theta})}, 
\end{equation}
and compute the gradient 
\begin{equation}
\nabla C(\vec{\theta}) = (\partial_{\theta_1} C(\vec{\theta}), \partial_{\theta_2} C(\vec{\theta}),  \dots,  \partial_{\theta_{2L}} C(\vec{\theta}) ), 
\label{eq:grad}
\end{equation} 
using a parameter shift rule~\cite{schuld2019evaluating}. The gradient (\ref{eq:grad}) is not affected by a barren plateau in the noiseless case~\cite{larocca2021diagnosing} making it a convenient setup for numerical investigation of the noise effects on the gradient scaling.    For  technical details on the gradient computation  see Appendix~\ref{app:shift}. 

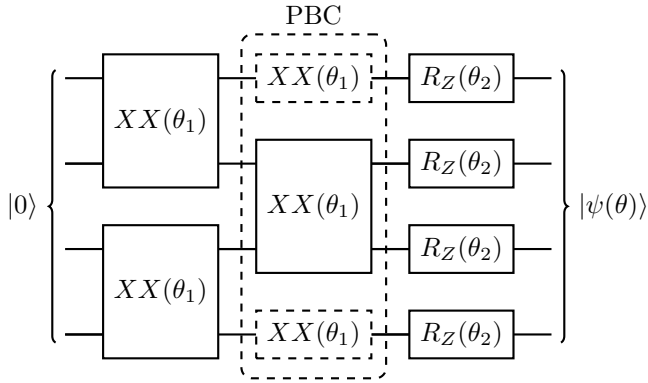
\begin{figure}[t]
\centering
\begin{quantikz}
    \lstick[wires=4]{$\ket{0}$}     & \gate[style={},wires=2]{XX(\theta_1)} & \gate[style={dashed}, wires=1]{XX(\theta_1)}\gategroup[4,steps=1,style={dashed,rounded corners, inner xsep=2pt},background]{{PBC}}& \gate[wires=1]{R_Z(\theta_2)} \qw  & \qw \rstick[wires=4]{$\ket{\psi(\mathbf{\theta})}$} \\
                                    & \qw                          & \gate[style={}, wires=2]{XX(\theta_1)}                                                                                                    & \gate[wires=1]{R_Z(\theta_2)} \qw    & \qw\\
                                    & \gate[style={},wires=2]{XX(\theta_1)} & \qw                                                                                                                             & \gate[wires=1]{R_Z(\theta_2)} \qw    & \qw\\
                                    & \qw                          & \gate[style={dashed}, wires=1]{XX(\theta_1)}                                                                                                    & \gate[wires=1]{R_Z(\theta_2)} \qw    & \qw 
\end{quantikz}
\caption{
    \textbf{The HVA ansatz.} A single layer HVA ansatz (\ref{eq:HVA}) for a 4-qubit system with periodic boundary conditions. We decompose the ansatz to Molmer-S{\o}rens{\o}n   $XX(\theta_1) = e^{- i\theta_1 X_j X_{j+1}}$ gates and single  qubit rotations   $R_Z(\theta_2) = e^{-i(\theta_2/2) Z_k}$ which are native gates of a trapped-ion quantum computer.        
    }\label{fig:HVA}
\end{figure}

\subsection{Numerical results}
\label{sec:num_res}

\begin{figure*}[ht]
    \centering
    \subfloat[Subfigure 1][Depolarizing noise model - gradient versus number of layers]{
    \includegraphics[width=0.48\textwidth]{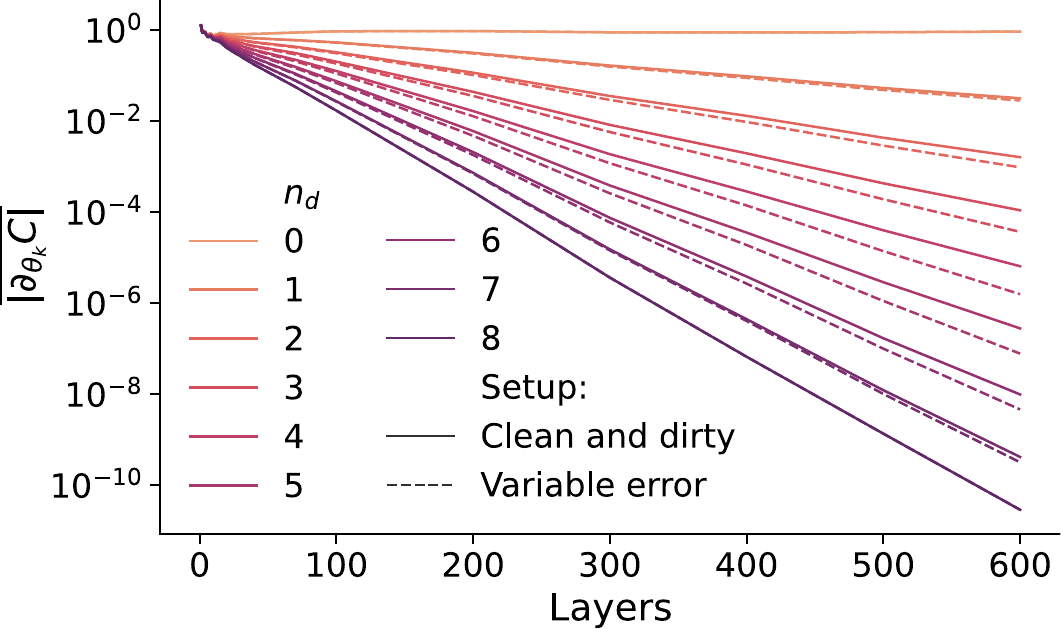}
    \label{fig:DepolLayer}}\vspace{0mm}
    \subfloat[Subfigure 2][Realistic noise model - gradient versus number of layers]{
    \includegraphics[width=0.48\textwidth]{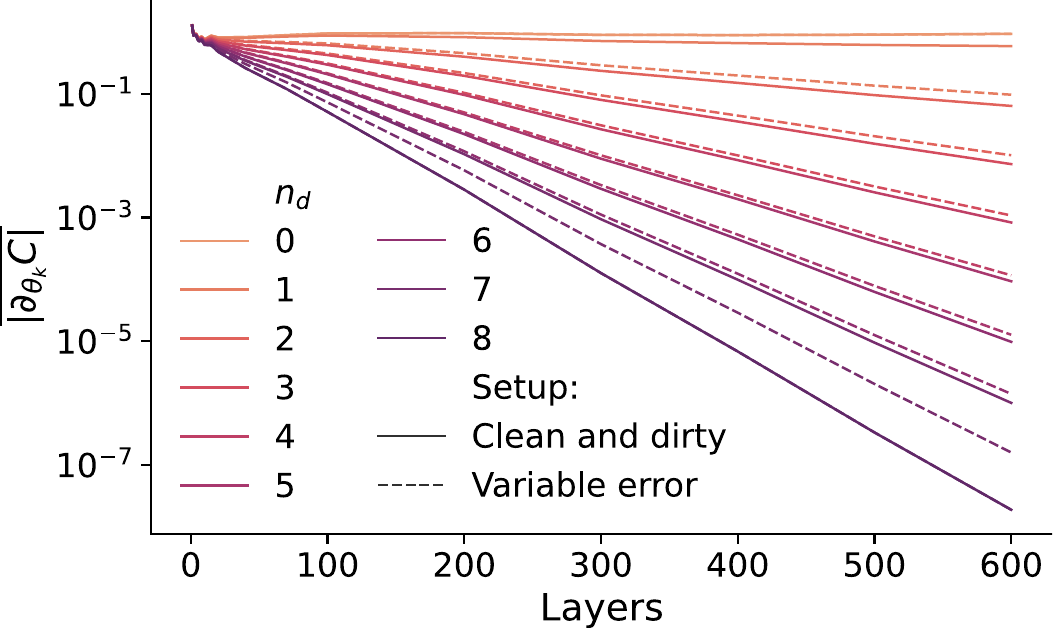}\label{fig:RealLayer}}
    \caption{
    \textbf{HVA gradient scaling versus the number of HVA layers.}  We plot the mean absolute value of the partial derivative of the cost function with respect to a parameter $\theta_k$  $\overline{|\partial_{\theta_k} C|}$   for $L=1,\ldots,600$ and $n=8$. The average is taken over both  $28$ random sets of parameters $\vec{\theta}$ and the parameters $\theta_k$. In (a) we show results for  local  depolarizing noise. In (b)  we show results for the trapped-ion  realistic noise model. The solid lines are obtained for $n_d=0,1,\dots,8$.  In the plots larger $n_d$ is plotted with brighter colors.  The dashed lines are obtained for  the case of  $n$ dirty qubits and   error rates of the noisy gates  scaled by a factor $f=1/8, 2/8, \dots, 7/8$ with respect to their noise rates for the aforementioned simulations.  We plot the results for $n_d$ and $f=n_d/n$ with the same color.    
     }   \label{fig:numericalResultLayer}
\end{figure*}

We examine the scaling of the averaged absolute value of the partial derivative of the cost function with respect to a parameter $\theta_k$ $\overline{|\partial_{\theta_k} C|}$  for the cost function of Eq.~\eqref{eq:cost-HVA} in a system with $n=8$ qubits and for an HVA ansatz with the number of layers $L$ increasing from $1$ to $600$.
The chosen range of $L$ allows to showcase the asymptotic scaling of  $\overline{|\partial_{\theta_k} C|}$ with increasing $L$.
The average is uniformly  taken over both random sets of parameters $\vec{\theta}$ and the parameters $\theta_k$.   We consider   $n_d= 0, 1, 2, \dots, n$ in both  local  depolarizing noise and a realistic trapped-ion noise model~\cite{trout2018simulating, cincio2020machine}. 

The depolarizing noise model applies a local depolarizing channel after each gate if the qubit is designated to be dirty. In the case of the $XX$ gate, noise is represented by applying a single qubit depolarizing channel to each dirty qubit. The realistic model is based on real-world machines~\cite{trout2018simulating} and consists of single-qubit noise channels for single-qubit gates acting on the dirty qubits, and a correlated two-qubit noise channel after $XX$ gates. The $XX$ noise is applied only when both qubits are dirty. A detailed description of the noise models, including figures and parameters, can be found in Appendix~\ref{app:noise}.

We assume that the dirty qubits form a contiguous block as in Figure~\ref{fig:NCsetup}. We gather  results in Figure~\ref{fig:numericalResultLayer}. Here we can see that  $\overline{|\partial_{\theta_k} C|}$  decays exponentially with $L$ for $n_d>0$ (as evidenced by a straight line in the log-linear plot). Furthermore, we can see that the averaged gradient also decreases exponentially with increasing $n_d$. This can be seen from the fact that for a fixed $L$, increasing $n_d$  decreases the value of  $\overline{|\partial_{\theta_k} C|}$  by a constant factor in the log-linear plot. Hence, we find in both cases that for $n_d>0$,  $\overline{|\partial_{\theta_k} C|}$  decays exponentially with $L$ and $n_d$. Namely, we heuristically observe that $ \overline{|\partial_{\theta_k} C|}\in\OC(a^{Ln_d})$ for some $a<1$.  These results reflect the exponential scaling in $L$ and $n_d$ obtained in our theoretical results from Propositions~\ref{prop:conc-ladder-cost} and \ref{prop:conc-UV-cost}, thus indicating that the scaling in our propositions  might be present in realistic models.

\begin{figure*}[ht]
    \subfloat[Subfigure 1][Depolarizing noise model - gradient versus $L_{\textrm{eff}}$]{
    \includegraphics[width=0.48\textwidth]{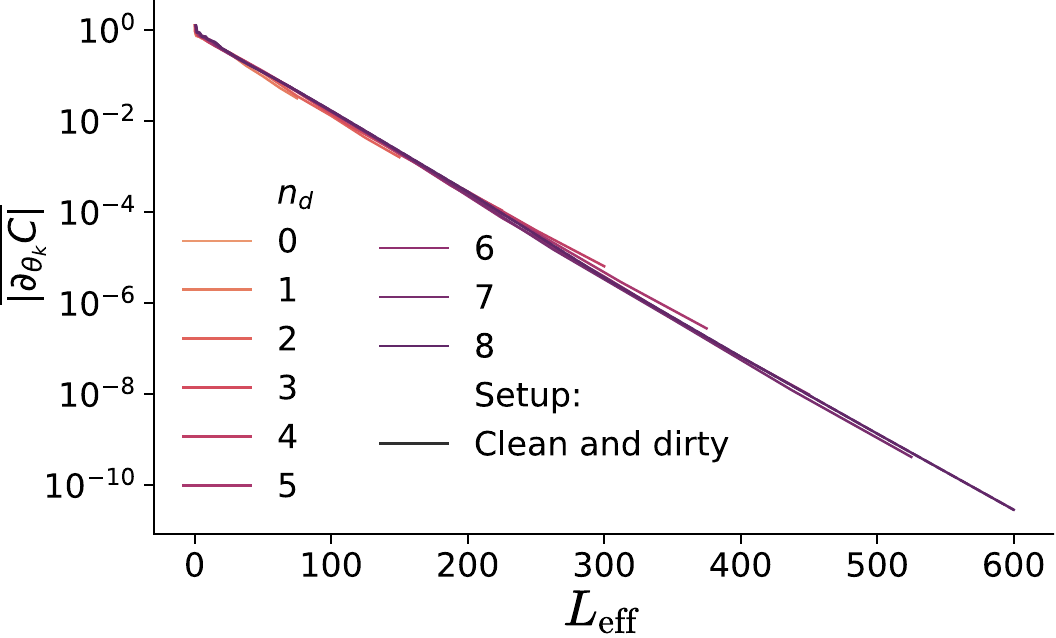}
    \label{fig:DepolScaled}}\vspace{0mm}
    \subfloat[Subfigure 2][Realistic noise model - gradient versus total error rate]{
    \includegraphics[width=0.48\textwidth]{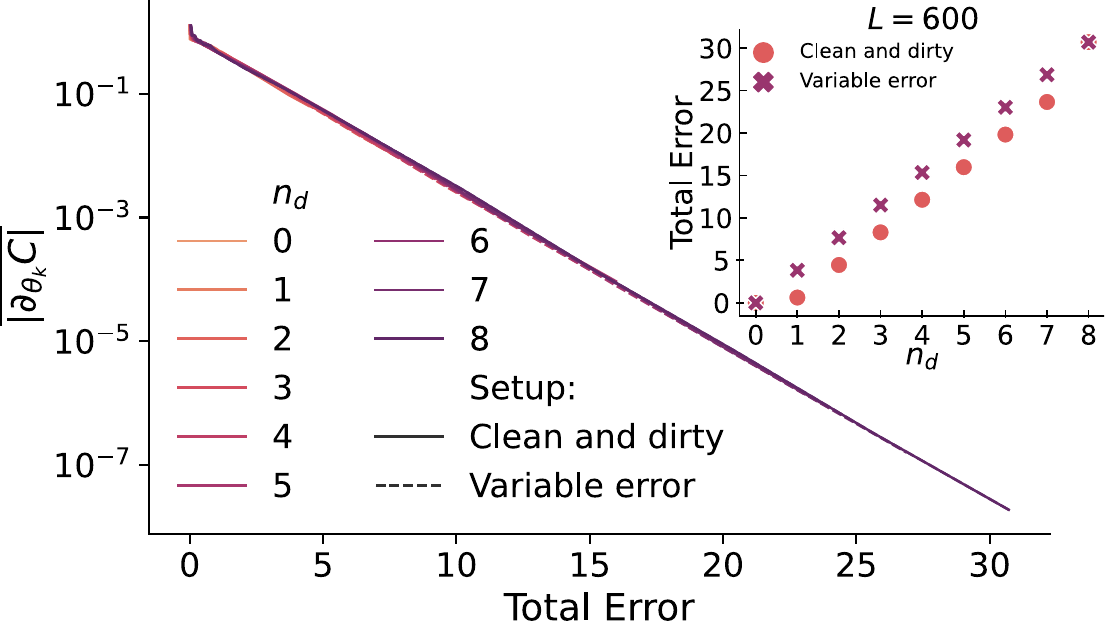} \label{fig:RealScaled}}
    
    \caption{
    \textbf{HVA gradient scaling versus the total error rate.} The plots show the mean absolute value of the partial derivative of the cost function $\overline{|\partial_{\theta_k} C|}$ from Figure~\ref{fig:numericalResultLayer} plotted versus the total error, which in the depolarizing case is proportional to $L_{\textrm{eff}}=L\frac{n_d}{n}$. The solid lines show  the clean and dirty setup, while the dashed ones show  the variable error setup, with darker lines representing lower noise levels.  Both the depolarizing noise model (a) and the realistic noise  (b) results show an exponential decay   of the   mean absolute value of the  partial derivative with the total error rate    up to  small deviations. The inset in (b) shows the error rate for the two setups and  the realistic noise model in relation to $n_d$  demonstrating that it is approximately linear. Consequently, in both cases, we obtain that the mean gradient decays approximately exponentially with $n_d$.  
    }
    \label{fig:numericalResultScaled}
\end{figure*}

Here we can also verify to what extent Eq.~\eqref{eq:rescaling} holds in the considered realistic scenario. Namely, we perform numerical simulations for the same HVA ansatz as previously analyzed but now all the qubits are dirty and the local depolarizing probability is rescaled according to Eq.~\eqref{eq:rescaling}. For instance, in the case of  depolarizing noise, we consider the error rates
\begin{equation}
 p \rightarrow p f, \quad\text{with } f=1/n, 2/n, \dots, (n-1)/n, 1.
\end{equation}
The error rates in the case of the trapped-ion noise model are scaled analogously, see Appendix~\ref{app:noise} for details. 
 
Figure~\ref{fig:numericalResultLayer} shows that the results for $n_d<n$ (solid lines), with the local depolarizing and the trapped-ion noise  models, are similar to the case when all qubits are dirty and the noise is rescaled by $f=n_d/n$ and  $f=(n_d-1)/n$, respectively (dashed lines). The difference in $f$ comes from the fact that in the case of depolarizing noise, the number of noise channels per layer of the ansatz is proportional to $ n_d$,
while in the case of the trapped-ion noise model we have  $n_d-1$   noisy Molmer-S{\o}rens{\o}n gates per layer, as explained in Appendix~\ref{app:noise}. 
Therefore, our numerics demonstrate that adding clean qubits is approximately equal to rescaling the error rate by a factor $n_d/n$ as in Eq.~\eqref{eq:rescaling}. In Appendix~\ref{app:additionalResults} we additionally analyze the gradient behavior for $n=4,6$ qubits, obtaining the same conclusions as here for $n=8$.

We further explore the behavior of the averaged derivative of the cost function with respect to the parameters in Figure~\ref{fig:numericalResultScaled} by plotting   $\overline{|\partial_{\theta_k} C|}$ versus the total circuit error rate, which we define as the sum of the error rates of all gates in a circuit. For both the local depolarizing and the trapped-ion noise models we obtain with good accuracy a collapse of $\overline{|\partial_{\theta_k} C|}$ plotted versus the  circuit error rate for all considered values of $n_d$ and    $f$. In Appendix~\ref{app:additionalResults} we conduct a similar analysis  for $n=4,6$ obtaining the same conclusion. We see small deviations in the depolarizing case which appear to increase with increasing system size. Furthermore, in the same appendix, we also explore the clean and dirty setup for low circuit depths ($L=1-30$) up to $n=10$  in the realistic noise model, obtaining similar behavior as for the higher depths.

We note that the deviations  are within one standard deviation of the mean gradient which may indicate that there are finite sample artifacts. Nevertheless, to elucidate their origin a more systematic investigation would be necessary, which  we leave to future work.  We find that  for  the realistic noise model deviations are much smaller resulting in nearly perfect collapse. Finally, we note that in the case of the depolarizing noise, the total error is proportional to $n_d/n$, while in the case of the realistic noise, it scales approximately linearly   with $n_d/n$  as shown in an inset of  Figure~\ref{fig:RealScaled}. Therefore, the obtained collapses provide further examples of exponential suppression of the gradient by  a ratio of $n_d/n$.

\section{Discussion}

We have found that partitioning a quantum computer into noiseless ``clean'' qubits and noisy ``dirty'' qubits typically has an analogous effect to lowering the overall error rate by the ratio $\frac{n_d}{n}$. Consequently, our model of clean qubits cannot avoid exponential scaling effects due to noise such as barren plateaus, and even a single dirty qubit will ruin the bunch eventually. As it has previously been shown that standard error mitigation techniques also cannot avoid exponential scaling effects due to noise~\cite{wang2021can,takagi2021fundamental}, our work provides further evidence that a fully noiseless system may be ultimately required to avoid such scaling. We note that even though we exclusively look at gradient scaling in our numerics, this also implies the concentration of expectation values due to the results of~\cite{arrasmith2021equivalence}. This means that our results have implications for the scalability of near-term algorithms in general, not just in the realm of training parameterized circuits.

On the other hand,  we found numerically that each additional clean qubit exponentially suppresses the rate of gradient decay with respect to depth.  Therefore, the inclusion of clean qubits helps to mitigate NIBPs and exponential concentration of expectation values due to noise.  Our analytical results point to this being a more general phenomenon in partially noisy circuits.  We note that here, in order to  numerically investigate the  scaling of trainability at large depths, we assumed that the error rates of the noisy qubits are larger than the ones expected from future machines. Their reduction  should enable even more impressive increases in computational reach  than the ones shown in our numerics.

We expect this work to lay the groundwork for potential explorations into the post-NISQ era, which is seemingly drawing closer with ever-larger qubit counts and decreasing error rates. Our results indicate that  quantum computation with both error-corrected and noisy qubits can mitigate the exponential concentration issues. Therefore, it motivates the search for practically viable realizations  of this computational paradigm.   The clean and dirty setup is an idealized model of such a quantum computer, as a clean qubit implementation requires the correction of every error occurring at a logical qubit.

A natural follow-up study would be an extension of our model that takes into account that real-world logical qubits are characterized by logical error rates. Such a model would also enable a more comprehensive study of strategies involving combining logical qubits with different logical error rates and their comparison to using the same code distance for all the device qubits. In order to better model real-world logical qubits, it is essential to consider the gate-dependent errors, as non-Clifford gates pose significantly greater implementation challenges than Clifford gates. 

Another line of follow-up research would be to consider the application of the clean and dirty setup to algorithms that involve some qubits subject to many more gates than the others, such as the quantum phase estimation algorithm~\cite{somma2020quantum}. For such qubits, one naturally obtains higher effective error rates, and potentially better error suppression when just a fraction of qubits are corrected. Therefore, in such cases, the clean and dirty setup may result in even better error suppression than in the cases investigated here. Another worth-exploring scenario that may result in the clean and dirty setup as an effective model is a coupling of a sensor built from noisy qubits~\cite{danilin2022quantum} to logical qubits processing information obtained from the sensor~\cite{lauk2020perspectives}.  Finally, one can also ask about using the clean and dirty computational paradigm to design custom quantum algorithms that can maximize quantum advantage in the post-NISQ era.

\section{Acknowledgements}

We thank Andrew Arrasmith, Burak Sahinoglu, Chenfeng Cao, and Hsin-Yuan Huang  for their helpful discussions. We also thank Tom O'Leary for producing an open-source Qiskit version of the clean and dirty setup for the repository. This work was supported by the Quantum Science Center (QSC), a National Quantum Information Science Research Center of the U.S. Department of Energy (DOE). DB was supported by the U.S. DOE through a quantum computing program sponsored by the Los Alamos National Laboratory (LANL) Information Science \& Technology Institute and by the European Union’s Horizon 2020 research and innovation programme under the Marie Sk\l{}odowska-Curie grant agreement No. 955479. SW was supported by the UKRI EPSRC grant No.~EP/T001062/1 the Samsung GRP grant. MHG and LC were initially supported by the U.S. DOE, Office of Science, Office of Advanced Scientific Computing Research, under the Quantum Computing Application Teams~program. Piotr C. acknowledges initial support from the Laboratory Directed Research and Development (LDRD) program of Los Alamos National Laboratory (LANL) under project number 20190659PRD4 with subsequent support by  the National Science Centre (NCN), Poland under project 2019/35/B/ST3/01028. MC was supported by the LDRD program of LANL under project number 20210116DR. PJC also acknowledges initial support from the LANL ASC Beyond Moore's Law Project.

\section{Code availability}
The scripts and data used to create all the plots and to recreate some of the data can be found in Appendix~\ref{app:repository}.

\bibliographystyle{quantum}
\bibliography{main}

\clearpage
\newpage

\onecolumngrid
\appendix
\setcounter{section}{0}
\makeatletter
\@addtoreset{proposition}{section}
\makeatother

\section{Theoretical results}
\label{app:th}

\subsection{Preliminaries}

\begin{definition}[Depolarizing noise]
The action of a single-qubit depolarizing noise channel $\DC_p$ with error probability $p$ on a single-qubit quantum state $\rho$ can be written as
\begin{align}
    \mathcal{D}_p(\rho)     &= (1-p)\rho + p\frac{\id}{2}\,,\label{eq:depol}
\end{align}
where $\id$ is the $2\times 2$ identity matrix.
\end{definition}

\begin{definition}[Pauli strings]\label{def:Pstrings}
We define the set of non-identity Pauli tensor product strings $\mathrm{P}_n$ as
\begin{equation}
    \mathrm{P}_n = \{\id, X, Y, Z\}^{\otimes n}/\{\id^{\otimes n}\}\,.
\end{equation}
\end{definition}

\emph{Action of local depolarizing noise on Pauli strings.} We note that for any $\sigma \in \mathrm{P}_{n}$ we have
\begin{align}\label{eq:noise-on-Pauli}
    \DC_p^{\otimes n}(\sigma) = q_{\sigma} \sigma\,, 
\end{align}
where $|q_{\sigma}|\leq 1-p$.

\subsection{Useful Lemmas}

\begin{lemma}[Connecting purity to Hilbert-Schmidt norm]\label{lem:purity}
For any given state $\rho$, the purity $P(\rho)=\Tr[\rho^2]$ satisfies 
\begin{equation}
    P(\rho) - \frac{1}{2^n} = \Big\|\rho-\frac{\id}{2^n}\Big\|^2_2\,,
\end{equation}
where $\|.\|_2$ denotes the Schatten 2-norm.
\end{lemma}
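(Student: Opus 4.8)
The plan is to verify the identity by a direct expansion of the right-hand side, using only the definition of the Schatten $2$-norm and the linearity of the trace. First I would observe that $M := \rho - \id/2^n$ is Hermitian, since $\rho$ is a density operator and $\id/2^n$ is Hermitian; consequently $\|M\|_2^2 = \Tr[M^\dagger M] = \Tr[M^2]$, with no absolute value needed.

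Next I would expand the square and distribute the trace:
\begin{equation}
    \Big\|\rho - \frac{\id}{2^n}\Big\|_2^2 = \Tr\!\Big[\Big(\rho - \frac{\id}{2^n}\Big)^2\Big] = \Tr[\rho^2] - \frac{2}{2^n}\Tr[\rho] + \frac{1}{2^{2n}}\Tr[\id]\,.
\end{equation}
Then I would insert the normalization $\Tr[\rho]=1$ and the dimension count $\Tr[\id]=2^n$ for the identity on $n$ qubits, which collapses the last two terms to $-2/2^n + 1/2^n = -1/2^n$, leaving $\Tr[\rho^2] - 1/2^n = P(\rho) - 1/2^n$, exactly the claimed equality.

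There is no real obstacle here: the statement is a one-line computation. The only point that merits a moment's care is justifying that the $2$-norm squared equals $\Tr[M^2]$ rather than requiring an absolute value or an operator square root, which is precisely where Hermiticity of $\rho - \id/2^n$ is used; after that the rest is elementary trace algebra.
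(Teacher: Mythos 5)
Your proof is correct and follows essentially the same route as the paper: both reduce the identity to the expansion of $\Tr[(\rho-\id/2^n)^2]$ together with $\Tr[\rho]=1$ and $\Tr[\id]=2^n$, the paper merely running the computation in the reverse direction (starting from $\Tr[\rho^2]$ and adding/subtracting the cross terms). Your explicit remark that Hermiticity of $\rho-\id/2^n$ justifies $\|M\|_2^2=\Tr[M^2]$ is a fine, if routine, addition.
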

\begin{proof}
We have 
\begin{align}
    \Tr[\rho^2]
    &= \Tr\left[ (\rho - \frac{\id}{2^n})^2 + \frac{2}{2^n}\rho - \frac{1}{2^n}\frac{\id}{2^n} \right] \\
    &= \Tr\left[ (\rho - \frac{\id}{2^n})^2\right] + \frac{1}{2^n} \\
    &= \Big\|\rho-\frac{\id}{2^n}\Big\|^2_2 + \frac{1}{2^n}\,,
\end{align}
where in the first line we have added and subtracted $ \frac{2}{2^n}\rho - \frac{1}{2^n}\frac{\id}{2^n}$, in the second line we have explicitly computed the trace for the last two terms, and in the final line we have used the definition of the Schatten 2-norm.
\end{proof}

\begin{lemma}(2-norm of commutators.) \label{lem:commutator}
   Let $X$ and $Y$ be complex matrices. Let $\|\cdot\|_2$ and $\|\cdot\|_{\infty}$ denote the Schatten 2- and $\infty$-norms respectively. We have
    \begin{equation}
        \big\|[X,Y]\big\|_2 \leq 2 \,\big\|X\big\|_2 \big\|Y\big\|_\infty\,.
    \end{equation}  
\end{lemma}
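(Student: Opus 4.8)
The plan is to expand the commutator and reduce the bound to submultiplicativity of the Schatten 2-norm against the operator (Schatten $\infty$-) norm. Writing $[X,Y]=XY-YX$, the triangle inequality for $\|\cdot\|_2$ gives $\|[X,Y]\|_2 \leq \|XY\|_2 + \|YX\|_2$, so it suffices to establish the two one-sided estimates $\|XY\|_2 \leq \|X\|_2\|Y\|_\infty$ and $\|YX\|_2 \leq \|X\|_2\|Y\|_\infty$; summing them yields the factor of $2$.

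For the first estimate I would compute, using cyclicity of the trace, $\|XY\|_2^2 = \Tr[Y^\dagger X^\dagger X Y] = \Tr[(X^\dagger X)(Y Y^\dagger)]$. Since $M := X^\dagger X$ is positive semidefinite, one has $\Tr[M N] \leq \|N\|_\infty \Tr[M]$ for any Hermitian $N$: writing a spectral decomposition $M = \sum_i \lambda_i P_i$ with $\lambda_i \geq 0$ and rank-one orthogonal projectors $P_i$, each term satisfies $\lambda_i \Tr[P_i N] \leq \lambda_i \|N\|_\infty = \|N\|_\infty \Tr[\lambda_i P_i]$. Applying this with $N = YY^\dagger$ and using $\|YY^\dagger\|_\infty = \|Y\|_\infty^2$ and $\Tr[X^\dagger X] = \|X\|_2^2$ gives $\|XY\|_2^2 \leq \|Y\|_\infty^2 \|X\|_2^2$; taking square roots gives $\|XY\|_2 \leq \|X\|_2\|Y\|_\infty$. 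The second estimate follows immediately because $\|YX\|_2 = \|(YX)^\dagger\|_2 = \|X^\dagger Y^\dagger\|_2$, and the first estimate applied to the pair $(X^\dagger, Y^\dagger)$ bounds this by $\|X^\dagger\|_2\|Y^\dagger\|_\infty = \|X\|_2\|Y\|_\infty$.

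Combining the two estimates gives $\|[X,Y]\|_2 \leq \|X\|_2\|Y\|_\infty + \|X\|_2\|Y\|_\infty = 2\,\|X\|_2\|Y\|_\infty$, as claimed. There is no substantive obstacle here; the only point that warrants a line of justification is the inequality $\Tr[MN]\leq\|N\|_\infty\Tr[M]$ for positive semidefinite $M$, which is the standard ingredient behind Hölder's inequality for Schatten norms and could equally well be invoked by citation. An alternative route — diagonalizing $Y$ by singular value decomposition $Y = U\Sigma V^\dagger$ and bounding $\|XU\Sigma V^\dagger\|_2 \leq \|\Sigma\|_\infty \|XU\|_2 = \|Y\|_\infty\|X\|_2$ using unitary invariance of $\|\cdot\|_2$ — gives the same conclusion and may be preferred if one wishes to avoid the trace inequality altogether.
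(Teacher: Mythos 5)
Your proof is correct and follows essentially the same route as the paper: the triangle inequality $\|[X,Y]\|_2 \leq \|XY\|_2 + \|YX\|_2$ followed by the H\"older-type estimate $\|XY\|_2 \leq \|X\|_2\|Y\|_\infty$ applied to each term. The only difference is that the paper simply cites the tracial matrix H\"older inequality for that second step, whereas you derive it from scratch via $\Tr[MN] \leq \|N\|_\infty \Tr[M]$, which is a perfectly sound (if slightly longer) way to supply the same ingredient.
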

\begin{proof}
Writing out the commutator explicitly, we have 
\begin{align}
    \big\|XY - YX\big\|_2 &\leq \big\|XY \big\|_2 + \big\|YX \big\|_2\\
    &\leq \big\|X \big\|_2 \big\|Y \big\|_{\infty} + \big\|Y \big\|_{\infty} \big\|X \big\|_2\,,
\end{align}
where the first inequality is due to the triangle inequality, and the second inequality is an application of the tracial matrix H\"older inequality \cite{baumgartner2011inequality}.
\end{proof}

\subsection{Model 1 - CNOT ladders} \label{sec:ladder}

In this section we consider a circuit consisting of CNOT ladders as presented in Figure \ref{fig:ladder_heisenberg}(a). After each ladder, a fraction $n_d/n$ of the qubits experiences single-qubit depolarizing noise $\DC_p$. 

We first provide a brief roadmap of our proof methods. In order to analyze such circuits, we consider the Heisenberg picture and bound the output purity of the reverse circuit (Figure \ref{fig:ladder_heisenberg}(b)). As we only consider computational basis measurements, we only need to consider Pauli tensor product strings of the form $\{\id, Z\}^{\otimes n}$. We will adopt the notation $A\,...\,B$ to represent a tensor product string $A \otimes ... \otimes B$. Due to Eq.~\eqref{eq:noise-on-Pauli}, local depolarizing noise on $n_d$ qubits will decrease the amplitude of certain strings in $\{\id, Z\}^{\otimes n}$. Specifically, wherever a string has a "$Z$" on a noisy register, that string's amplitude will decay by a factor $(1-p)$ for each instance of local depolarizing noise $\DC_p$. The goal of our analysis will be to observe patterns in Pauli strings as they are acted on by CNOT ladders and to keep track of the "best case" Pauli string that is least affected by noise. This corresponds to the Pauli string that has the fewest number of "$Z$"s appearing on noisy registers. This can then be used to establish upper bounds on expectation value concentration.

\begin{definition}[CNOT ladder]
We denote the mapping corresponding to one CNOT ladder (consisting of CNOT gates between all pairs of adjacent qubits in a 1D chain of $n$ qubits as in Figure \ref{fig:ladder_heisenberg}(b)) as $\textrm{\emph{CNOT}}_n$.
\end{definition}

\begin{definition}[CNOT ladder + noise]\label{def:cnot-noise}
We denote the channel corresponding to a CNOT ladder plus depolarizing noise on the first $n_d$ qubits as $\WC^{(n_d)\dag} :=  \textrm{\emph{CNOT}}_n\circ(\DC_p^{\otimes n_d} \otimes \IC^{n-n_d})$. Further, we denote the channel corresponding to $L$ instances of this channel (i.e. the full circuit in Figure \ref{fig:ladder_heisenberg}(b)) as $\WC^{(n_d)\dag}_L := (\WC^{(n_d)\dag})^L$.
\end{definition}

\begin{figure}[h]
    \begin{center}
	\includegraphics[width= .8 \columnwidth]{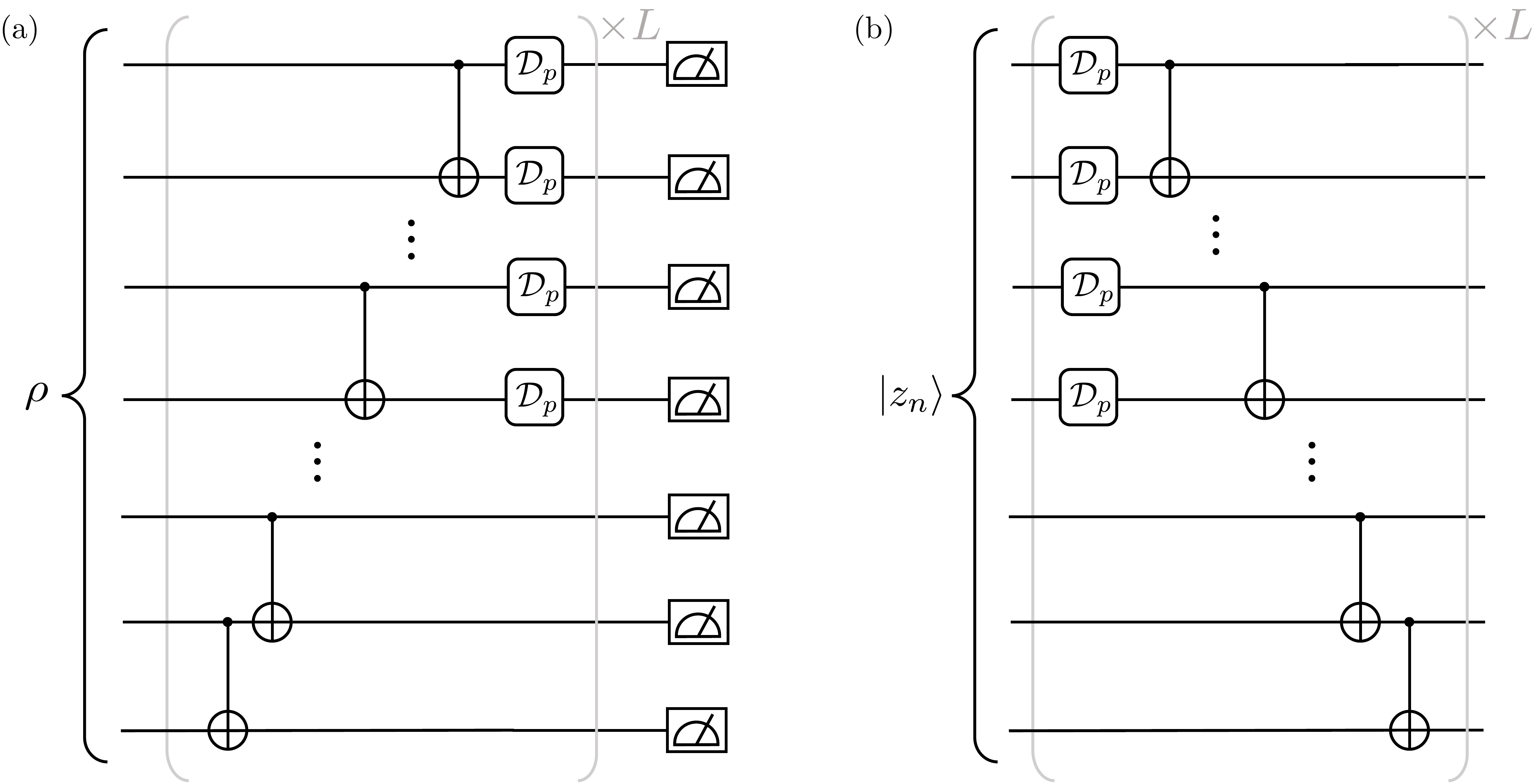}
    \end{center}
	\caption{\textbf{} (a) Our goal is to consider a circuit with a repeating ansatz of $L$ layers of a CNOT ladder. After each layer, the qubits on the first $n_d$ registers go through depolarizing noise channel $\DC_p$. At the end of the circuit, we measure in the computational basis. We denote $\rho$ as the input state. (b) In order to analyze such circuits, we consider the Heisenberg picture and bound the output purity of the reverse circuit, with computational basis input states.
	}\label{fig:ladder_heisenberg}
\end{figure}

As a preliminary example, we consider the mapping of the string $Z\otimes\id\otimes Z\otimes Z$ under such CNOT ladders:

\begin{align}\label{eq:examplemapping}
    Z \id ZZ \xrightarrow{\textrm{CNOT}_4} ZZ\id Z \xrightarrow{\textrm{CNOT}_4} \id ZZZ \xrightarrow{\textrm{CNOT}_4} Z\id \id Z \xrightarrow{\textrm{CNOT}_4} Z \id ZZ \,.
\end{align}

We note that after 4 iterations, the $Z\id ZZ$ is mapped back to itself. We will refer to this as a string with period 4.

\begin{definition}[Periodicity of a string]
We define the periodicity of a string $X$ under a map $\MC$ as the smallest value $p$ such that $X\xrightarrow{\MC^p}X$, where $\MC^p$ denotes  $p$ applications of the map $\MC$.
\end{definition}

We make two observations about the chain of mappings in \eqref{eq:examplemapping}. First, the string is mapped only to other strings in the set $\{\id, Z\}^{\otimes n}$. Second, the mapping is cyclic. Both observations can be noted to be a direct consequence of the fact that all strings from $\{\id, Z\}^{\otimes n}$ are mapped bijectively to another element of the set under $\textrm{CNOT}_n$, which we will now formally show.

\begin{lemma}
The set $\{\id, Z\}^{\otimes n}$ is mapped to itself under $\emph{\textrm{CNOT}}_n$ bijectively.
\end{lemma}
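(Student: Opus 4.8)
The plan is to reduce this to the standard conjugation rules for CNOT on Pauli operators, which turn the claim into a statement about an invertible $\mathbb{F}_2$-linear map. Recall that a single CNOT with control qubit $c$ and target qubit $t$ conjugates single-qubit Paulis as $Z_c \mapsto Z_c$, $Z_t \mapsto Z_c Z_t$, $X_c \mapsto X_c X_t$, $X_t \mapsto X_t$, each with coefficient $+1$. I would encode a string in $\{\id, Z\}^{\otimes n}$ by its support vector $v \in \mathbb{F}_2^n$, writing $Z^v := \prod_{i:\, v_i = 1} Z_i$. Conjugating $Z^v$ by the CNOT above again yields a product of $Z$'s and identities --- no $X$ or $Y$ factor and no nontrivial sign arises, since $Z_c$ and $Z_c Z_t$ commute and carry coefficient $+1$ --- and the support vector transforms as $v \mapsto v'$ with $v'_c = v_c \oplus v_t$ and all other coordinates unchanged. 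This elementary map is invertible over $\mathbb{F}_2$ (indeed an involution, matching the fact that CNOT is its own inverse).

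First I would observe that $\textrm{CNOT}_n$ is a fixed product of adjacent-pair CNOTs, so by composing the above rule it acts on diagonal strings as $Z^v \mapsto Z^{Mv}$ for a fixed $M \in \mathrm{GL}_n(\mathbb{F}_2)$, namely the product of the elementary matrices of the individual CNOTs. In particular the image of $\{\id, Z\}^{\otimes n}$ under $\textrm{CNOT}_n$ is contained in $\{\id, Z\}^{\otimes n}$. Bijectivity onto this set is then immediate from invertibility of $M$: since $v \mapsto Mv$ permutes the $2^n$ vectors of $\mathbb{F}_2^n$, the map $Z^v \mapsto Z^{Mv}$ permutes the $2^n$ elements of $\{\id, Z\}^{\otimes n}$. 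Equivalently, one can argue abstractly that conjugation by any unitary is injective on operators, and an injective self-map of a finite set is a bijection.

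There is essentially no obstacle here; the only point needing a moment's care is checking that the conjugation introduces no sign on diagonal strings, so the image is genuinely an element of $\{\id, Z\}^{\otimes n}$ rather than its negative. I would, however, take care to record the transformation in the explicit form $v \mapsto Mv$ with $M \in \mathrm{GL}_n(\mathbb{F}_2)$, since that linear structure --- and in particular the finiteness of $\mathrm{GL}_n(\mathbb{F}_2)$ --- is precisely what subsequently yields the cyclic behaviour and well-defined periodicity of strings used in the rest of the analysis of Model 1.
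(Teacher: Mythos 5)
Your proof is correct and takes essentially the same route as the paper: both reduce the claim to the single-CNOT action on diagonal strings (your conjugation rules are exactly the paper's explicit four-case check), compose over the $n-1$ CNOTs of the ladder, and obtain bijectivity from invertibility. Your $\mathbb{F}_2$-linear packaging of the action as $Z^v \mapsto Z^{Mv}$ with $M \in \mathrm{GL}_n(\mathbb{F}_2)$ is just a more structured restatement of this, and your closing abstract argument (conjugation by a unitary is injective, hence bijective on a finite set) is precisely the paper's concluding step.
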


\begin{proof}
We can check the mappings for $n=2$ and a single CNOT gate:
\begin{align}
    &\id \id \xrightarrow{\textrm{CNOT}} \id \id \,,\\
    &\id Z \xrightarrow{\textrm{CNOT}} Z Z \left( \xrightarrow{\textrm{CNOT}} \id Z\right)\,,\\
    &Z Z \xrightarrow{\textrm{CNOT}} \id Z\,,\\
    &Z \id \xrightarrow{\textrm{CNOT}} Z \id\,.
\end{align}
As $\textrm{CNOT}_n$ is composed of $n-1$ CNOT maps, it follows that strings in $\{\id, Z\}^{\otimes n}$ are mapped to strings in $\{\id, Z\}^{\otimes n}$ by $\textrm{CNOT}_n$. Finally, the map is unitary and so has an inverse, and thus the map is bijective.
\end{proof}

Returning to our former example, we observe the elements of the cyclic mapping of $Z\id ZZ$ can be written as a slanted list

\begin{center}
\Longstack{
Z\x \id\x Z\x Z\x \hphantom{\id} \hphantom{\id}\x \x \hphantom{\id}\x \hphantom{\id}\\
Z\x Z\x \id\x Z\x \hphantom{\id}\x  \hphantom{\id}\x \hphantom{\id}\\
\id\x Z\x Z\x Z\x \hphantom{\id}\x  \hphantom{\id}\\
\underline{\vphantom{,} Z\x \id\x \id\x Z\x} \hphantom{\id} \\
Z\x \id\x Z\x Z}
\vline\quad
\Longstack[l]{
r=0\\
r=1\\
r=2\\
r=3\\
r=4
}
\end{center}
where we denote the row of the list with $r$, and the line marks the end of a cycle. We now make a key remark.

\begin{definition}[Inverted Binary Pascal Triangle]
We define the inverted binary Pascal triangle as the triangle of rows of binary strings, where the $L$ elements in the $k$th row are uniquely generated by $L+1$ elements of the $(k-1)$th row, with each element constructed by summing the element directly above and to the left, with the element directly above and to the right. An inverted binary Pascal triangle with a length $n$ string in the $0$th row has in total $n$ rows, ending in a row with a length $1$ string.
\end{definition}

\begin{remark}
The list of cyclic mappings of any element $L \in \{\id, Z\}^{\otimes n}$ under $\emph{\textrm{CNOT}}_n$ has a one-to-one correspondence with the inverted binary Pascal triangle whose string in row $0$ is generated by taking $L \otimes \id^{\otimes n-1}$ and applying the map $Z\leftrightarrow1, \id\leftrightarrow0$.
\end{remark}

As an example for the above remark, we can observe this correspondence for the string $Z\id ZZ$ (we truncate the triangle at $r=3$).

\begin{center}
\Longstack{
Z\x \id\x Z\x Z\x \id\x \id\x \id\\
Z\x Z\x \id\x Z\x \id\x \id\\
\id\x Z\x Z\x Z\x \id\\
Z\x \id\x \id\x Z\\
}
\hspace{10pt} \Longstack{\rightarrow \\ \\}\hspace{10pt}
\Longstack{
1\x 0\x 1\x 1\x 0\x 0\x 0\\
1\x 1\x 0\x 1\x 0\x 0\\
0\x 1\x 1\x 1\x 0\\
1\x 0\x 0\x 1\\
}
\end{center}

We now prove some general statements about inverted binary Pascal triangles.

\begin{lemma}\label{lem:binary-addition}
Given an inverted binary Pascal triangle with length $n=2^x+1;\; x \in \mathbb{Z}$ of the form $A\,...\,B$, the single entry in row $n-1$ has value $A \oplus B$. That is, all triangles of this size take the form
\begin{center}
\Longstack{
A\x ... \x B\\
...\x ...\\
...\\
A \oplus B
}
\hspace{20pt}
\vline\quad
\Longstack[l]{
r=0\\
\;\,\ ...\\
\;\,\ ...\\
r=n-1
}
\end{center}
\end{lemma}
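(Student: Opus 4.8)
The plan is to prove Lemma~\ref{lem:binary-addition} by tracking how a single bit propagates down the inverted binary Pascal triangle and exploiting the special size $n = 2^x+1$. The starting observation is that the map taking a row to the next is linear over $\mathbb{F}_2$: if row $r$ is $(a_0, a_1, \dots, a_{n-1-r})$, then row $r+1$ has entries $a_i \oplus a_{i+1}$. Iterating this $k$ times, the $j$-th entry of row $k$ is $\bigoplus_{i=0}^{k} \binom{k}{i} a_{j+i} \pmod 2$, i.e. the entries of row $k$ are obtained by convolving the original row with the binary expansion of $(1+t)^k$ over $\mathbb{F}_2$. So the single entry of row $n-1$ equals $\bigoplus_{i=0}^{n-1} \left[\binom{n-1}{i} \bmod 2\right] a_i$, where $a_0 = A$, $a_{n-1} = B$, and $a_1, \dots, a_{n-2}$ are the ``$\dots$'' entries of row $0$.

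The key step is then the number-theoretic fact that when $n-1 = 2^x$, the binomial coefficients $\binom{2^x}{i}$ are even for all $0 < i < 2^x$ and odd for $i = 0$ and $i = 2^x$. This is the standard consequence of Kummer's theorem (or Lucas' theorem): $\binom{2^x}{i}$ is odd iff the binary digits of $i$ are all $\le$ the corresponding digits of $2^x$, and since $2^x$ has a single $1$-bit, the only such $i$ are $0$ and $2^x$. Plugging this into the convolution formula, every interior term drops out and we are left with exactly $a_0 \oplus a_{n-1} = A \oplus B$, which is the claim. I would present Lucas'/Kummer's theorem as a cited or well-known fact rather than reproving it.

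An alternative, more self-contained route avoids citing Lucas' theorem: one shows by induction on $x$ that the inverted binary Pascal triangle of size $2^x + 1$ sends $A\,\dots\,B$ in row $0$ to $A \oplus B$ in its bottom row. The base case $x=0$ (size $2$) is the trivial single step $A\,B \mapsto A\oplus B$; for the inductive step, one splits the height-$2^x$ descent into two height-$2^{x-1}$ descents. After the first $2^{x-1}$ steps, by the inductive hypothesis the leftmost entry of the intermediate row is $a_0 \oplus a_{2^{x-1}}$ and the rightmost is $a_{2^{x-1}} \oplus a_{2^x}$ (these are the only two entries we need, since the second descent is again over a triangle of size $2^{x-1}+1$); applying the inductive hypothesis a second time to this intermediate row gives $(a_0 \oplus a_{2^{x-1}}) \oplus (a_{2^{x-1}} \oplus a_{2^x}) = a_0 \oplus a_{2^x} = A \oplus B$, using $a_{2^{x-1}} \oplus a_{2^{x-1}} = 0$ over $\mathbb{F}_2$.

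The main obstacle is purely expository rather than mathematical: one must be careful that the inductive hypothesis, as stated, only tells us the value of the \emph{single bottom entry} of a size-$2^x+1$ triangle, whereas in the inductive step we need to know \emph{two} specified entries (leftmost and rightmost) of an intermediate row that sits partway down a larger triangle. The clean fix is to note that the leftmost (resp.\ rightmost) width-$(2^{x-1}+1)$ window of the size-$2^x+1$ triangle is itself a genuine inverted binary Pascal triangle of size $2^{x-1}+1$ with top row $a_0\,\dots\,a_{2^{x-1}}$ (resp.\ $a_{2^{x-1}}\,\dots\,a_{2^x}$), so the inductive hypothesis applies directly to each. Once that bookkeeping is set up, the rest is a one-line $\mathbb{F}_2$ computation. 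I would lead with the linear-algebra/convolution argument as the main proof and mention the inductive argument as a remark, or vice versa depending on how much the later lemmas rely on the explicit window structure.
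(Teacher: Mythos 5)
Your proposal is correct, and in fact contains the paper's own argument as its ``alternative'' route: the paper proves the lemma by induction exactly via the overlapping-window structure you describe, phrased bottom-up as merging two length-$k$ strings $A\,...\,B$ and $B\,...\,C$ (sharing the middle symbol) into a length-$(2k-1)$ string, noting that row $k-1$ of the merged triangle begins with $A\oplus B$ and ends with $B\oplus C$, applying the hypothesis once more to that intermediate row to get $(A\oplus B)\oplus(B\oplus C)=A\oplus C$, and then observing that the admissible lengths satisfy $a(m)=2a(m-1)-1$, $a(0)=2$, i.e.\ $a(m)=2^m+1$; your top-down splitting of a size-$2^x+1$ triangle into two size-$2^{x-1}+1$ windows sharing the middle entry is the same induction, and your expository caveat (that each window is itself a genuine inverted binary Pascal triangle, so the hypothesis applies to it directly) is exactly the bookkeeping the paper relies on implicitly. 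Your preferred main route --- viewing the row map as $\mathbb{F}_2$-linear, so the $j$-th entry of row $k$ is $\bigoplus_{i}\binom{k}{i}a_{j+i}\bmod 2$, and then invoking Lucas'/Kummer's theorem to kill all interior coefficients when $k=2^x$ --- is a genuinely different and perfectly valid proof. It is shorter and strictly more informative (it gives every entry of every row in closed form, from which Corollary~\ref{cor:bitwise} also follows immediately), at the cost of importing a number-theoretic fact; the paper's elementary merge induction stays self-contained and keeps the explicit window/triangle picture that the subsequent lemmas (bit-wise addition and maximum cycle period) continue to reason with. Either presentation would support the rest of the appendix.
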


\begin{proof}
We prove the lemma by induction. The primal case of $n=2$ can be checked by explicitly writing the inverted Pascal triangles as
\begin{center}
\Longstack{
0\x 0\\
0
}\ ,\quad\quad
\Longstack{
0\x 1\\
1
}\ ,\quad\quad
\Longstack{
1\x 0\\
1
}\ ,\quad\quad
\Longstack{
1\x 1\\
0
}\ .
\end{center}
Now suppose that strings of some longer length $k$ also always satisfy this property. For instance, for length $k$ string $A\ ...\ B$, we have
\begin{center}
\Longstack{
A\x ... \x B\\
...\x ...\\
...\\
A \oplus B
}
\hspace{20pt}
\vline\quad
\Longstack[l]{
r=0\\
\;\,\ ...\\
\;\,\ ...\\
r=k-1
}
\end{center}
Further, we consider another length $k$ string $B\ ...\ C$ that satisfies this property. This implies that the merged length $2k-1$ string $A\ ...\ B\ ...\ C$ has an inverted Pascal triangle of the form 
\begin{center}
\Longstack{
A\y\z ... \y\z B \z\y ... \y\z C \\
...\y\z ...\y\z ... \y\z ...\\
...\z\y \hphantom{B} \y\z ...\\
A \oplus B\ ...\ B \oplus C\\
...\x ...\\
...\\
A \oplus C
}
\hspace{20pt}
\vline\quad
\Longstack[l]{
r=0\\
\;\,\ ...\\
\;\,\ ...\\
r=k-1\\
\;\,\ ...\\
\;\,\ ...\\
r=2k-2
}
\end{center}
By merging strings together inductively, we see that strings of a particular length satisfy this property. The lengths satisfy a sequence with values
\begin{equation}
    a(m) = 2a(m-1) -1\quad ;\ m\in \mathbb{Z}\,,\ a(0)=2\,.
\end{equation}
It can be verified that this sequence has values $a(m) = 2^m +1$, which completes the proof.
\end{proof}

\begin{corollary}[Bit-wise addition of strings]\label{cor:bitwise}
Given a string composed of adjacently joining two strings of length $n \in 2^x\ ;\ x \in \mathbb{Z}$, starting in row $0$, the first $n$ entries of the string in row $n$ is the binary bit-wise addition of both strings.
\end{corollary}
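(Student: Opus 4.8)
The plan is to reduce the claim to Lemma~\ref{lem:binary-addition} by exploiting the locality of the triangle-generating rule. Write $n = 2^x$, and denote the length-$2n$ string in row $0$ by $c_1 c_2 \dots c_{2n}$, where $a := c_1 \dots c_n$ is the first constituent string and $b_i := c_{n+i}$ for $i = 1,\dots,n$ is the second. We must show that the entries in positions $1,\dots,n$ of row $n$ are $a_1\oplus b_1,\,\dots,\,a_n\oplus b_n$.

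First I would record the locality fact: in an inverted binary Pascal triangle, the entry in row $k$ at position $i$ is produced by successively summing adjacent pairs starting from row $0$, so it is a function only of the row-$0$ entries $c_i, c_{i+1}, \dots, c_{i+k}$. Equivalently, it is precisely the unique bottom element, after $k$ steps, of the inverted binary Pascal triangle seeded in row $0$ by the length-$(k+1)$ window $(c_i, c_{i+1}, \dots, c_{i+k})$. This is immediate by induction on $k$ from the defining rule (each element is the sum of the element above-left and the element above-right).

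Next, fix $i \in \{1,\dots,n\}$ and apply the locality fact with $k = n$. The relevant window is $(c_i, c_{i+1}, \dots, c_{i+n})$, which has length $n+1 = 2^x + 1$ and is well-defined since $i + n \leq 2n$. The triangle seeded by this window is of exactly the form treated in Lemma~\ref{lem:binary-addition} (length $2^x+1$, first entry $c_i$, last entry $c_{i+n}$), so its unique row-$n$ entry equals $c_i \oplus c_{i+n}$. Hence the entry in row $n$, position $i$, of the full triangle equals $c_i \oplus c_{i+n} = a_i \oplus b_i$. Letting $i$ range over $\{1,\dots,n\}$ yields that the first $n$ entries of row $n$ are the bit-wise addition of the two constituent strings, as claimed.

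I expect the only delicate point to be the index bookkeeping: verifying that the length-$(2^x+1)$ window always fits inside the length-$2n$ row for $i \leq n$, that descending $n$ steps from row $0$ lands on the single bottom entry of a length-$(2^x+1)$ triangle (row $n-1$ in the indexing of Lemma~\ref{lem:binary-addition}, with that lemma's ``$n$'' equal to our $n+1$), and that the seeded sub-triangle is genuinely of the shape to which Lemma~\ref{lem:binary-addition} applies. Beyond this careful matching of conventions there is no real obstacle.
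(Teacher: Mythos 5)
Your proposal is correct and takes essentially the same route as the paper: both reduce the claim to Lemma~\ref{lem:binary-addition} applied to the length-$(2^x+1)$ sub-triangle seeded by each position together with its counterpart $n$ places to the right. The paper presents this pictorially by drawing the triangle for $A\,\dots\,B\,C\,\dots\,D$, whereas you make the locality of the Pascal rule and the window/index bookkeeping explicit, which is a harmless (indeed slightly more rigorous) elaboration of the same argument.
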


\begin{proof}
Using Lemma \ref{lem:binary-addition}, this can be seen by considering two strings $A\ ...\ B$, $C\ ...\ D$, each of length $n$, and by drawing out the triangle corresponding to the string $A\ ...\ B\ C\ ...\ D$:
\begin{center}
\Longstack{
A\y\z ... \y\z BC \z\y ... \y\z D \\
...\y\z ...\,\ \hphantom{BC}\,\ ... \y\z ...\\
...\z\y \hphantom{BC} \y\z ...\\
A \oplus C\z ...\z B \oplus D
}
\hspace{20pt}
\vline\quad
\Longstack[l]{
r=0\\
\;\,\ ...\\
\;\,\ ...\\
r=n
}
\end{center}
\end{proof}

\begin{lemma}[Maximum cycle period]\label{prop:maxcycle}
Under the CNOT ladder $\mathrm{CNOT}_n$, any length $n$ string has a cycle of periodicity $p = 2^x ;\ x \in \mathbb{Z}$, where
\begin{equation}
    x \leq \ceil{\log_2 n}\,.
\end{equation}
That is, the periodicity is an integer power of 2, and at most the closest power of 2 to $n$ that is greater than $n$.
\end{lemma}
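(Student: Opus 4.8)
\textbf{Proof proposal for Lemma \ref{prop:maxcycle}.}

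The plan is to establish the two claims separately: first, that the periodicity is always a power of $2$, and second, that it is bounded above by $2^{\ceil{\log_2 n}}$. For the first part I would exploit the correspondence, already recorded in the Remark above, between the cyclic list of mappings of a string $L \in \{\id, Z\}^{\otimes n}$ under $\textrm{CNOT}_n$ and the inverted binary Pascal triangle whose row $0$ is $L \otimes \id^{\otimes n-1}$ under the substitution $Z\leftrightarrow 1,\ \id\leftrightarrow 0$. Concretely, to find the period I would track where the string returns to itself, which (padding with enough $\id$'s on the right) amounts to asking when the window of $n$ consecutive bits repeats. The key structural tool is Corollary \ref{cor:bitwise}: for lengths that are powers of $2$, descending $2^k$ rows in the triangle performs bit-wise XOR of two adjacent blocks of length $2^k$. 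Iterating this, descending $2^k$ rows maps a bitstring $b$ (extended by zeros) to a linear function of $b$ over $\mathbb{F}_2$; and since the triangle construction is exactly Pascal's rule mod $2$, descending $2^k$ rows is the operation $b \mapsto b \oplus (b \text{ shifted})$ composed with itself, which by Lucas' theorem / Kummer's theorem has the effect that after $2^K$ steps (for $2^K \geq n$) the string of the first $n$ bits returns to $b$ itself. So the period divides $2^K$ and is therefore a power of $2$.

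The cleanest way to nail this down, and the step I expect to be the main obstacle, is to argue carefully that \emph{one full cycle} of the CNOT-ladder map (i.e. one application of $\textrm{CNOT}_n$) corresponds to \emph{exactly one downward row} in the inverted binary Pascal triangle — so that periodicity-under-$\textrm{CNOT}_n$ equals "number of rows until the length-$n$ window recurs." Given that identification, I would pad the string on the right with $\id$'s out to the next power of two $N = 2^{\ceil{\log_2 n}}$ and apply Corollary \ref{cor:bitwise} with block length $N$: after $N$ rows the first $N$ entries are the bit-wise XOR of the original length-$N$ block with itself (since the second block is all zero after suitable padding), hence they return to the original block. Strictly, one has to be a little careful that the padding doesn't interfere — the right-hand zeros only ever receive contributions from entries that were themselves zero until the "front" of the nonzero region reaches them, and by choosing $N$ large enough the window of interest closes before that matters. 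This gives $X \leq \ceil{\log_2 n}$, and combined with the first part (period is a power of $2$) yields that the period is $2^X$ with $X \leq \ceil{\log_2 n}$.

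Alternatively — and this may be the slicker route avoiding the subtleties of finite-window padding — I would phrase everything in terms of the polynomial ring $\mathbb{F}_2[x]$. Identify a string $\id/Z$-string with a polynomial, note that $\textrm{CNOT}_n$ acts on $\{\id,Z\}^{\otimes n}$ as multiplication by $(1+x)$ modulo some fixed polynomial of degree $n$ (this is precisely the Pascal-triangle-mod-2 structure made algebraic), and then observe that $(1+x)^{2^K} = 1 + x^{2^K} \equiv 1 \pmod{x^n - \text{(lower)}}$ as soon as $2^K \geq n$, by the Frobenius/freshman's-dream identity in characteristic $2$. This immediately shows the order of the map divides $2^{\ceil{\log_2 n}}$, hence is a power of $2$ bounded by $2^{\ceil{\log_2 n}}$, proving both assertions at once. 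I would present whichever of these two arguments the surrounding text supports most directly; given that the paper has already developed the inverted-binary-Pascal-triangle machinery and Corollary \ref{cor:bitwise}, the combinatorial argument via Corollary \ref{cor:bitwise} is likely the intended one, with the main work being the bookkeeping of the zero-padding.
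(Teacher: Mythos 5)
Your main route is correct and is essentially the paper's own proof: the paper likewise pads the string out to length $2^{\ceil{\log_2 n}}$ (prepending arbitrary symbols on the left rather than appending identities on the right), applies Corollary~\ref{cor:bitwise} with an all-zero second block to conclude that the string recurs after $2^{\ceil{\log_2 n}}$ rows, and then argues exactly as you do that the period must divide $2^{\ceil{\log_2 n}}$ and hence is a power of two. Your worry about the zero padding is moot—under the ladder's update rule ($s_i \mapsto s_i \oplus s_{i+1}$ with the last entry fixed) influence propagates only toward lower-indexed positions, so the appended zeros remain zero for all times—and your alternative formulation as multiplication by $1+x$ in $\mathbb{F}_2[x]/(x^n)$ together with the Frobenius identity is an equally valid (arguably cleaner) argument, though not the one the paper uses.
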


\begin{proof}
Suppose we start with a length $n$ string $A\ ...\ B$. Now we can append arbitrary elements to the start of the string such that the total length is $2^{\ceil{\log_2 n}}$, and we denote this new string as $A'\; ...\; A\; ...\; B$. Due to Corollary \ref{cor:bitwise}, we have the following triangle

\begin{center}
\Longstack{
A'\; ...\; A\; ...\; B\x 0\x ...\x 0\ \ \\
...\x \x\z ...\x ...\x ...\\
...\x \x \:\,\ ...\x\, 0\\
A'\; ...\; A\; ...\; B
}
\hspace{20pt}
\vline\quad
\Longstack[l]{
r=0\\
\;\,\ ...\\
\;\,\ ...\\
r=2^{\ceil{\log_2 n}}
}
\end{center}
where $A'\; ...\; A\; ...\; B\x 0\x ...\x 0$ denotes the string with $2\cdot2^{\ceil{\log_2 n}}$ entries where we have appended $2^{\ceil{\log_2 n}}$ elements in "0". Thus, $A\ ...\ B$ reappears after $2^{\ceil{\log_2 n}}$ steps, implying it has a period of at most $2^{\ceil{\log_2 n}}$. As after $2^{\ceil{\log_2 n}}$ steps $A\ ...\ B$ must be mapped back to itself, the period must be a factor of $2^{\ceil{\log_2 n}}$, which only has prime factor $2$. Therefore, the period of $A\ ...\ B$ must itself be a power of 2.
\end{proof}

We now have the tools to understand how purity decays with circuit depth under $\mathrm{CNOT}_n$ with local depolarizing noise. We will consider the circuit in Figure \ref{fig:ladder_heisenberg}.

\begin{lemma}\label{lem:conc-ladder-purity}
For any computational basis state $\dya{z_n}$ with $z_n \in \{0,1\}^n$, we have
\begin{align}
    P\Big(\WC^{(n_d)\dag}_L(\dya{z_n})\Big) - \frac{1}{2^n} &\leq (1-p)^{2\Gamma_{n,n_d,L}}\left(P\big(\dya{z}\big) - \frac{1}{2^n}\right) = (1-p)^{2\Gamma_{n,n_d,L}}\,, \label{eq:conc-purity}
\end{align}
with
\begin{align}
    \Gamma_{n,n_d,L} =
    \begin{cases}
    \left\lfloor L\cdot \frac{n_d }{2^{\floor{\log_2 n}}} \right\rfloor\,, & \text{for}\ n_d = 1 \,,\\
    \left\lfloor L\cdot \frac{n_d }{2^{\ceil{\log_2 n}}} \right\rfloor\,, & \text{for}\ n_d > 1\,,
    \end{cases} 
\end{align}
where $P(\cdot)$ denotes the purity function, 
and $\floor{\cdot}$ and $\ceil{\cdot}$ denote the integer floor and ceiling functions respectively.
\end{lemma}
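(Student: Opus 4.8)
The plan is to work in the Heisenberg picture and track how a computational-basis measurement operator $\dya{z_n}$, expanded in the Pauli basis $\{\id,Z\}^{\otimes n}$, decays under the reverse circuit $\WC^{(n_d)\dag}_L$. By Lemma \ref{lem:purity} it suffices to bound $\|\WC^{(n_d)\dag}_L(\dya{z_n}) - \id/2^n\|_2^2$, so we may discard the identity component and write $\dya{z_n} - \id/2^n = \frac{1}{2^n}\sum_{\sigma} c_\sigma\, \sigma$, where the sum runs over nonidentity strings $\sigma \in \{\id,Z\}^{\otimes n}$ and each $c_\sigma \in \{\pm 1\}$. The key structural facts are already in hand: $\textrm{CNOT}_n$ permutes $\{\id,Z\}^{\otimes n}$ bijectively (so it does not mix or attenuate coefficients), while each layer of $\DC_p^{\otimes n_d}\otimes\IC^{n-n_d}$ multiplies the coefficient of a string by a factor of modulus at most $(1-p)$ for \emph{each} $Z$ the string currently has on one of the first $n_d$ registers (Eq.~\eqref{eq:noise-on-Pauli} applied register-wise). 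So for a fixed starting string $\sigma$, its coefficient after $L$ layers is attenuated by $(1-p)^{N_L(\sigma)}$, where $N_L(\sigma) = \sum_{\ell=0}^{L-1} (\text{number of }Z\text{'s on the first }n_d\text{ registers after }\ell\text{ CNOT ladders})$. Because the $\sigma$'s are permuted bijectively, the images $\WC^{(n_d)\dag}_L(\sigma)$ remain orthogonal with coefficients of modulus $(1-p)^{N_L(\sigma)}$, hence $\|\WC^{(n_d)\dag}_L(\dya{z_n}) - \id/2^n\|_2^2 = \frac{1}{2^n}\sum_\sigma (1-p)^{2N_L(\sigma)} \leq \max_\sigma (1-p)^{2N_L(\sigma)} \cdot \frac{1}{2^n}\sum_\sigma 1$; combined with $\|\dya{z_n}-\id/2^n\|_2^2 = P(\dya{z_n}) - 1/2^n = 1 - 1/2^n$, everything reduces to the claim $\min_\sigma N_L(\sigma) \geq \Gamma_{n,n_d,L}$, i.e.\ \emph{every} nonidentity string in $\{\id,Z\}^{\otimes n}$ picks up at least $\Gamma_{n,n_d,L}$ noise hits over $L$ layers.

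To prove this lower bound on $\min_\sigma N_L(\sigma)$ I would use the cyclic/periodic structure established in the preceding lemmas. By Lemma \ref{prop:maxcycle}, under $\textrm{CNOT}_n$ every length-$n$ string has period a power of two, at most $2^{\ceil{\log_2 n}}$; call a common period $T := 2^{\ceil{\log_2 n}}$ (for $n_d>1$) or $T := 2^{\floor{\log_2 n}}$ in the $n_d=1$ case — the distinction comes from a sharper accounting when only one register is noisy, which I address below. Within any window of $T$ consecutive layers, the string $\sigma$ visits all $T$ elements of its orbit (or cycles through a divisor-length sub-orbit the corresponding number of times). The crucial combinatorial claim is: over one full period $T$, the total number of $Z$'s appearing on the first $n_d$ registers, summed over the $T$ steps, is at least $n_d$. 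This is where the binary-Pascal-triangle picture does the work: in the slanted list of the orbit, column $j$ (for $j$ among the first $n_d$ columns) cannot be identically $0$ over a full period unless $\sigma$ is the all-identity string — indeed if a column were all zeros across a full cycle one could argue, via the bijectivity and the additive structure in Corollary \ref{cor:bitwise}/Lemma \ref{lem:binary-addition}, that the string is degenerate. More precisely, each of the first $n_d$ columns contributes at least one $Z$ per period (for $n_d>1$), giving $\geq n_d$ hits per block of $T$ layers; over $L$ layers there are $\floor{L/\text{(number of layers per unit of }n_d\text{ hits)}}$ such contributions, which is exactly $\floor{L n_d / T}= \Gamma_{n,n_d,L}$.

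The main obstacle — and the step I would spend the most care on — is precisely this per-column, per-period counting, and in particular why the exponent is $2^{\floor{\log_2 n}}$ rather than $2^{\ceil{\log_2 n}}$ when $n_d = 1$. For $n_d=1$ only the single leftmost column matters, and one needs to show that the leftmost entry of the orbit is nonzero sufficiently often: the worst case is a string whose orbit has the maximal period and whose leftmost entry is $1$ for as few steps as possible. One must check, using the explicit action of $\textrm{CNOT}_n$ on $\{\id,Z\}^{\otimes n}$ (the leftmost bit evolves by its own simple recursion since the first CNOT has the leftmost qubit as control and is unaffected by the rest), that over $2^{\floor{\log_2 n}}$ steps the leftmost bit is set at least once for any nonidentity string — equivalently, that no nonidentity string can hide its support away from the first register for that many consecutive ladders. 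For $n_d>1$ the analogous statement is cleaner because more columns are monitored, but one still must rule out a string that stays identity on \emph{all} of the first $n_d$ registers throughout a period; this follows because such a string would have to be supported entirely on the last $n-n_d$ registers and be invariant under the relevant sub-ladder action, forcing it (by injectivity of $\textrm{CNOT}_n$ on $\{\id,Z\}^{\otimes n}$ together with the triangle identities) to eventually propagate leftward within $T$ steps unless it is trivial. Once these column-occupation facts are nailed down, assembling the floor function and plugging into the purity bound via Lemma \ref{lem:purity} is routine, and the final equality in Eq.~\eqref{eq:conc-purity} is just $P(\dya{z_n}) = 1$ for a pure state.
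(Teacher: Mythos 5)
Your reduction is exactly the paper's: pass to the Heisenberg picture, expand $\dya{z_n}-\id/2^n$ in $\{\id,Z\}^{\otimes n}$, use the bijectivity of the ladder on this set together with orthogonality of Pauli strings, and reduce everything to a lower bound on the number of times an evolving string places a $Z$ on one of the first $n_d$ registers. The gap is in the combinatorial core, which is the only nontrivial part. For $n_d>1$ your justification is the claim that \emph{each} of the first $n_d$ columns carries at least one $Z$ per period; this is false. The string $Z\otimes\id^{\otimes(n-1)}$ is a fixed point of $\mathrm{CNOT}_n$ (the update is $b_i\mapsto b_i\oplus b_{i+1}$, $b_n\mapsto b_n$), so column $2$ is never occupied, and more generally any string supported only on columns $1,\dots,j-1$ with period dividing the window length never touches column $j$. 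The statement you actually need is the aggregate one (at least $n_d$ occupied entries among the first $n_d$ columns per cycle), and the paper proves it differently: anchor at a row whose first entry is ``$1$'' (such a row exists in every cycle of a nontrivial string, since ``$1$''s propagate left), and note that a ``$1$'' in column $1$ at row $k$ forces at least one ``$1$'' among the first $j+1$ columns at row $k-j$; running $j=0,\dots,n_d-1$ gives $n_d$ hits inside the backward light cone, hence per cycle. Your fallback argument (ruling out a string that is identity on all of the first $n_d$ registers throughout a period) only excludes zero hits and cannot produce the factor $n_d$ in the exponent.

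For $n_d=1$ the specific claim you propose to verify --- that every nonidentity string places a ``$1$'' in the first entry at least once in \emph{every} window of $2^{\floor{\log_2 n}}$ consecutive steps --- is also false: for $n=3$ the string $\id\id Z$ has orbit $001\to011\to101\to111$, so there is a window of length $2=2^{\floor{\log_2 3}}$ (the first two steps) with no hit on the first register. The paper does not argue per window but per cycle: strings of period at most $2^{\floor{\log_2 n}}$ have at least one first-column hit per cycle, while strings of the maximal period $2^{\ceil{\log_2 n}}$ (possible only when $n$ is not a power of two) have at least two hits per cycle, because the leftward light cone of any ``$1$'' reaches column $1$ strictly before the cycle closes; both cases give the rate $1/2^{\floor{\log_2 n}}$. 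So while your architecture is the right one, the counting lemmas as you state them would fail if pursued literally, and they are precisely the step on which the result rests; you would need to replace them with the cycle-anchored light-cone count sketched above.
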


\begin{remark}
If $n$ is a power of 2, then the exponent in Eqs.~\eqref{eq:conc-purity} 
is given 
\begin{equation}
    \Gamma_{n,n_d,L} = \left\lfloor L\cdot \frac{n_d}{n} \right\rfloor\,.
\end{equation}
Thus, we see that the exponent in Eqs.~\eqref{eq:conc-purity} is approximately equal to $L\cdot \frac{n_d}{n}$.
\end{remark}

\begin{proof}
We start with the proof for number of dirty qubits $n_d=1$, before showing how a looser bound can be constructed for arbitrary $n_d$. From Lemma \ref{prop:maxcycle} we know that the period of a string of length $n$ is at most $2^{\ceil{\log_2 n}}$. We now show that there always exists a binary string with period $2^{\floor{\log_2 n}}$ that corresponds to the Pauli string that is least affected by noise in the circuit in Fig.~\ref{fig:ladder_heisenberg}.

Consider a string of length $n$ made up of all "$0\,$"\,s, with "$1$" on the $2^{\floor{\log_2 n}}th$ entry. By inspection, we see that the "$1$" propagates out to the left and the right in a lightcone structure. In addition, due to Lemma \ref{prop:maxcycle}, the string must repeat after $2^{\floor{\log_2 n}}$ steps as

\begin{center}
\Longstack{
0\x ...\y ... \x 0\x 1\x 0\x ...\x ... \x  0 \x \hphantom{0 \x 0 \x 0}\\
\,\,\y\z 0\x ...\x 0\x 1\x 1\x 0 \x \hphantom{0}\x \hphantom{...} \x \ 0\, \y \hphantom{0} \x \hphantom{0} \x\z \\
\,\hphantom{0}\z \rotatebox[]{-50}{$\rightarrow$}\x \rotatebox[]{50}{$\leftarrow$} \z \rotatebox[]{50}{$\leftarrow$} \hphantom{0} \hspace{3pt} ... \hspace{3pt} \, \hphantom{0} \rotatebox[]{-50}{$\rightarrow$} \z \rotatebox[]{-50}{$\rightarrow$} \z \hphantom{0} \x \ \hphantom{0}  \hphantom{0} \x \rotatebox[]{-50}{$\rightarrow$} \x\x \hphantom{0} \\
\hphantom{0} \hphantom{0} \x 0\x 1\y ... \x ...   \y 1 \x 0 \x \hphantom{...} \x \hphantom{0} \x\ 0 \x \hphantom{0} \\
\hphantom{0}\x \hphantom{0}\x \y \hspace{1pt} 1 \hspace{29.5pt} ... \hspace{30.5pt} 1 \x 0 \x \,... \x ... \x 0 \x \hphantom{0}\,\ \\
\hphantom{\x {0} \x {0} \x {0}} 0\y ...\x ... \y 0\x 1\x 0\x\, ...\x ... \x 0 \y
}
\hspace{10pt}
\vline\quad
\Longstack[l]{
r=0\\
r=1\\
\;\,\ ...\\
r=2^{\floor{\log_2 n}}-2\\
r=2^{\floor{\log_2 n}}-1\\
r=2^{\floor{\log_2 n}}
}
\end{center}

Due to the lightcone structure, it is clear there is only one string per cycle with "$1$" as its first entry. Thus, every $2^{\floor{\log_2 n}}$ steps there is one string with a "$1$" in the first entry. Now we show that no other string that produces a cycle with proportionally fewer "$1$"\,s in the first entry every $L$ steps. 

First, we remark that no nontrivial string with period smaller than $2^{\floor{\log_2 n}}$ can produce such a cycle, as due to the lightcone structure of "$1$"\,s propagating left, there must always be at least one string with a "$1$" in the first entry. Second, we note that if $n$ is a power of 2, then $2^{\floor{\log_2 n}}=2^{\ceil{\log_2 n}}$ and thus with the above we have already found the cycle with the proportionally fewest "$1$"s in the first entry.

Thus, it only remains to consider strings with length $n$ not a power of 2, and with period  $2^{\ceil{\log_2 n}}$ (as this is the maximum possible period, as set by Lemma \ref{prop:maxcycle}). We remark that for such strings, there must be at least one string with "$1$" as the first entry within each cycle, as any "$1$" propagates left in a lightcone. Without loss of generality, we consider the first string of each cycle to be such a string. As we are considering non-trivial strings and cycles, the second string in the cycle must also have a "$1$" at some position. If the left-most "$1$" is the first entry, then we have trivially found a cycle with two adjacent strings with "$1$" in the first entry. If the left-most "$1$" is not the first entry, the "$1$" will propagate left with each step in a light cone. As $2^{\ceil{\log_2 n}}>n$ if $n$ is not a power of 2, it is assured that the light cone of "$1$"\,s reaches the first entry before the end of the cycle. Thus, any string with with period  $2^{\ceil{\log_2 n}}$ has at least two strings per cycle with "$1$" in the first entry.

We can summarize the above result as 
\begin{equation}\label{eq:1s-entry1}
    \text{number of "1"\,s in first entry after $L$ steps} \geq \floor{L/2^{\floor{\log_2 n}}}\,,
\end{equation}
for any string in $\{\id, Z\}^{\otimes n}$, where the bound is achievable with the string of of all "$0\,$"\,s, with "$1$" on the $2^{\floor{\log_2 n}}th$ entry.

We now return to the quantum problem, recalling that $Z$ corresponds to "$1$" and $\id$ corresponds to "$0$". From Eq.~\eqref{eq:1s-entry1} we can conclude that through the $L$ channel instances that make up $\WC_L^{(1)}$, any input string in $\{Z,\id\}^{\otimes n}$ will produce at least $\floor{L/2^{\floor{\log_2 n}}}$ strings with $Z$ in the first register. Thus, for any $\sigma \in \{Z, {\id}\}^{\otimes n}/\{{\id}^{\otimes n}\}$ we have
\begin{equation}\label{eq:ladder-conc}
    \WC^{(1)\dag}_L(\sigma) = q^{(1)}_{\sigma,L} \sigma'\,,
\end{equation}
where $\sigma' \in \{Z, {\id}\}^{\otimes n}/\{{\id}^{\otimes n}\}$ and for some $\{q^{(1)}_{\sigma,L}\}_\sigma$ satisfying
\begin{equation}\label{eq:q-1noise}
    \max_{\sigma \in \{Z, {\id}\}^{\otimes n}/\{{\id}^{\otimes n}\}} |q^{(1)}_{\sigma,L}| = (1-p)^{\floor{L/2^{\floor{\log_2 n}}}}\,.
\end{equation}
Moreover, the map in Eq.~\ref{eq:ladder-conc} bijectively maps Pauli strings to Pauli strings. We note that computational basis states can be written as $\dya{z_n} = \frac{1}{2^n} \sum_{\sigma \in \{\id, Z\}^{\otimes n}} s_{z_n,\sigma} \sigma$ where $s_{z_n,\sigma} \in \{+1,-1\}$ and $s_{z_n,\id} =1$.
We can write
\begin{align}\label{eq:2renyi2}
    P\!\left(\WC^{(1)\dag}_L(\dya{z_n}) \right) - \frac{1}{2^n} &= \bigg\| \frac{1}{2^n} \sum_{\sigma \in \{\id, Z\}^{\otimes n}/\{\id^{\otimes n}\}} s_{z_n,\sigma} \WC^{(1)\dag}_L(\sigma) \bigg\|^2_2 \\
    &= \bigg\| \frac{1}{2^n} \sum_{\sigma \in \{\id, Z\}^{\otimes n}/\{\id^{\otimes n}\}} s_{z_n,\sigma} q^{(1)}_{\sigma,L}\sigma' \bigg\|^2_2 \\
    &= \frac{1}{2^{2n}} \sum_{\sigma \in \{\id, Z\}^{\otimes n}/\{\id^{\otimes n}\}} \left|s_{z_n,\sigma} q^{(1)}_{\sigma,L} \right|^2 \Tr[\id] \\
    &= \frac{1}{2^n} \sum_{\sigma \in \{\id, Z\}^{\otimes n}/\{\id^{\otimes n}\}} \left| q^{(1)}_{\sigma,L} \right|^2 \\
    &\leq  (1-p)^{2\floor{L/2^{\floor{\log_2 n}}}} \left(\frac{2^n-1}{2^n}\right) \\
    &= (1-p)^{2\floor{L/2^{\floor{\log_2 n}}}} \left( P\!\left(\dya{z_n} \right) - \frac{1}{2^n} \right)\,,
\end{align}
where in the first equality we have used Lemma \ref{lem:purity}, in the second equality we have used Eq.~\eqref{eq:ladder-conc}, in the third equality we have used the definition of the Schatten 2-norm and the orthogonality of the Pauli matrices under the Hilbert-Schmidt inner product. The inequality comes from Eq.~\ref{eq:q-1noise}.

In order to generalize the results from $\WC^{(1)\dag}_L$ to $\WC^{(n_d)\dag}_L$ (that is, one dirty qubit to $n_d$ dirty qubits), we argue that for each extra dirty qubit, any string $\{Z, \id\}^{\otimes n}$ generates a cycle that has at least one extra $Z$ per cycle in the first $n_d$ qubits. We now relax our previous result and consider cycles of maximal length $2^{\floor{\log_2 n}}$. As argued above, any non-trivial string generates a cycle such that at least one string with "$1$" as the first entry. We suppose that such a string appears in row $k$. From the rules of a Pascal triangle, this implies the must be a "$1$" in row $k-1$ either as the first entry or the second, i.e. we have either
\begin{center}
\Longstack{
0\x 1\\
1
} ,\quad or \quad
\Longstack{
1\x 0\\
1
}\quad\quad\vline\quad\quad
\Longstack[l]{
r=k-1\\
r=k\,.
}
\end{center}
We recall that this corresponds to a Pauli string with a $Z$ in the first or second qubit. Thus, if the first and second qubits are now dirty, this guarantees for any input string an additional factor of $1-p$ per cycle. In general we can consider $n_d\leq n$ dirty qubits and the light cone of influence up from the first entry of row $k$ up to the first $n_d$ entries of row $k-n_d-1$. In this triangular light cone, there will exist at least $n_d$ "$1$"s. Thus, we have 
\begin{equation}
    \text{number of "1"\,s in first $n_d$ entries after $L$ steps} \geq \floor{ L n_d/2^{\ceil{\log_2 n}}}\,.
\end{equation}
This implies that 
\begin{equation}\label{eq:ladder-conc-2}
    \WC^{(n_d)\dag}_L(\sigma) = q^{(n_d)}_{\sigma,L} \sigma'\,,
\end{equation}
where $\sigma' \in \{Z, {\id}\}^{\otimes n}/\{{\id}^{\otimes n}\}$ and for some $\{q^{(1)}_{\sigma,L}\}_\sigma$ satisfying
\begin{equation}\label{eq:q-1noise-2}
    \max_{\sigma \in \{Z, {\id}\}^{\otimes n}/\{{\id}^{\otimes n}\}} |q^{(n_d)}_{\sigma,L}| = (1-p)^{\floor{n_d L/2^{\floor{\log_2 n}}}}\,.
\end{equation}
Proceeding with the same steps as the above, for $n_d$ dirty qubits we obtain the desired result with exponent $\left\lfloor L\cdot \frac{n_d }{2^{\ceil{\log_2 n}}} \right\rfloor$.
\end{proof}

We now restate and prove Proposition \ref{prop:conc-ladder-cost}, which shows how the output distribution of Figure \ref{fig:ladder_heisenberg}(a) exponentially concentrates.

\begin{proposition}\label{prop:appdx-conc-ladder-cost}
Consider the circuit in Figure \ref{fig:ladder_heisenberg}(a) with $n_d$ dirty qubits and computational basis measurement $O_{z_n}=\dya{z_n}$; $z_n \in \{0,1\}^n$. Denote the channel that describes the action of the gates and noise in the circuit as $\WC^{(n_d)}_L$. For any input state $\rho$, the expectation value concentrates as
\begin{equation}
    \left|\Tr[\WC^{(n_d)}_L(\rho)O_{z_n}] - \frac{1}{2^n}\right| \leq  (1-p)^{\Gamma_{n,n_d,L}}\sqrt{P(\rho)}\,,
\end{equation}
for any computational basis measurement, where $P(\rho)$ denotes the purity of $\rho$ and we have defined
\begin{align}
    \Gamma_{n,n_d,L} =
    \begin{cases}
    \left\lfloor L\cdot \frac{n_d }{2^{\floor{\log_2 n}}} \right\rfloor\,, & \text{for}\ n_d = 1 \,,\\
    \left\lfloor L\cdot \frac{n_d }{2^{\ceil{\log_2 n}}} \right\rfloor\,, & \text{for}\ n_d > 1\,.
    \end{cases} 
\end{align}
\end{proposition}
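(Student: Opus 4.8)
The plan is to reduce the statement about expectation values directly to the purity bound already established in Lemma~\ref{lem:conc-ladder-purity}, using the Heisenberg picture together with a Cauchy--Schwarz argument. First I would note that $\Tr[\WC^{(n_d)}_L(\rho)O_{z_n}] = \Tr[\rho\,\WC^{(n_d)\dag}_L(O_{z_n})]$, so that the problem is transferred to analyzing the action of the \emph{adjoint} channel on the computational-basis projector $O_{z_n}=\dya{z_n}$; this is precisely the reverse circuit depicted in Figure~\ref{fig:ladder_heisenberg}(b), and it is why Lemma~\ref{lem:conc-ladder-purity} was phrased in terms of $\WC^{(n_d)\dag}_L$ acting on $\dya{z_n}$.

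Next I would subtract off the uniform/identity contribution. Writing $O_{z_n} = \frac{\id}{2^n} + \big(O_{z_n}-\frac{\id}{2^n}\big)$ and using that $\WC^{(n_d)\dag}_L$ is unital (it is a composition of CNOT conjugations, which are unital, and depolarizing channels, which are unital), we get $\WC^{(n_d)\dag}_L(O_{z_n}) = \frac{\id}{2^n} + \WC^{(n_d)\dag}_L\big(O_{z_n}-\frac{\id}{2^n}\big)$, and hence
\begin{equation*}
\Tr[\WC^{(n_d)}_L(\rho)O_{z_n}] - \frac{1}{2^n} = \Tr\!\Big[\Big(\rho-\frac{\id}{2^n}\Big)\WC^{(n_d)\dag}_L\Big(O_{z_n}-\frac{\id}{2^n}\Big)\Big],
\end{equation*}
where I have also used that $\Tr[\rho]=1$ to insert the matching $\frac{\id}{2^n}$ on the left slot for free (the cross terms vanish because $\Tr[\rho-\frac{\id}{2^n}]=0$ and $\WC^{(n_d)\dag}_L$ is unital). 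Then the tracial Cauchy--Schwarz (H\"older with $p=q=2$) inequality gives a bound by $\big\|\rho-\frac{\id}{2^n}\big\|_2 \cdot \big\|\WC^{(n_d)\dag}_L(O_{z_n}-\frac{\id}{2^n})\big\|_2$. By Lemma~\ref{lem:purity} the first factor is $\sqrt{P(\rho)-1/2^n}\leq\sqrt{P(\rho)}$, and the second factor is exactly $\sqrt{P\big(\WC^{(n_d)\dag}_L(O_{z_n})\big)-1/2^n}$ (again by Lemma~\ref{lem:purity}, using unitality to identify $\WC^{(n_d)\dag}_L(O_{z_n}) - \frac{\id}{2^n} = \WC^{(n_d)\dag}_L(O_{z_n} - \frac{\id}{2^n})$). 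Applying Lemma~\ref{lem:conc-ladder-purity} bounds this last quantity by $(1-p)^{\Gamma_{n,n_d,L}}$, since the square root halves the exponent $2\Gamma_{n,n_d,L}$, which yields exactly Eq.~\eqref{eq:conc-ladder-cost}.

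The only subtlety — and the step I would be most careful about — is bookkeeping the unitality and the trace-preservation correctly so that the cross terms genuinely vanish and the $1/2^n$ subtraction on both sides matches up; this is routine but easy to botch. There is no real "hard part" here because all the combinatorial work (the Pascal-triangle lightcone analysis establishing the exponent $\Gamma_{n,n_d,L}$) has already been done inside Lemma~\ref{lem:conc-ladder-purity}; this proposition is essentially the translation of that purity statement into an expectation-value statement via Cauchy--Schwarz, and it is a near-verbatim analogue of the argument used for NIBPs in Ref.~\cite{wang2020noise}. I would also remark that the bound holds for \emph{every} computational-basis measurement $O_{z_n}$ uniformly, since Lemma~\ref{lem:conc-ladder-purity} holds for every $z_n\in\{0,1\}^n$, and that one can insert absolute value bars on the left-hand side as in the restated Proposition~\ref{prop:appdx-conc-ladder-cost}.
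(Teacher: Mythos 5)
Your proposal is correct and follows essentially the same route as the paper: pass to the Heisenberg picture, subtract the identity contribution, apply the tracial H\"older/Cauchy--Schwarz inequality, and invoke Lemma~\ref{lem:purity} together with the purity-decay Lemma~\ref{lem:conc-ladder-purity} (whose square root halves the exponent $2\Gamma_{n,n_d,L}$). The only cosmetic difference is that you also center $\rho$, giving the marginally tighter intermediate factor $\sqrt{P(\rho)-1/2^n}$ before relaxing it to $\sqrt{P(\rho)}$, whereas the paper bounds $\|\rho\|_2=\sqrt{P(\rho)}$ directly.
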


\begin{proof}
We have 
\begin{align}
    \left|\Tr[\WC^{(n_d)}_L(\rho)O_{z_n}] - \frac{1}{2^n}\right| &= \left|\Tr[\WC^{(n_d)}_L(\rho)O_{z_n}] - \Tr\Big[\WC^{(n_d)}_L(\rho)\frac{\id}{2^n}\Big] \right|\\
    &= \left|\Tr\Big[\WC^{(n_d)}_L(\rho)\Big(O_{z_n}-\frac{\id}{2^n}\Big)\Big]\right|\\
    &\leq \|\rho\|_2\, \Big\|\big(\WC^{(n_d)}_L\big)^{\dag}\Big(O_{z_n}-\frac{\id}{2^n}\Big)\Big\|_2 \\
    &= \sqrt{P(\rho)}\sqrt{P\Big(\big(\WC^{(n_d)}_L\big)^{\dag}(\dya{z_n})\Big)-\frac{1}{2^n}}\\
    &\leq \sqrt{P(\rho)} (1-p)^{\Gamma_{n,n_d,L}}\,,
\end{align}
where in the first equality we have used the fact that $\Tr[\WC^{(n_d)}_L(\rho)]=1$, the second equality is simply a grouping of terms, the first inequality is due to the tracial matrix H{\"o}lder's inequality \cite{baumgartner2011inequality}, the third equality is an application of Lemma \ref{lem:purity}, and the final inequality is an application of Lemma \ref{lem:conc-ladder-purity}.
\end{proof}

\subsubsection{Pauli observables} \label{sec:pauli-observables}

We remark that Eq.~\ref{eq:ladder-conc-2} also implies the concentration of Pauli observables with ladder circuits. We make this precise with the following proposition.

\begin{supproposition}[Pauli observables]\label{prop:pauli-observables}
Consider the reverse ladder circuit $\WC^{(n_d)\dag}_L$ (pictured in Fig.~\ref{fig:ladder_heisenberg}b ) with computational basis input state $\dya{z_n}$; $z_n \in \{0,1\}^n$ and measurement observable $O$. Suppose further that $O$ can be decomposed in the Pauli basis as $O=\sum_{\sigma \in \mathrm{W}} \omega_{\sigma} \sigma$ with $\mathrm{W} \subset \{\id, X, Y, Z \}^{\otimes n}$. Then, the output of this circuit concentrates as
\begin{align}
    \left| \Tr\left[\WC^{(n_d)\dag}_L(\dya{z_n}) O\right] - \frac{1}{2^n} \Tr\left[ O\right] \right| \leq |\mathrm{W}| \|\vec{\omega}\|_{\infty} (1-p)^{\floor{n_d L/2^{\floor{\log_2 n}}}} \,,
\end{align}
where we denote $\|\vec{\omega}\|_{\infty}:=\max_{\sigma} \{\omega_{\sigma} \}_{\sigma}$ as the Pauli coefficient of largest magnitude. Thus, the output of the circuit concentrates to the maximally mixed value exponentially quickly with increasing circuit depth. 
\end{supproposition} 

\begin{proof}
Starting from Eq.~\ref{eq:ladder-conc-2} we can directly evaluate
\begin{align}
    \left| \Tr\left[\WC^{(n_d)\dag}_L(\dya{z_n}) O\right] - \frac{1}{2^n} \Tr\left[ O\right] \right| &= \frac{1}{2^n} \sum_{\sigma \in \{\id, Z\}^{\otimes n}/\{\id^{\otimes n}\}} \sum_{\sigma'' \in \mathrm{W}} \left| \Tr\left[q^{(n_d)}_{\sigma,L} \sigma' \omega_{\sigma''} \sigma'' \right] \right| \\
    &= \frac{1}{2^n} \sum_{\sigma \in \{\id, Z\}^{\otimes n}/\{\id^{\otimes n}\}} \sum_{\sigma'' \in \mathrm{W}} \left| q^{(n_d)}_{\sigma,L} \omega_{\sigma''} \right| \delta_{\sigma'  \sigma''} 2^n \\
    &\leq \frac{1}{2^n} \sum_{\sigma \in \{\id, Z\}^{\otimes n}/\{\id^{\otimes n}\}} \sum_{\sigma'' \in \mathrm{W}} (1-p)^{\floor{n_d L/2^{\floor{\log_2 n}}}} \|\vec{\omega}\|_{\infty} \delta_{\sigma'  \sigma''} 2^n \\
    &\leq |\mathrm{W}| \|\vec{\omega}\|_{\infty} (1-p)^{\floor{n_d L/2^{\floor{\log_2 n}}}}\,,
\end{align}

where in the first line we have decomposed $\dya{z_n}$ and $O$ into the Pauli basis, the second line comes from the identity $\Tr[\sigma' \sigma''] = \delta_{\sigma' \sigma''} 2^n$ where $\delta_{\sigma' \sigma''}$ is the Dirac delta function, the third line comes from taking the maximal magnitudes of $q^{(n_d)}_{\sigma,L}$ and $\omega_{\sigma''}$ separately, and the final line comes by noting there can only be at most $|\mathrm{W}|$ non-zero terms in the sum. 
\end{proof}

We remark that this bound is particularly relevant if the measurement operator is composed of a small number of Pauli operators with bounded coefficients, compared to the circuit depth. For instance, if $\|\vec{\omega}\|_{\infty}:=\max_{\sigma} \{\omega_{\sigma} \}_{\sigma} \in \OC(\mathrm{poly}(n))$ and $N_O \in \OC(\mathrm{poly}(n))$ but $L=\Omega(n)$, then Supplemental Proposition \ref{prop:pauli-observables} implies exponential concentration with increasing system size.

\subsection{Model 2 - entangled registers}\label{sec:entangleUV}

In this section we consider circuits of the form in Figure \ref{fig:entangle_UV_heisenberg}(a). In order to analyze such circuits we bound the output purity of the reverse circuit with computational basis input states (Figure \ref{fig:entangle_UV_heisenberg}(b)). This will allow us to consider the circuit in Figure \ref{fig:entangle_UV_heisenberg}(a) using the Heisenberg picture. 

\begin{figure}[t]
	\includegraphics[width= .9 \columnwidth]{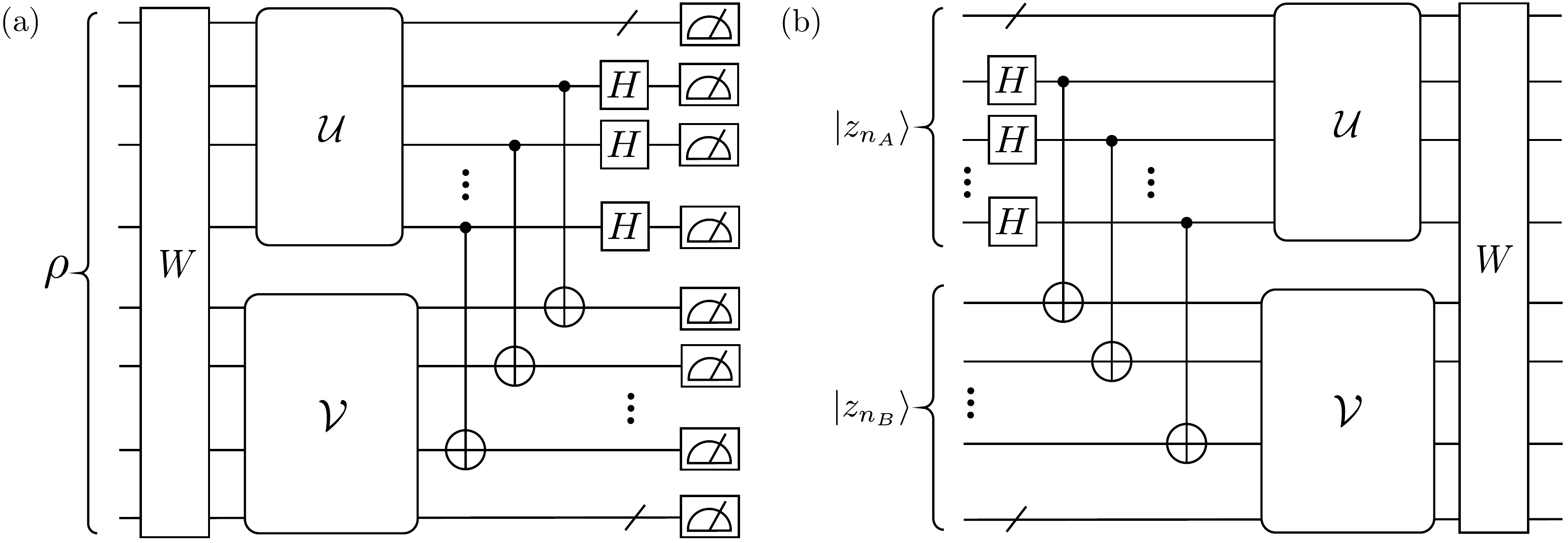}
	\caption{\label{fig:entangle_UV_heisenberg} \textbf{}(a) We consider a class of circuits with entangled measurements, that includes as a special case the Hilbert-Schmidt Test circuit. (b) In order to analyze such circuits, we bound the output purity of the reverse circuit with computational basis input states. }
\end{figure}

We consider two distinct subsystems $A$ and $B$, of size $n_A$ qubits and $n_B$ qubits respectively. The input to the circuit in Figure \ref{fig:entangle_UV_heisenberg}(b) is a computational basis state $\ket{z_n} = \ket{z_{n_A}}\ket{z_{n_B}}$; $z_n \in \{0,1\}^n$. We first implement $m$ entangling operations with Hadamard gates and CNOTs on a subset of $2m$ registers between $A$ and $B$, with an operation we denote as $\BC_m$. Note that we allow for any generic $m$ satisfying $m>0$. The second step of the circuit is to act on $A$ with unitary evolution $\UC$, and act on $B$ with unitary $\VC$. Finally, we allow the action of a global unitary $\WC$ on the joint system $AB$. We remark that this circuit includes as a special case the Hilbert-Schmidt Test circuit \cite{khatri2019quantum}. 

We consider three modifications of this circuit where we now consider the effects of noise. In the first modification, we consider subsystem A to be noisy, and thus we modify the channel $\UC\rightarrow\widetilde{\UC}$, where $\widetilde{\UC}$ is the channel corresponding to the hardware implementation of the unitary channel $\UC$ with an instance of local depolarizing noise $\DC_p^{\otimes n_A}$  in between each layer of gates. This follows the noise model as considered in Ref.~\cite{wang2020noise}, and we present a schematic of this modification in Figure \ref{fig:noisy_unitary}. In the second modification, we consider $\VC\rightarrow\widetilde{\VC}$, where $\widetilde{\VC}$ is the noisy version of the unitary channel $\VC$. We define the depth of the unitaries as the number of unitary layers in such an implementation and denote the depth of $\widetilde{\UC}$($\widetilde{\VC}$) as $L_{\UC}$($L_{\VC}$). Finally, we also consider the scenario when both $A$ and $B$ are noisy, and we have $\UC\rightarrow\widetilde{\UC}, \VC\rightarrow\widetilde{\VC}$.

\begin{figure}[t]
    \begin{center}
	\includegraphics[width= .7 \columnwidth]{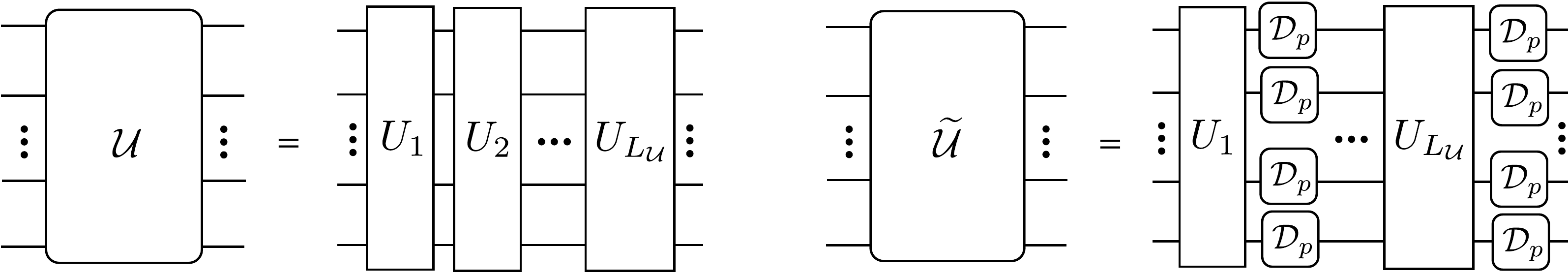}
    \end{center}
\caption{\label{fig:noisy_unitary} \textbf{} We consider the depth of $\UC$ to be the number of non-parallelizable unitary layers in a given hardware implementation. When we consider a noisy modification $\UC \rightarrow \widetilde{\UC}$, we insert layers of local depolarizing noise with depolarizing probability $p$ in between each unitary layer.}
\end{figure}

\begin{lemma}\label{lem:entangleUV-purity}
Consider the circuit in Figure \ref{fig:entangle_UV_heisenberg}(b) with entangling operation $\BC_m$, separable noisy unitary evolution $\TC$ and global unitary evolution $\WC$. The purity of the circuit concentrates as
\begin{align}\label{eq:L_T}
    P\!\left(\WC\circ\TC\circ\BC_{m}(\dya{z_n}) \right) - \frac{1}{2^n}
    \leq (1-p)^{L_\TC} \left(P\!\left(\dya{z_n} \right) - \frac{1}{2^n}\right)\,,
\end{align}
for any computational basis state input $\dya{z_n};$ $z_n \in \{0,1\}^n$, where we have defined
\begin{align}
    L_\TC = 
    \begin{cases}
    {L_{\UC}}\,,& \text{for}\ \TC = \widetilde{\UC} \otimes \VC\,, \\
    {L_{\VC}}\,,& \text{for}\ \TC = {\UC} \otimes \widetilde{\VC}\,, \\
    {L_{\UC}+L_{\VC}}\,,& \text{for}\ \TC = \widetilde{\UC} \otimes \widetilde{\VC} \,.
    \end{cases}     
\end{align}
\end{lemma}

Thus, when only subsystem $A$ ($B$) is noisy, there is still exponential scaling, however, compared to the fully noisy setting the exponent is modified by factor $\frac{L_{\UC}}{L_{\UC}+ L_{\VC}}$ ($\frac{L_{\VC}}{L_{\UC}+ L_{\VC}}$). If we consider the unitaries to have linear depth in the number of qubits and scaling with the same proportionally factor (i.e. $L_{\UC}/n_A = L_{\VC}/n_B$), then the modification to the exponent is $\frac{n_A}{n_A+n_B}$ ($\frac{n_B}{n_A+n_B}$).

\begin{proof}
Consider the mapping of strings in $\{ \id, Z\}^{\otimes 2}$ under a Hadamard gate followed by a CNOT. For map $\BC(.)= (H \otimes \id)CNOT(.)CNOT^\dag(H \otimes \id)$ we have
\begin{align}
    \id\id &\xrightarrow{\BC} \id\id\,, \label{eq:Bmap1}\\
    \id Z &\xrightarrow{\BC} ZZ\,, \\
    Z\id &\xrightarrow{\BC} XX\,, \\
    ZZ &\xrightarrow{\BC} -YY\,. \label{eq:Bmap4}
\end{align}
This implies that after the CNOTs in the circuit in Figure \ref{fig:entangle_UV_heisenberg}(b), the strings are (up to a phase factor) exclusively of the form $S_A\otimes S_B\in\mathit{Bell}$, where $S_A = \{ \id, Z\}^{\otimes n_{A}-m} \otimes A \cup\{\id^{\otimes n_A}\}$ and $S_B \in A\otimes \{ \id, Z\}^{\otimes n_{B}-m}\cup\{\id^{\otimes n_B}\}$, with $A\in\{X,Y,Z \}^{\otimes m}$. In particular, we make a key observation that $S_A = \id^{\otimes n_A}$ if and only if $S_B = \id^{\otimes n_B}$. This implies that we can import over the results of Ref.~\cite{wang2020noise} and treat subsystems $A$ and $B$ individually in our analysis of the evolution under $\TC$. Namely, we have for any $\sigma_{A} \in P_{n_A}$, with $P_{n_A}$ the group of Pauli operators acting on subsystem $A$, we have
\begin{align}\label{eq:noise-on-A}
    \DC_p^{\otimes n_A}(\sigma_{A}) = q_{\sigma_{A}} \sigma_{A}\,, 
\end{align}
\begin{align}
    \DC_p^{\otimes n_B}(\sigma_{B}) = q_{\sigma_{B}} \sigma_{B}\,,
\end{align}
where $|q_{\sigma_B}|\leq 1-p$.
A consequence of this is that for any unitary channel $\YC$ and for input operator $\sum_{\sigma_{A} \in P_{n_A}, \sigma_B \in P_{n_B}} \lambda(\sigma_{A},\sigma_{B}) \sigma_{A}\otimes \sigma_{B}$ where $\lambda(\sigma_{A},\sigma_{B}) \in \mathbb{C}$ are arbitrary coefficients, we have
\begin{align}
    \bigg\|\YC\circ(\DC_p^{\otimes n_A}\otimes\IC)\Big(\sum_{\sigma_{A} \in P_{n_A}, \sigma_B \in P_{n_B}} \lambda(\sigma_{A},\sigma_{B}) \sigma_{A}\otimes \sigma_{B}\Big)\bigg\|_2
    &= \bigg\| \YC\Big(\sum_{\sigma_{A} \in P_{n_A}, \sigma_B \in P_{n_B}}  q_{\sigma_A}\lambda(\sigma_{A},\sigma_{B}) \sigma_{A}\otimes \sigma_{B}\Big)\bigg\|_2 \label{eq:bell-onestep}\\
    &= \bigg\|\sum_{\sigma_{A} \in P_{n_A}, \sigma_B \in P_{n_B}} q_{\sigma_A}\lambda(\sigma_{A},\sigma_{B}) \sigma_{A}\otimes \sigma_{B}\bigg\|_2 \label{eq:bell-onestep2}\\
    &= \left( \sum_{\sigma_{A} \in P_{n_A}, \sigma_B \in P_{n_B}} |q_{\sigma_A}\lambda(\sigma_{A},\sigma_{B})|^2 \Tr[(\sigma_{A}\otimes \sigma_{B})^2]\right)^{1/2}\\
    &\leq \left( \sum_{\sigma_{A} \in P_{n_A}, \sigma_B \in P_{n_B}} (1-p)^2|\lambda(\sigma_{A},\sigma_{B})|^2 \Tr[(\sigma_{A}\otimes \sigma_{B})^2]\right)^{1/2}\\
    &= (1-p) \bigg\|\sum_{\sigma_{A} \in P_{n_A}, \sigma_B \in P_{n_B}} \lambda(\sigma_{A},\sigma_{B}) \sigma_{A}\otimes \sigma_{B}\bigg\|_2\,,
\end{align}
where in the first equality we have used Eq.~\eqref{eq:noise-on-A}, in the second equality we have used the unitary invariance of Schatten norms, the third equality comes from the definition of the Schatten 2-norm, and the inequality comes from the bound $|q_{\sigma_A}|\leq 1-p$. The final equality again comes from the definition of the Schatten 2-norm.
We note that the output of $S_A\otimes S_B \in \textit{Bell}/\{\id^{\otimes n}\}$ under unital channels separable across the cut $A|B$ can generically be written in the form as the argument in the left-hand side of Eq.~\eqref{eq:bell-onestep}. As we only consider such unitary channels in our analysis of $\TC$, we thus can iteratively apply the above to write the concentration inequality
\begin{align}
    \bigg\|\widetilde{\UC}\otimes\IC\Big(\sum_{S_A \otimes S_B \in \textit{Bell}/\{\id^{\otimes n}\}}& \lambda(S_{A},S_{B}) S_{A}\otimes S_{B}\Big)\bigg\|_2 = \\
    &= \bigg\|\DC_p^{\otimes n_A}\circ\UC_L\circ\DC_p^{\otimes n_A}\circ...\circ\DC_p^{\otimes n_A}\circ\UC_1\otimes\IC\Big(\sum_{S_A \otimes S_B \in \textit{Bell}/\{\id^{\otimes n}\} } \lambda(S_{A},S_{B}) S_{A}\otimes S_{B}\Big)\bigg\|_2 \\
    &\leq (1-p)^{L_{\UC}} \bigg\|\sum_{S_A \otimes S_B \in \textit{Bell}/\{\id^{\otimes n}\} } \lambda(S_{A},S_{B}) S_{A}\otimes S_{B}\bigg\|_2\,,\label{eq:conc_UI}
\end{align}
where $\lambda(S_{A},S_{B})\in \mathbb{C}$ are arbitrary coefficients.
By repeating analogous steps to Eqs.~\eqref{eq:bell-onestep}-\eqref{eq:conc_UI} we can also obtain
\begin{equation}
    \bigg\|\IC\otimes\widetilde{\VC}\Big(\sum_{S_A \otimes S_B \in \textit{Bell}/\{\id^{\otimes n}\} } \lambda(S_{A},S_{B}) S_{A}\otimes S_{B}\Big)\bigg\|_2 \leq (1-p)^{L_{\VC}} \bigg\|\sum_{S_A \otimes S_B \in \textit{Bell}/\{ \id^{\otimes n}\} } \lambda(S_{A},S_{B}) S_{A}\otimes S_{B}\bigg\|_2\,.\label{eq:conc_IV}
\end{equation}

We now have the tools to prove the lemma. We recall that $\dya{z_n} = \frac{1}{2^n} \sum_{\sigma \in \{\id, Z\}^{\otimes n}} s_{z_n,\sigma} \sigma$ where $s_{z_n,\sigma} \in \{+1,-1\}$ and $s_{z_n,\id} =1$. Then, we can write 
\begin{align}
    P\!\left(\WC\circ\TC\circ\BC_{m}(\dya{z_n}) \right) - \frac{1}{2^n}
    &= \left\| \WC\circ\TC\circ\BC_{m}\left(\dya{z_n} - \frac{\id^{\otimes n}}{2^n}\right) \right\|_2^2\\
    &= \left\| \WC\circ\TC\Big( \frac{1}{2^n}\sum_{S_A \otimes S_B \in \textit{Bell}/\{\id^{\otimes n}\} } (-1)^{f(z_n,S_{A},S_{B})} S_{A}\otimes S_{B}\Big) \right\|_2^2\\
    & \leq (1-p)^{L_\TC} \left\|  \frac{1}{2^n}\sum_{S_A \otimes S_B \in \textit{Bell}/\{\id^{\otimes n}\} } (-1)^{f(z_n,S_{A},S_{B})} S_{A}\otimes S_{B} \right\|_2^2\\
    &= (1-p)^{L_\TC} \left\|  \BC_{m}\left(\dya{z_n} - \frac{\id^{\otimes n}}{2^n}\right) \right\|_2^2 \\
    &= (1-p)^{L_\TC} \left\|  \dya{z_n} - \frac{\id^{\otimes n}}{2^n} \right\|_2^2 \\
    &= (1-p)^{L_\TC} \left(P\!\left(\dya{z_n} \right) - \frac{1}{2^n}\right)\,,
\end{align}
where $L_{\TC}$ is defined in Eq.~\eqref{eq:L_T} and $f(z_n, S_A, S_B)$ is a function that generates the appropriate phase. In the first equality we have used Lemma \ref{lem:purity}, in the second equality we have used Eqs.~\eqref{eq:Bmap1}-\eqref{eq:Bmap4}. The inequality comes from Eqs.~\eqref{eq:conc_UI} and \eqref{eq:conc_IV}, and the unitary invariance of the Schatten norms. In the last three equalities, we again use Eqs.~\eqref{eq:Bmap1}-\eqref{eq:Bmap4}, the unitary invariance of the Schatten norms, and Lemma \ref{lem:purity}.
\end{proof}

Using Lemma \ref{lem:entangleUV-purity} we can now prove our proposition on the concentration of output distributions in circuits with entangled measurements.

\begin{proposition}\label{prop:appdx-conc-UV-cost}
Consider the circuit in Figure \ref{fig:entangle_UV_heisenberg}(a) with computational basis measurement $O_{z_n}=\dya{z_n}$; $z_n \in \{0,1\}^n$. For any input state $\rho$ and choice of measurement string $z_n$, the expectation value concentrates as
\begin{equation}
    \Tr\big[\BC_m^\dag\circ\TC\circ\WC(\rho)O_{z_n}\big] - \frac{1}{2^n} \leq  (1-p)^{L_\TC} \sqrt{P(\rho)}\,,
\end{equation}
where $P(\rho)$ is the purity of $\rho$ and 
\begin{align}
    L_\TC = 
    \begin{cases}
    {L_{\UC}}\,,& \text{for}\ \TC = \widetilde{\UC} \otimes \VC\,, \\
    {L_{\VC}}\,,& \text{for}\ \TC = {\UC} \otimes \widetilde{\VC}\,, \\
    {L_{\UC}+L_{\VC}}\,,& \text{for}\ \TC = \widetilde{\UC} \otimes \widetilde{\VC} \,.
    \end{cases}     
\end{align}
\end{proposition}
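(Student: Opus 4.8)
The plan is to convert the statement about the forward circuit into a statement about the purity of the reversed (Heisenberg) circuit, and then invoke Lemma \ref{lem:entangleUV-purity} together with Lemma \ref{lem:purity}. This mirrors exactly the argument used to deduce Proposition \ref{prop:appdx-conc-ladder-cost} from Lemma \ref{lem:conc-ladder-purity}, so the only work is to set up the dualization correctly.

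First I would rewrite the deviation from uniform using $\Tr[\BC_m^\dag\circ\TC\circ\WC(\rho)\,\tfrac{\id}{2^n}] = \tfrac{1}{2^n}$, which holds because $\BC_m^\dag\circ\TC\circ\WC$ is trace-preserving. This gives
\begin{equation}
    \Tr\big[\BC_m^\dag\circ\TC\circ\WC(\rho)O_{z_n}\big] - \frac{1}{2^n} = \Tr\Big[\BC_m^\dag\circ\TC\circ\WC(\rho)\Big(O_{z_n}-\frac{\id}{2^n}\Big)\Big]\,.
\end{equation}
Next I would move everything but $\rho$ onto the observable via the adjoint channel and apply the tracial matrix H\"older inequality (Lemma-style bound already used in the ladder proof, citing \cite{baumgartner2011inequality}), obtaining an upper bound of $\|\rho\|_2\,\big\|(\BC_m^\dag\circ\TC\circ\WC)^\dag(O_{z_n}-\tfrac{\id}{2^n})\big\|_2 = \sqrt{P(\rho)}\,\big\|\WC^\dag\circ\TC^\dag\circ\BC_m(\dya{z_n}-\tfrac{\id}{2^n})\big\|_2$, where I use that $\BC_m$ is self-adjoint up to the Hadamard/CNOT structure (more precisely $(\BC_m^\dag)^\dag=\BC_m$ and $\WC^\dag,\TC^\dag$ are the reversed channels). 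The key point is that $\TC^\dag$ is again separable across the cut $A|B$ with the same layer counts, and that $\WC^\dag$ is unitary hence purity-preserving.

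Then by Lemma \ref{lem:purity} the squared Schatten-2 norm equals $P\big(\WC^\dag\circ\TC^\dag\circ\BC_m(\dya{z_n})\big)-\tfrac{1}{2^n}$, and by unitary invariance of the 2-norm the leading $\WC^\dag$ can be dropped. Applying Lemma \ref{lem:entangleUV-purity} (with $\TC^\dag$ in place of $\TC$, noting $L_{\TC^\dag}=L_{\TC}$) bounds this by $(1-p)^{L_\TC}\big(P(\dya{z_n})-\tfrac{1}{2^n}\big) = (1-p)^{L_\TC}\big(1-\tfrac{1}{2^n}\big)\leq (1-p)^{L_\TC}$. Taking square roots and combining with the $\sqrt{P(\rho)}$ factor yields the claimed inequality.

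The main obstacle, such as it is, is purely bookkeeping: making sure the reversal is applied consistently — that $(\BC_m^\dag\circ\TC\circ\WC)^\dag = \WC^\dag\circ\TC^\dag\circ\BC_m$, that the Hadamard--CNOT Pauli mapping in Eqs.~\eqref{eq:Bmap1}--\eqref{eq:Bmap4} is the one actually needed for the reversed circuit (it is, since $\BC_m$ is generated by self-inverse gates up to the ordering), and that Lemma \ref{lem:entangleUV-purity} genuinely applies because $\TC^\dag$ still acts layer-by-layer with local depolarizing noise (depolarizing noise is self-adjoint) interleaved between unitary layers. Everything else is a one-to-one transcription of the ladder-to-Proposition argument.
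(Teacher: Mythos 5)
Your proposal is correct and follows essentially the same route as the paper: subtract the identity term using trace preservation, dualize with the tracial H\"older inequality, convert the Schatten 2-norm to a purity difference via Lemma~\ref{lem:purity}, and bound it with Lemma~\ref{lem:entangleUV-purity}. Your care in noting that the adjoint channel $\TC^\dag$ has the same layer structure and that $(1-p)^{L_{\TC^\dag}}=(1-p)^{L_\TC}$ is a point the paper handles only implicitly (and your final exponent $L_\TC$ is the correct one, whereas the paper's displayed proof carries over the symbol $\Gamma_{n,n_d,L}$ from the ladder argument), so nothing is missing.
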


\begin{proof}
In similar steps to the proof of Proposition \ref{prop:conc-ladder-cost}
\begin{align}
    \Tr\big[\BC_m^\dag\circ\TC\circ\WC(\rho)O_{z_n}\big] - \frac{1}{2^n} &= \Tr\big[\BC_m^\dag\circ\TC\circ\WC(\rho)O_{z_n}\big] - \Tr\Big[\BC_m^\dag\circ\TC\circ\WC(\rho)\frac{\id}{2^n}\Big]\\
    &= \Tr\Big[\BC_m^\dag\circ\TC\circ\WC(\rho)\Big(O_{z_n}-\frac{\id}{2^n}\Big)\Big]\\
    &\leq \|\rho\|_2\, \Big\|\WC^{\dag}\circ\TC^\dag\circ\BC_m\Big(O_{z_n}-\frac{\id}{2^n}\Big)\Big\|_2 \\
    &= \sqrt{P(\rho)}\sqrt{P(\WC^{\dag}\circ\TC^\dag\circ\BC_m(\dya{z_n}))-\frac{1}{2^n}}\\
    &  \leq \sqrt{P(\rho)} (1-p)^{L_{\TC}}\,, 
\end{align}
where in the first equality we have used the fact that $\Tr[\BC_m^\dag\circ\TC\circ\WC(\rho)]=1$, the second equality is simply a grouping of terms, the first inequality is due to the tracial matrix H{\"o}lder's inequality \cite{baumgartner2011inequality}, the third equality is an application of Lemma \ref{lem:purity}, and the final inequality is an application of  Lemma \ref{lem:entangleUV-purity}. 
\end{proof}

\subsection{Gradient scaling}\label{sec:gradientscaling}

In this section, we show how to transport our results to consider gradient scaling. We modify the settings studied in Section \ref{sec:analytics} to include a trainable unitary in the middle of the circuit. In doing so, we constrain ourselves to consider only computational basis input states.

We consider a trainable unitary of the form

\begin{equation} \label{eq:trainable-unitary2}
    Y(\thv)= W_0 \prod_{k=1}^K e^{-i \theta_{k} H_{k}} W_{k}\,,
\end{equation}

where $\{W_k\}_{k=0}^K$ are arbitrary fixed unitary operators and $\{H_k\}_{k=0}^K$ are Hermitian operators. We denote the channel that corresponds to this unitary as $\YC(\thv)$. We modify the circuit in Figure \ref{fig:ladder} to include this trainable unitary, which we present in Figure \ref{fig:ladder_grad}. In the following proposition, we show how partial derivatives of parameters in this unitary exponentially vanish in the circuit depth under a CNOT ladder circuit.

\begin{figure}[t]
    \begin{center}
	\includegraphics[width= .7 \columnwidth]{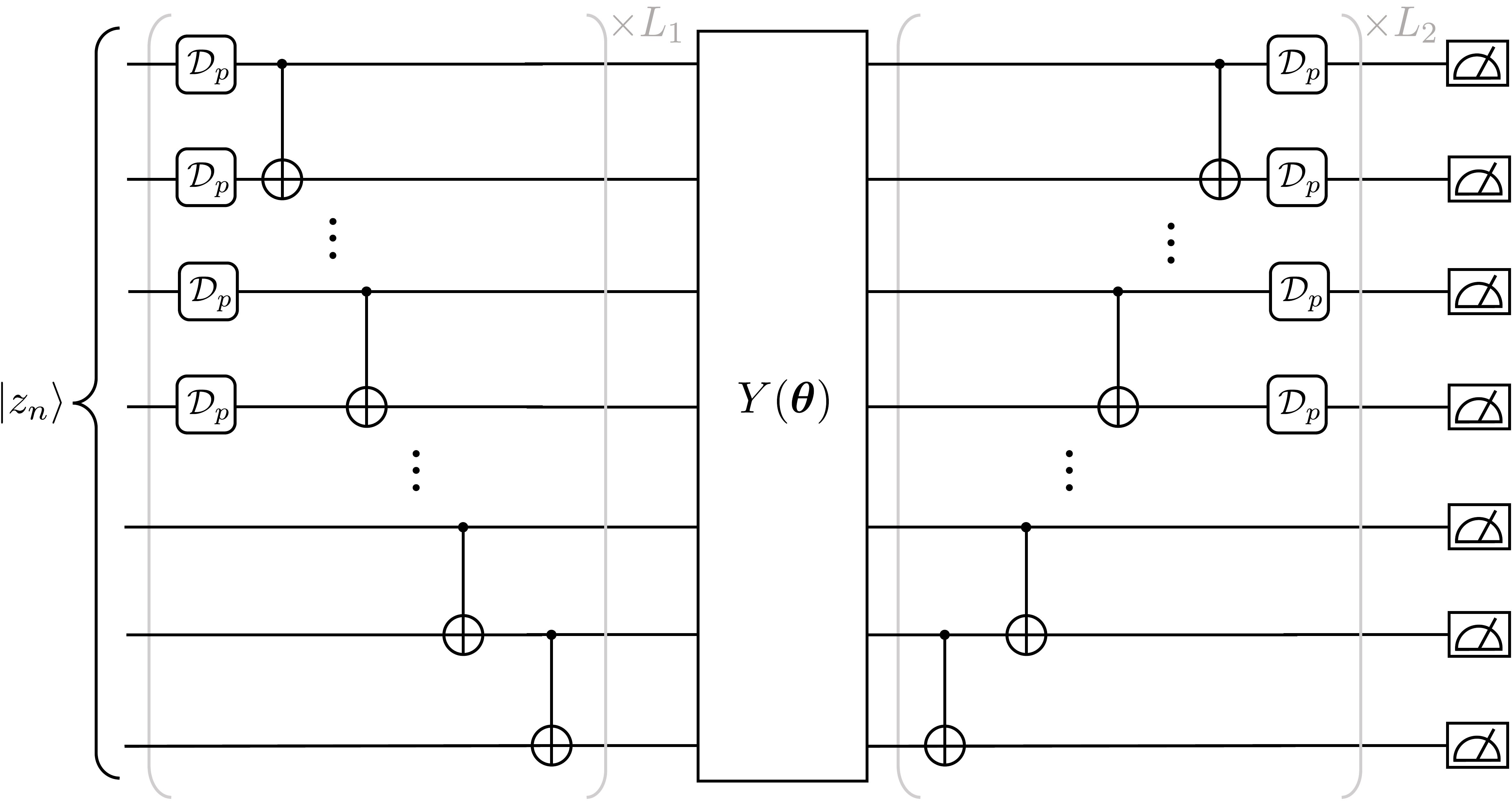}
    \end{center}
	\caption{\label{fig:ladder_grad} Modification to the circuit in Figure \ref{fig:ladder} to include trainable unitary $Y(\thv)$. }
\end{figure}

\begin{proposition}\label{prop:appdx-ladder-grad}
Consider a cost function corresponding to the circuit in Figure \ref{fig:ladder_grad} which we denote as $C(\thv)=\Tr[\WC^{(n_d)}_{L_2}\circ\YC({\thv})\circ(\WC^{(n_d)}_{L_1})^{\dag}(\rho)O_{z_n}]$, where $\rho$ is a computational basis input state, $\WC^{(n_d)}_{L_i}$ denotes the channel corresponding to $L_i$ instances of CNOT ladders and local noise on $n_d$ qubits as defined in Section \ref{sec:ladder}, and with computational basis measurement $O_{z_n}=\dya{z_n}$; $z_n \in \{0,1\}^n$. The partial derivative with respect to parameter $\theta_k$ is bounded as
\begin{equation}\label{eq:ladder-grad}
    |\partial_{\theta_k} C| \leq (1-p)^{\Gamma_{n,n_d,L_1}+\Gamma_{n,n_d,L_2}}\|H_k\|_\infty\,,
\end{equation}
where we denote
\begin{align}
    \Gamma_{n,n_d,L_i} =
    \begin{cases}
    \left\lfloor L_i\cdot \frac{n_d }{2^{\floor{\log_2 n}}} \right\rfloor\,, & \text{for}\ n_d = 1 \,,\\
    \left\lfloor L_i\cdot \frac{n_d }{2^{\ceil{\log_2 n}}} \right\rfloor\,, & \text{for}\ n_d > 1\,,
    \end{cases} 
\end{align}
for $i \in \{1,2\}$.
\end{proposition}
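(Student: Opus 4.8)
The plan is to reduce the gradient bound to the purity-decay estimate of Lemma~\ref{lem:conc-ladder-purity}, following the same Heisenberg-picture strategy used to prove Propositions~\ref{prop:appdx-conc-ladder-cost} and~\ref{prop:appdx-conc-UV-cost}. First I would write the cost function explicitly as $C(\thv)=\Tr[\WC^{(n_d)}_{L_2}\circ\YC(\thv)\circ(\WC^{(n_d)}_{L_1})^{\dag}(\rho)\,O_{z_n}]$ and compute $\partial_{\theta_k}C$. Since $Y(\thv)=W_0\prod_{k=1}^K e^{-i\theta_k H_k}W_k$, differentiating the conjugation $\YC(\thv)(\cdot)=Y(\thv)(\cdot)Y(\thv)^\dag$ with respect to $\theta_k$ brings down a factor $-i[H_k,\cdot]$ inserted at the appropriate point in the product; concretely, $\partial_{\theta_k}C = -i\,\Tr\big[\WC^{(n_d)}_{L_2}\big(\,[H_k^{(k)},\, Y_{<k}\,(\WC^{(n_d)}_{L_1})^{\dag}(\rho)\,Y_{<k}^\dag]\,\big)\,O_{z_n}\big]$ up to a suitable grouping of the fixed unitaries $W_j$ and generators $H_j$ into $Y_{<k}$ (and $H_k^{(k)}$ the conjugated generator), using that each piece is unitary and can be absorbed.

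Next I would pass to the Heisenberg picture on the $L_2$ side: move $\WC^{(n_d)}_{L_2}$ onto the observable, writing $\Tr[\WC^{(n_d)}_{L_2}(M)O_{z_n}] = \Tr[M\,(\WC^{(n_d)}_{L_2})^\dag(O_{z_n})]$, and then subtract the identity component. Because $\Tr[$commutator$\cdot\id]=0$, I can replace $O_{z_n}$ by $O_{z_n}-\id/2^n$ for free. Then apply the tracial Hölder inequality to split the trace into $\big\|[H_k^{(k)}, \,Y_{<k}(\WC^{(n_d)}_{L_1})^{\dag}(\rho)Y_{<k}^\dag]\big\|_1$ against $\big\|(\WC^{(n_d)}_{L_2})^\dag(O_{z_n}-\id/2^n)\big\|_\infty$ — or, better, keep things in Schatten-2 language: use $|\Tr[A B]|\le \|A\|_2\|B\|_2$ with $A$ the commutator term and $B=(\WC^{(n_d)}_{L_2})^\dag(O_{z_n}-\id/2^n)$, then bound $\|A\|_2 \le 2\|H_k\|_\infty\,\|Y_{<k}(\WC^{(n_d)}_{L_1})^{\dag}(\rho)Y_{<k}^\dag - \id/2^n\|_2$ via Lemma~\ref{lem:commutator} (noting $[H_k,\id/2^n]=0$ so the commutator is unchanged by recentering), and $\|B\|_2 = \sqrt{P((\WC^{(n_d)}_{L_2})^\dag(\dya{z_n})) - 1/2^n}$ via Lemma~\ref{lem:purity}. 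Unitary invariance of the $2$-norm removes the $Y_{<k}$ conjugation, leaving $\|(\WC^{(n_d)}_{L_1})^{\dag}(\rho) - \id/2^n\|_2 = \sqrt{P((\WC^{(n_d)}_{L_1})^{\dag}(\rho))-1/2^n}$; since $\rho$ is a computational basis state, Lemma~\ref{lem:conc-ladder-purity} applies on both sides and gives factors $(1-p)^{\Gamma_{n,n_d,L_1}}$ and $(1-p)^{\Gamma_{n,n_d,L_2}}$. The two factors of $\tfrac12$ and $2$ from the commutator bound and from the purity bound ($P(\dya{z_n})-1/2^n = (2^n-1)/2^n \le 1$) cancel to give the clean constant $\|H_k\|_\infty$ in Eq.~\eqref{eq:ladder-grad}.

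The main subtlety — not a deep obstacle but the step requiring the most care — is the bookkeeping in the differentiation: making precise how $\partial_{\theta_k}$ of the channel $\YC(\thv)$ produces exactly one commutator $[H_k,\cdot]$ flanked by unitaries, and verifying that all the fixed $W_j$'s and the other exponential factors can be grouped into unitary channels that either get absorbed into $\WC^{(n_d)}_{L_1}$-side (hitting $\rho$, harmless since we only need $\rho$ to be a basis state up to the point where $\WC^{(n_d)\dag}_{L_1}$ acts — here I should be slightly careful about the order, since in the cost function it is $(\WC^{(n_d)}_{L_1})^{\dag}$ that acts first on $\rho$, so in the Heisenberg/reverse reading the relevant purity is of $(\WC^{(n_d)}_{L_1})^{\dag}(\rho)$, matching Lemma~\ref{lem:conc-ladder-purity} directly) or into the $L_2$-side observable. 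Once that grouping is written down, the estimate is a routine chain of Hölder, Lemma~\ref{lem:commutator}, Lemma~\ref{lem:purity}, and Lemma~\ref{lem:conc-ladder-purity}, exactly parallel to the proof of Proposition~\ref{prop:appdx-conc-ladder-cost}. Proposition~\ref{prop:appdx-ladder-grad} then follows, and Proposition~\ref{prop:entangleUV-grad} is proved identically with Lemma~\ref{lem:entangleUV-purity} in place of Lemma~\ref{lem:conc-ladder-purity} and $\BC_{m_i}$, $\TC_i$ in place of the CNOT-ladder channels.
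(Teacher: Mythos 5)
Your proof follows essentially the same route as the paper's: differentiate $\YC(\thv)$ to produce a single commutator $-i[H_k,\cdot]$ flanked by unitaries, split the trace with the tracial H\"older inequality into two Schatten-2 norms, and bound each factor using Lemma~\ref{lem:commutator}, Lemma~\ref{lem:purity} and Lemma~\ref{lem:conc-ladder-purity}, which produces the factors $(1-p)^{\Gamma_{n,n_d,L_1}}$ and $(1-p)^{\Gamma_{n,n_d,L_2}}$ exactly as in the paper. In one respect you are more careful than the printed proof: explicitly recentering the observable to $O_{z_n}-\id/2^n$ (legitimate because the commutator is traceless, hence orthogonal to the identity) is what makes the $L_2$-side estimate valid, since Lemma~\ref{lem:conc-ladder-purity} only controls $P(\cdot)-1/2^n$ and the identity component of $O_{z_n}$ does not decay under the noise. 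The one slip is your closing claim that ``factors of $1/2$ and $2$ cancel'': the generator enters as $e^{-i\theta_k H_k}$, so no factor $1/2$ ever appears, and Lemma~\ref{lem:commutator} genuinely carries the factor $2$; your chain therefore yields $2\,(1-2^{-n})\,\|H_k\|_\infty(1-p)^{\Gamma_{n,n_d,L_1}+\Gamma_{n,n_d,L_2}}$, i.e.\ the stated bound only up to a constant prefactor of at most $2$. This does not affect the exponential scaling in $L_1,L_2$ and $n_d$ (and the paper's own proof reaches the constant $1$ only by quoting Lemma~\ref{lem:commutator} without its factor of $2$), but as written the cancellation argument is not correct and should either be dropped (accepting the factor $\leq 2$) or replaced by a sharper bound on the commutator term.
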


\begin{proof}
We have 
\begin{align}\label{eq:ladder-grad-master}
    |\partial_{\theta_k} C| &= \left|\Tr\left[\WC^{(n_d)}_{L_2}\circ\YC_{-}(\thv)\partial_{\theta_k}\YC_{+}(\thv)\circ(\WC^{(n_d)}_{L_1})^{\dag}(\rho)O_{z_n}\right]\right| \\
    &\leq \left\| \YC^{\dag}_{k-}(\thv)\circ(\WC^{(n_d)}_{L_2})^{\dag}(O_{z_n}) \right\|_2 \left\| \partial_{\theta_k}\YC_{k+}(\thv)\circ(\WC^{(n_d)}_{L_1})^{\dag}(\rho) \right\|_2\,,
\end{align}
where the inequality is due to H\"older's tracial matrix inequality \cite{baumgartner2011inequality}, and we denote $\YC^{\dag}_{k-}(\thv)$ and $\YC^{\dag}_{k+}(\thv)$ as the channels corresponding to unitary operators 
\begin{equation}
    Y_{k-}(\thv) = W_0 \prod_{j=1}^k e^{-i \theta_{j} H_{j}} W_{j}\,,\quad Y_{k+}(\thv) =  \prod_{j=k}^K e^{-i \theta_{j} H_{j}} W_{j}\,,
\end{equation}
respectively. The first term in Eq.~\eqref{eq:ladder-grad-master} can be bounded by considering the unitary invariance of Schatten norms and by using Lemma \ref{lem:conc-ladder-purity} to obtain
\begin{align}
    \| \YC^{\dag}_{-}(\thv)\circ(\WC^{(n_d)}_{L_2})^{\dag}(O_{z_n}) \|_2 &\leq (1-p)^{\Gamma_{n,n_d,L_2}} \sqrt{P(O_{z_n})} \\
    &=(1-p)^{\Gamma_{n,n_d,L_2}}\,, \label{eq:ladder-grad-pt1}
\end{align}
where in the last line we have used the fact that the purity of pure states is equal to $1$. The second term in Eq.~\eqref{eq:ladder-grad-master} can be bounded as 
\begin{align}
    \left\| \partial_{\theta_k}\YC_{k+}(\thv)\circ(\WC^{(n_d)}_{L_1})^{\dag}(\rho) \right\|_2 &=  \left\| \left[H_k,\, \YC_{k+}(\thv)\circ(\WC^{(n_d)}_{L_1})^{\dag}(\rho)\right] \right\|_2 \\
    & \leq \| H_k \|_\infty \left\| \YC_{k+}(\thv)\circ(\WC^{(n_d)}_{L_1})^{\dag}(\rho) \right\|_2 \\
    & \leq \| H_k \|_\infty (1-p)^{\Gamma_{n,n_d,L_1}}\,,\label{eq:ladder-grad-pt2}
\end{align}
where the equality comes by directly evaluating the partial derivative, the first inequality comes from the application of Lemma \ref{lem:commutator}, and the second inequality is an application of Lemma \ref{lem:conc-ladder-purity}. Substituting Eqs.~\eqref{eq:ladder-grad-pt1} and \eqref{eq:ladder-grad-pt2} into Eq.~\eqref{eq:ladder-grad-master} we obtain the gradient scaling result Eq.~\eqref{eq:ladder-grad} as desired.
\end{proof}

Proposition \ref{prop:appdx-ladder-grad} shows that the partial derivatives with respect to any parameter $\theta_k$ exponentially vanishes in the circuit depth. This implies that so long as $\|H_k\|_\infty$ grows at most polynomially in the number of qubits $n$, the circuit in Figure~\ref{fig:ladder_grad} has an NIBP for linear circuit depths.

We can similarly modify the circuit in Figure \ref{fig:entangleUV} to include the trainable unitary $Y(\thv)$ and entangling gates at the start of the circuit. We present this circuit in Figure \ref{fig:entangle_UV_grad}. In the following proposition, we show that partial derivatives of parameters in this unitary also exponentially vanish with increasing depth.

\begin{figure}[t]
    \begin{center}
	\includegraphics[width= .8 \columnwidth]{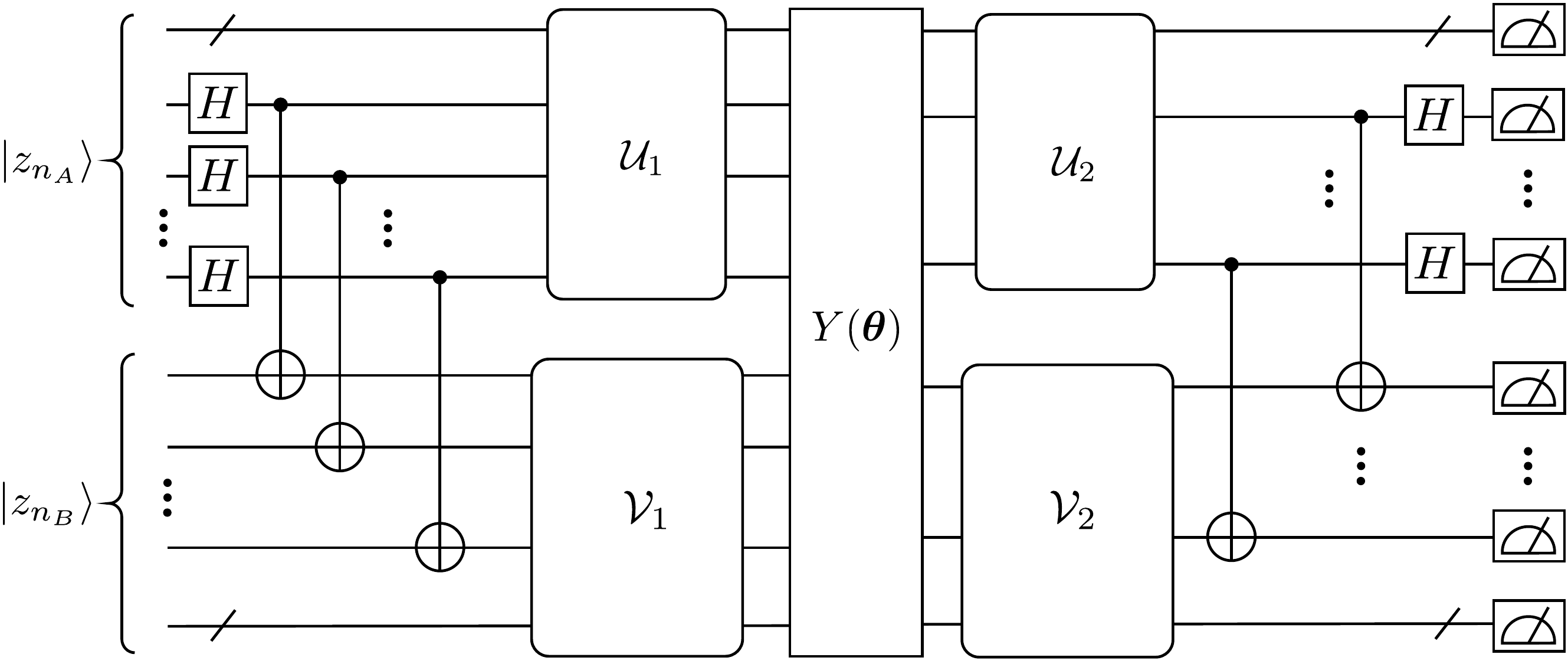}
    \end{center}
	\caption{\label{fig:entangle_UV_grad} Modification to the circuit in Figure \ref{fig:entangleUV} to include trainable unitary $Y(\thv)$. }
\end{figure}

\begin{proposition}
Consider a cost function corresponding to the circuit in Figure \ref{fig:entangle_UV_grad} which we denote as $C(\thv)=\Tr[\BC^{\dag}_{m'}\circ\TC_2\circ\YC(\thv)\circ\TC_1\circ\BC_m(\rho)O_{z_n}]$, where $\rho$ is a computational basis input state, $\BC_{m_i}$ is an entangling operation between ${m_i}$ pairs of qubits in subsystems $A_i$ and $B_i$ as considered in Proposition \ref{prop:conc-UV-cost}, $\TC_i$ denotes noisy unitary evolution separable across the cut $A_i|B_i$, and with computational basis measurement $O_{z_n}=\dya{z_n}$; $z_n \in \{0,1\}^n$. The partial derivative with respect to parameter $\theta_k$ is bounded as
\begin{equation}\label{eq:entangleUV-grad}
    |\partial_{\theta_k} C| \leq (1-p)^{L_{\TC_1}+L_{\TC_2}}\|H_k\|_\infty\,,
\end{equation}
where we denote
\begin{align}
    L_{\TC_i} = 
    \begin{cases}
    {L_{\UC_i}}\,,& \text{for}\ \TC_i = \widetilde{\UC}_i \otimes \VC_i\,, \\
    {L_{\VC_i}}\,,& \text{for}\ \TC_i = {\UC_i} \otimes \widetilde{\VC}_i\,, \\
    {L_{\UC_i}+L_{\VC_i}}\,,& \text{for}\ \TC_i = \widetilde{\UC}_i \otimes \widetilde{\VC}_i \,,
    \end{cases}     
\end{align}
for $i \in \{1,2\}$, and $L_{\UC_i}$ and $L_{\VC_i}$ are defined in the same manner as $L_{\UC}$ and $L_{\VC}$ in Section \ref{sec:entangleUV}.
\end{proposition}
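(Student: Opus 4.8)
The plan is to mirror the proof of Proposition~\ref{prop:appdx-ladder-grad} almost verbatim, the only substantive change being that every invocation of Lemma~\ref{lem:conc-ladder-purity} is replaced by one of Lemma~\ref{lem:entangleUV-purity}. First I would split the trainable unitary as a composition of channels, $\YC(\thv)=\YC_{k-}\circ\YC_{k+}(\thv)$, where $\YC_{k+}(\thv)$ collects the channel $\rho\mapsto e^{-i\theta_k H_k}\rho\,e^{i\theta_k H_k}$ together with everything applied before it, and $\YC_{k-}$ collects everything applied after it (both are unitary channels). Since $\partial_{\theta_k}\YC_{k+}(\thv)(\cdot)=-i\big[H_k,\ \YC_{k+}(\thv)(\cdot)\big]$, differentiating $C(\thv)$ and passing to the Heisenberg picture gives
\begin{equation*}
    \partial_{\theta_k} C = -i\,\Tr\!\Big[\big[H_k,\ \YC_{k+}(\thv)\circ\TC_1\circ\BC_{m}(\rho)\big]\ \YC_{k-}^{\dag}\circ\TC_2^{\dag}\circ\BC_{m'}(O_{z_n})\Big]\,,
\end{equation*}
and the tracial H\"older inequality (with both Schatten exponents equal to $2$) bounds $|\partial_{\theta_k}C|$ by the product of the ``input-side'' factor $\big\|\big[H_k,\ \YC_{k+}(\thv)\circ\TC_1\circ\BC_{m}(\rho)\big]\big\|_2$ and the ``measurement-side'' factor $\big\|\YC_{k-}^{\dag}\circ\TC_2^{\dag}\circ\BC_{m'}(O_{z_n})\big\|_2$.

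For the measurement-side factor I would use unitary invariance of the Schatten $2$-norm to discard $\YC_{k-}^{\dag}$, and then apply Lemma~\ref{lem:entangleUV-purity} to the circuit $\TC_2^{\dag}\circ\BC_{m'}$ read in the Heisenberg picture. This is legitimate: $\BC_{m'}^{\dag}$ is again a layer of Hadamards and CNOTs and therefore realizes the same Bell-type Pauli map of Eqs.~\eqref{eq:Bmap1}--\eqref{eq:Bmap4} (these gates are involutions) on which the proof of Lemma~\ref{lem:entangleUV-purity} rests, and $\TC_2^{\dag}$ is again a unitary evolution separable across the cut $A_2|B_2$ with local depolarizing noise interleaved at the same depth $L_{\TC_2}$, since $\DC_p$ is self-adjoint and the fixed unitaries are invertible. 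Using that a constant term in the second slot of a commutator contributes nothing to the trace, I can replace $O_{z_n}$ by its traceless part $O_{z_n}-\id/2^n$ and push this through the (unital) channels, so Lemma~\ref{lem:entangleUV-purity} yields a factor $(1-p)^{L_{\TC_2}}$ after using $P(O_{z_n})=1$.

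For the input-side factor I would evaluate the derivative explicitly, obtaining the commutator above, apply Lemma~\ref{lem:commutator} to extract $\|H_k\|_\infty$, and again use unitary invariance to strip $\YC_{k+}(\thv)$; then Lemma~\ref{lem:entangleUV-purity} applied to $\TC_1\circ\BC_{m}(\rho)$ — with $\rho$ a computational basis state so that $P(\rho)=1$ — contributes a factor $(1-p)^{L_{\TC_1}}$. Multiplying the two factors, and carrying out the same overall manipulations as in the proof of Proposition~\ref{prop:appdx-ladder-grad}, produces the bound $|\partial_{\theta_k}C|\leq (1-p)^{L_{\TC_1}+L_{\TC_2}}\|H_k\|_\infty$ of Eq.~\eqref{eq:entangleUV-grad}, with $L_{\TC_i}$ as specified for each of the three noise placements.

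The bulk of the argument is bookkeeping; the one point that genuinely needs checking — and which I expect to be the main obstacle — is that Lemma~\ref{lem:entangleUV-purity} really does apply to the \emph{reversed} segment $\TC_2^{\dag}\circ\BC_{m'}$ acting on the observable, and to $\TC_1\circ\BC_{m}$ now sandwiched between two entangling layers with the trainable unitary in between, rather than only to the specific forward circuit of Figure~\ref{fig:entangle_UV_heisenberg}. Concretely one must verify that $\BC^{\dag}$ at the end and $\BC$ at the start still force Pauli strings into the product form $S_A\otimes S_B$ with $S_A=\id^{\otimes n_A}$ if and only if $S_B=\id^{\otimes n_B}$ across the relevant cuts, so that the subsystems can be treated independently under $\TC_i$; this is exactly the Heisenberg-picture symmetry already flagged below Proposition~\ref{prop:conc-UV-cost}. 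A secondary, purely cosmetic point is tracking the traceless-part replacement (together with the constant from Lemma~\ref{lem:commutator}) so that the $P(\cdot)-2^{-n}$ form of Lemma~\ref{lem:entangleUV-purity} lines up with the norms appearing here, exactly as in Proposition~\ref{prop:appdx-ladder-grad}.
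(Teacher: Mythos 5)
Your proposal is correct and follows essentially the same route as the paper's own proof: split $\YC(\thv)$ into $\YC_{k-}\circ\YC_{k+}(\thv)$, apply the tracial H\"older inequality to separate a measurement-side and an input-side Schatten 2-norm factor, bound the former via Lemma~\ref{lem:entangleUV-purity} applied to the adjoint (reversed) segment and the latter via Lemma~\ref{lem:commutator} followed by Lemma~\ref{lem:entangleUV-purity}. The subtlety you flag about the reversed circuit is handled in the paper exactly as you suggest (the purity lemma applies equally to the adjoint of $\TC_2$ composed with the Bell-type entangling map), and your explicit traceless-part bookkeeping is a slightly more careful rendering of the same argument.
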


\begin{proof}
Similar to the proof of Proposition \ref{prop:ladder-grad}, we have 
\begin{align}\label{eq:entangleUV-grad-master}
    |\partial_{\theta_k} C| &= \left|\Tr\left[\BC^{\dag}_{m'}\circ\TC_2\circ\YC_{-}(\thv)\partial_{\theta_k}\YC_{+}\circ\TC_1\circ\BC_m(\rho)O_{z_n}\right]\right| \\
    &\leq \left\| \YC^{\dag}_{k-}(\thv)\circ\TC_2^{\dag}\circ\BC_m(O_{z_n}) \right\|_2 \big\| \partial_{\theta_k}\YC_{k+}(\thv)\circ\TC_1\circ\BC_m(\rho) \big\|_2\,,
\end{align}
where the inequality is due to H\"older's tracial matrix inequality \cite{baumgartner2011inequality}. The first term is bounded as 
\begin{align}
    \left\| \YC^{\dag}_{k-}(\thv)\circ(\WC^{(n_d)}_{L_2})^{\dag}(O_{z_n}) \right\|_2 \leq (1-p)^{\TC_2} \,,\label{eq:entangleUV-grad-pt1}
\end{align}
where we have used Lemma \ref{lem:entangleUV-purity}, and the fact that the lemma gives the same result if we consider the adjoint map of $\TC_2$. As in the proof of Proposition \ref{prop:ladder-grad}, the second term in Eq.~\ref{eq:entangleUV-grad-master} is bounded as 
\begin{align}
    \big\| \partial_{\theta_k}\YC_{k+}(\thv)\circ\TC_1\circ\BC_m(\rho) \big\|_2 &=  \left\| \big[H_k,\, \YC_{k+}(\thv)\circ\TC_1\circ\BC_m(\rho)\big] \right\|_2 \\
    & \leq \| H_k \|_\infty \left\| \YC_{k+}(\thv)\circ\TC_1\circ\BC_m(\rho) \right\|_2 \\
    & \leq \| H_k \|_\infty (1-p)^{\TC_1}\,,\label{eq:entangleUV-grad-pt2}
\end{align}
where the equality comes by directly evaluating the partial derivative, the first inequality comes from the application of Lemma \ref{lem:commutator}, and the second inequality is an application of Lemma \ref{lem:entangleUV-purity}. Substituting Eqs.~\eqref{eq:entangleUV-grad-pt1} and \eqref{eq:entangleUV-grad-pt2} into Eq.~\eqref{eq:entangleUV-grad-master} we obtain the gradient scaling result Eq.~\eqref{eq:entangleUV-grad} as required.
\end{proof}

\section{Numerical gradient computation for the HVA ansatz} 
\label{app:shift}
To compute the gradient with respect to  $\theta_{k}$ corresponding to   $e^{-i\theta_{k} H_{XX}}$ in the ansatz~(\ref{eq:HVA}) we introduce "dummy" variables $\tilde{\theta}_{k,1}, \tilde\theta_{k,2}, \dots \tilde\theta_{k,n}$ such that
\begin{equation}
e^{-i\theta_k H_{XX}} = e^{-i\tilde\theta_{k,1} X_1 X_2} e^{-i\tilde\theta_{k,2} X_2 X_3}  \dots e^{-i\tilde\theta_{k,N} X_n X_1}|_{\tilde\theta_{k,1} = \tilde\theta_{k,2} =  \dots = \tilde\theta_{k,n} = \theta_k}.
\end{equation}
We have
\begin{equation}
\partial_{\theta_k}C(\theta_1, \dots,\theta_{k-1},\theta_k, \theta_{k+1}, \dots, \theta_{2L}  ) = \sum_{i=1,\dots,n}\partial_{\tilde\theta_{k,i}}   C(\theta_1, \dots, \theta_{k-1} , \tilde\theta_{k,1}, \tilde\theta_{k,2}, \dots, \tilde\theta_{k,n}, \theta_{k+1}, \dots  ,\theta_{2L})  |_{\tilde\theta_{k,1} = \tilde\theta_{k,2} =  \dots = \tilde\theta_{k,n} = \theta_k }. 
\end{equation}
Using a parameter shift rule for Pauli strings~\cite{schuld2019evaluating} leads to
\begin{equation}
\begin{split}
& \partial_{\tilde\theta_{k,1}}   C(\theta_1, \dots, \theta_{k-1} , \tilde\theta_{k,1}, \tilde\theta_{k,2}, \dots, \tilde\theta_{k,n}, \theta_{k+1}, \dots  ,\theta_{2L})  |_{\tilde\theta_{k,1} = \tilde\theta_{k,2} =  \dots = \tilde\theta_{k,n} = \theta_k} =  \\
 & C(\theta_1, \dots, \theta_{k-1} , \theta_{k}+\pi/4, \theta_{k}, \dots, \theta_{k}, \theta_{k+1}, \dots  ,\theta_{2L}) - C(\theta_1, \dots, \theta_{k-1} , \theta_{k}-\pi/4, \theta_{k}, \dots, \theta_{k}, \theta_{k+1}, \dots  ,\theta_{2L}).  
\end{split}
\end{equation}
Analogously we compute $\tilde\theta_{k,2},\dots,\tilde\theta_{k,n}$. We treat similarly $\partial_{\theta_l}C$ corresponding to  $e^{-i\theta_{l} H_Z}$ in the ansatz~(\ref{eq:HVA}).

\section{Noise models} 
\label{app:noise}
In our numerical studies, we use the local depolarizing  noise model and a noise model of a near-term  trapped-ion quantum computer developed by Trout \textit{et al. }\cite{trout2018simulating}. We assume single-qubit rotations  $R_X(\theta)=e^{-i \theta/2 X} , R_Y(\theta)= e^{-i \theta/2 Y},   R_Z(\theta)= e^{-i \theta/2 Z},$ and the two-qubit Molmer-S{\o}rens{\o}n gate  $XX(\theta)\equiv e^{-i\theta X_i\otimes X_j}$ as native gates of the quantum computer. Here, $X$, $Y$, and $Z$ are Pauli matrices.

\subsection{Local depolarizing noise}
\label{app:noise_depol}
The local depolarizing noise channel with the error rate $p$ is 
\begin{equation}
    \mathcal D_p: \rho\mapsto (1-p)\rho+ \frac{p}{3} ( X^\dagger\rho X+ Y^\dagger\rho Y+ Z^\dagger\rho Z).
    \label{eq:depol2}
\end{equation}
This is consistent with the definition in~\ref{eq:depol} and is included here for the reader's convenience. We choose $p=2.425\times10^{-3}$ for simulations of the clean and dirty setup and $p=2.425\cdot10^{-3} f$ where $f=1/n,2/n,\dots,(n-1)/n$ for the simulations with all dirty qubits and variable error rates.  The error channels  are applied to the dirty qubits after each layer of non-parallelizable gates as shown in  Figure~\ref{fig:depol_noise}.   
\begin{figure}[t]
    \centering
    \begin{quantikz}
    \lstick[wires=2]{Noisy}     & \gate[style={},wires=2]{XX(\theta_i)} & \gate[wires=1,style={rounded corners}]{\mathcal D_p}   & \gate[style={dashed}, wires=1]{XX(\theta_i)}\gategroup[4,steps=1,style={dashed,rounded corners, inner xsep=2pt},background]{{PBC}}&\gate[wires=1,style={rounded corners}]{\mathcal D_p}   & \gate[wires=1]{R_Z(\theta_{i+1})} \qw &\gate[wires=1,style={rounded corners}]{\mathcal D_p} & \qw    \\
                                    & \qw                                              & \gate[wires=1,style={rounded corners}]{\mathcal D_p}   & \gate[style={}, wires=2]{XX(\theta_i)}                                                                                 &\gate[wires=1,style={rounded corners}]{\mathcal D_p}   & \gate[wires=1]{R_Z(\theta_{i+1})} \qw &\gate[wires=1,style={rounded corners}]{\mathcal D_p} & \qw    \\
    \lstick[wires=2]{Clean}         & \gate[style={},wires=2]{XX(\theta_i)}& \qw                            & \qw                                                                                                                                                        &\qw                                                    & \gate[wires=1]{R_Z(\theta_{i+1})} \qw &\qw& \qw    \\
                                    & \qw                                              & \qw                            & \gate[style={dashed}, wires=1]{XX(\theta_i)}                                                                                                          &\qw                                                    & \gate[wires=1]{R_Z(\theta_{i+1})} \qw &\qw& \qw   
    \end{quantikz}
    
    \caption{
        \textbf{Local depolarizing noise model.} A 4-qubit HVA circuit with the local depolarizing noise and 2 dirty qubits. The depolarizing noise channel $\mathcal D_p$ is applied to all the dirty qubits after each layer of non-parallelizable gates.  
        }\label{fig:depol_noise}
\end{figure}
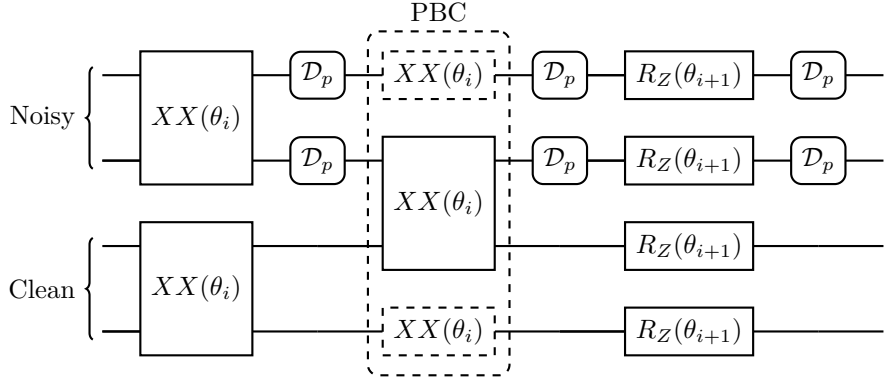

\subsection{Trapped-ion noise model.}
\label{app:noise_real}

The realistic noise model is based on  Trout \textit{et al. }\cite{trout2018simulating}. The noisy quantum  gates $R_X$, $R_Y$, $R_Z$ and $XX$ are defined with noise channels  $\mathcal S_X$, $\mathcal S_Y$, $\mathcal S_Z$ and $\mathcal C$, respectively:
\begin{align}
    \mathcal S_X = &\ \mathcal W_{p_d}\circ\mathcal D_{p_{\textrm{dep}}}\circ\mathcal R_{X}^{p_\alpha} \,,\nonumber \\
      \mathcal S_Y = &\ \mathcal W_{p_d}\circ\mathcal D_{p_{\textrm{dep}}}\circ\mathcal R_{Y}^{p_\alpha}\,,\\ \nonumber
       \mathcal S_Z = &\ \mathcal W_{p_d}\circ\mathcal D_{p_{\textrm{dep}}}\circ\mathcal R_{Z}^{p_\alpha}\,,\\ \nonumber
    \mathcal C = & \left[\mathcal W_1^{p_{d_1}}\otimes \mathcal W_2^{p_{d_2}} \right]
      \circ \left[\mathcal D_1^{p_{\textrm{dep}}}\otimes \mathcal D_2^{p_{\textrm{dep}}} \right]
      \circ\mathcal H_{p_\textrm{xx}}\circ\mathcal H_{p_h}.
\end{align}

Here, $\mathcal R_{X}^{p_\alpha}\rho=(1-p_\alpha)\rho+p_\alpha X \rho X$, $\mathcal R_{Y}^{p_\alpha}\rho=(1-p_\alpha)\rho+p_\alpha Y \rho Y$, $\mathcal R_{Z}^{p_\alpha}\rho=(1-p_\alpha)\rho+p_\alpha Z \rho Z$  are imprecisions of the angle of rotation about the axes $X,Y,Z$, respectively. $\mathcal D_{p_{\textrm{dep}}} $ is the local  depolarizing channel (\ref{eq:depol2}) with the error rate $p_{\textrm{dep}}$, $\mathcal W_{p_d}\rho=(1-p_d)\rho+Z\rho Z$ is the dephasing channel, and $\mathcal H_p=(1-p)\rho+X_1X_2\rho X_1X_2$ is a two qubit channel representing both imprecise rotation when $p=p_{\rm xx}$ and ion heating when $p=p_h$. The indices $1$, $2$ refer to two qubits at which acts $XX$. Furthermore, for an idling dirty qubit, we apply to it  $\mathcal D(p_{\textrm{idle}}) $.  In the computations, the error rates were chosen as
\begin{align}
p_d&= 1.5\cdot10^{-4} f, \quad   p_{\textrm{dep}}=  8.\cdot10^{-4} f, \nonumber  \\
p_{d_1}&=p_{d_2}=  7.5\cdot10^{-4} f, \quad   p_\alpha= 1.\cdot10^{-4} f ,  \\
p_{\rm xx}&= 1.\cdot10^{-3} f, \quad            p_h=  1.25 \cdot10^{-3} f, \nonumber 
\end{align}
with $f=1$ for the clean and dirty setup  and $f=1/n,2/n,\dots,(n-1)/n$ for all dirty qubits with the variable error rates. 

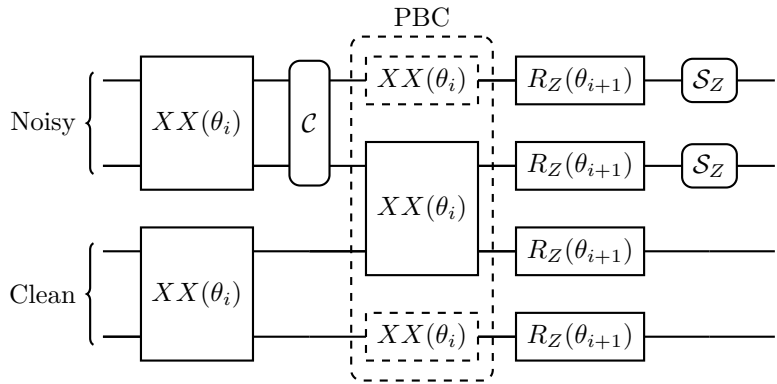
\begin{figure}[ht]
    \centering
    \begin{quantikz}
    \lstick[wires=2]{Noisy}     & \gate[style={},wires=2]{XX(\theta_i)} & \gate[wires=2,style={rounded corners}]{\mathcal C }  & \gate[style={dashed}, wires=1]{XX(\theta_i)}\gategroup[4,steps=1,style={dashed,rounded corners, inner xsep=2pt},background]{{PBC}}& \gate[wires=1]{R_Z(\theta_{i+1})} \qw &\gate[wires=1,style={rounded corners}]{\mathcal S_Z} & \qw    \\
                                    & \qw                                              & \qw   & \gate[style={}, wires=2]{XX(\theta_i)}                                                                                     & \gate[wires=1]{R_Z(\theta_{i+1})} \qw &\gate[wires=1,style={rounded corners}]{\mathcal S_Z} & \qw    \\
    \lstick[wires=2]{Clean}         & \gate[style={},wires=2]{XX(\theta_i)}& \qw                            & \qw                                                                                                                                    & \gate[wires=1]{R_Z(\theta_{i+1})} \qw &\qw& \qw    \\
                                    & \qw                                              & \qw                            & \gate[style={dashed}, wires=1]{XX(\theta_i)}                                                                                      & \gate[wires=1]{R_Z(\theta_{i+1})} \qw &\qw& \qw   
    \end{quantikz}
    
    \caption{
        \textbf{Trapped-ion, realistic noise model.} A 4-qubit HVA circuit with realistic, trapped-ion  noise and 2 dirty qubits.  When a two-qubit gate is applied between two dirty qubits, it is followed by a 2-qubit noise channel $\mathcal{C}$. All the $R_Z$ gates  acting at the dirty qubits are followed by their noise channels $\mathcal{S}_Z$.  If a two-qubit $XX$ gate acts in between a dirty and a clean qubit we assume that it is noiseless. 
        }\label{fig:realistic_noise}
\end{figure}

In the case of a $XX$ gate acting at a dirty and a clean qubit, we assume that it is a noiseless gate. In practice modeling an XX gate in between an error-corrected qubit and a dirty qubit  requires taking into account details of the  error correction code This is outside of the scope of this work, so we assume no error for such a gate. This assumption allows us   to upper bound performance of a quantum computer  with quantum error-corrected  qubits. Figure~\ref{fig:realistic_noise} shows how the noise channels are applied for our HVA simulations.

In the case of our HVA simulations, the choice of perfect XX gates in between dirty and clean qubits  causes  deviations from a  scaling of the gradient magnitude with the $n_d/n$ ratio. For  our choice of the  dirty qubits forming a contiguous block of qubits, the  number of the  noisy XX gates is $L(n_d-1)$ for $1 \le n_d \le n-1$ and $Ln_d$ for $n_d = 0,n$. Therefore,  a fraction of the noisy two-qubit gates in the circuit scales as $(n_d-1)/n$  for $1 \le n_d \le n-1$ causing the gradient magnitude to scale with $(n_d-1)/n$ rather than $n_d/n$  as can be seen in Figure~\ref{fig:RealScaled}.

\section{Additional numerical results} 
\label{app:additionalResults}

Here we show results for the gradient behavior in the case of  4 and 6-qubit HVA circuits simulated with  the realistic and depolarizing noise models. Figure~\ref{fig:numericalResultappendix} shows the mean absolute value of the derivative of the cost  function  with respect to a parameter $\overline{|\partial_{\theta_k} C|}$  averaged over both $28$ instances of random HVA  parameters and the HVA parameters.    plotted versus depth of the ansatz  for the depolarizing and realistic noise models. The behavior is very similar to that of the 8-qubit HVA shown in Figure~\ref{fig:numericalResultLayer}.

Figure~\ref{fig:numericalResultappendix2} displays   $\overline{|\partial_{\theta_k} C|}$ versus the total error  rate for the 4 and 6-qubit HVA circuits.  We find very good quality of the collapse similar to  the 8-qubit case shown in Figure~\ref{fig:numericalResultScaled}. We also  observe that for  the depolarizing noise  small deviations from the perfect collapse  are growing with increasing $n$. We leave further investigation of this phenomenon to future work.  

Figure~\ref{fig:low_depth} shows the behavior of $\overline{|\partial_{\theta_k} C|}$ for lower layer numbers $L=1-30$, $n=4,6,8,10$, and the realistic noise model. Again for each triple of $n,n_d,L$ values we use $28$ random HVA instances to estimate $\overline{|\partial_{\theta_k} C|}$.  Apart from the very low $L\le n$, the behavior of the derivative depends primarily on the $n_d/n$ ratio displaying a decay with an increasing $n_d$ similar to the larger $L$ case.  A different behavior observed for very small $L$ is not surprising as for such shallow circuits one can expect dependence of the impact of the errors to be strongly localized due to the causal cone effects, unlike for the deeper circuits. Furthermore, we see in that regime strong dependence of the results on $L$ that overshadows subtler $n_d$ effects. For $n_d >0$ and larger $L$,  the decay of the gradient with $L$ is approximately exponential, though this behavior is less clear than for larger $L$ due to seemingly random fluctuations superimposed on the dominant trend, which might be finite sample effects.

\begin{figure}
    \centering
    \subfloat[Subfigure 1][Depolarizing noise model - 4 Qubits]{
    \includegraphics[width=0.48\textwidth]{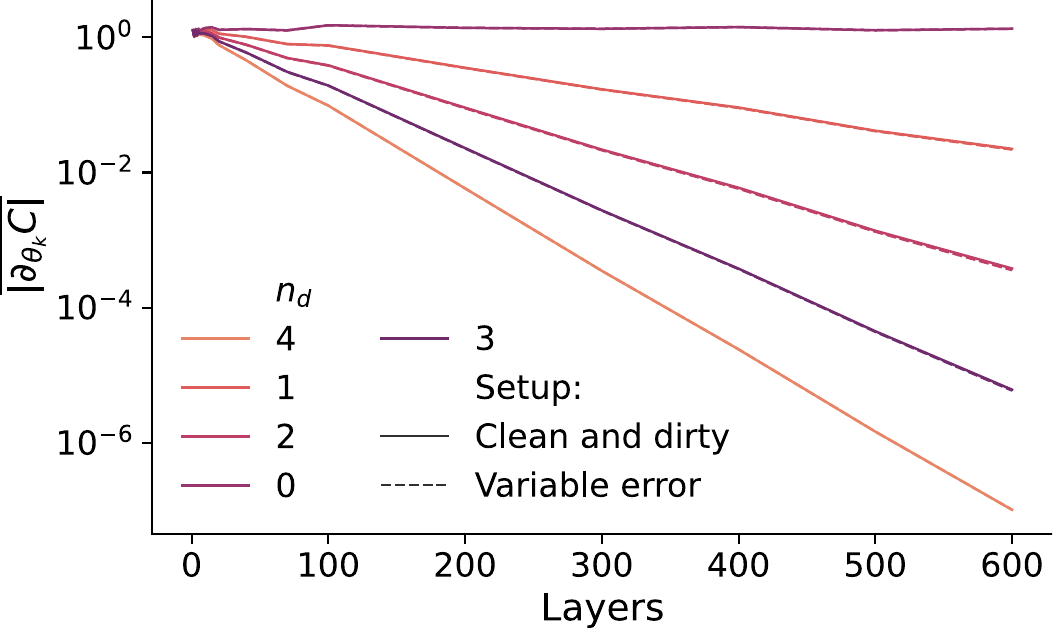}
    \label{fig:DepolLayer4}}
    \subfloat[Subfigure 2][Realistic noise model - 4 Qubits]{
    \includegraphics[width=0.48\textwidth]{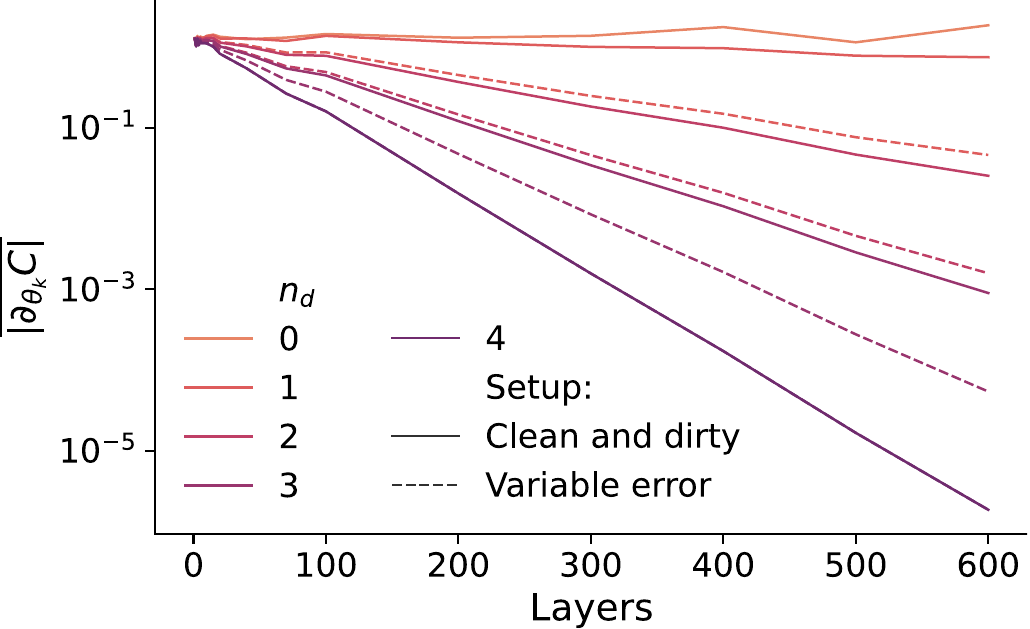}\label{fig:RealLayer4}}\hspace{0mm}
    \subfloat[Subfigure 3][Depolarizing noise model - 6 Qubits]{
    \includegraphics[width=0.48\textwidth]{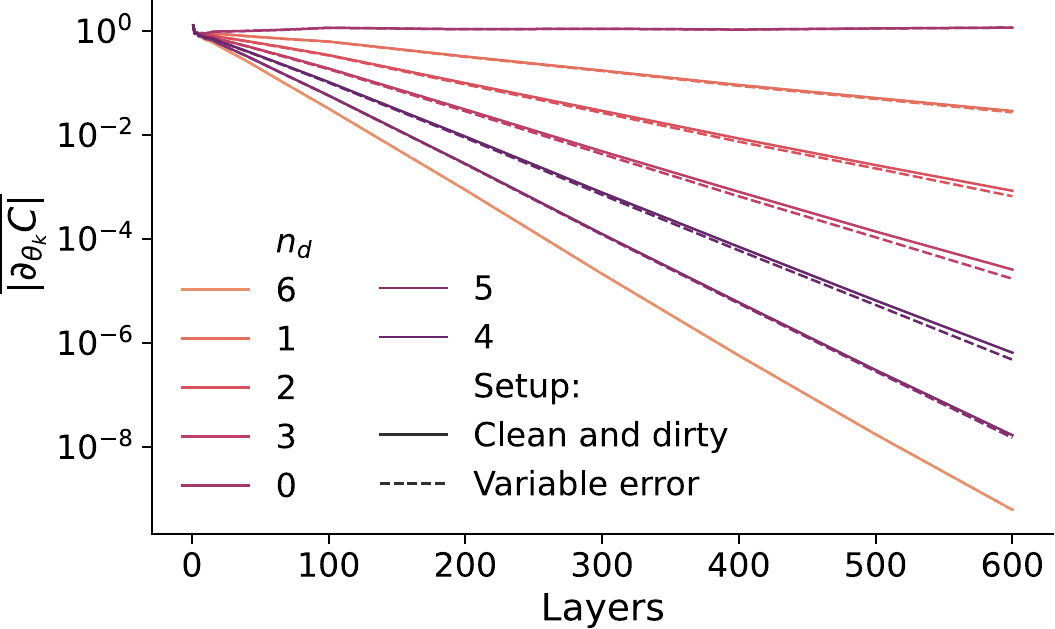}
    \label{fig:DepolLayer6}}
    \subfloat[Subfigure 4][Realistic noise model - 6 Qubits]{
    \includegraphics[width=0.48\textwidth]{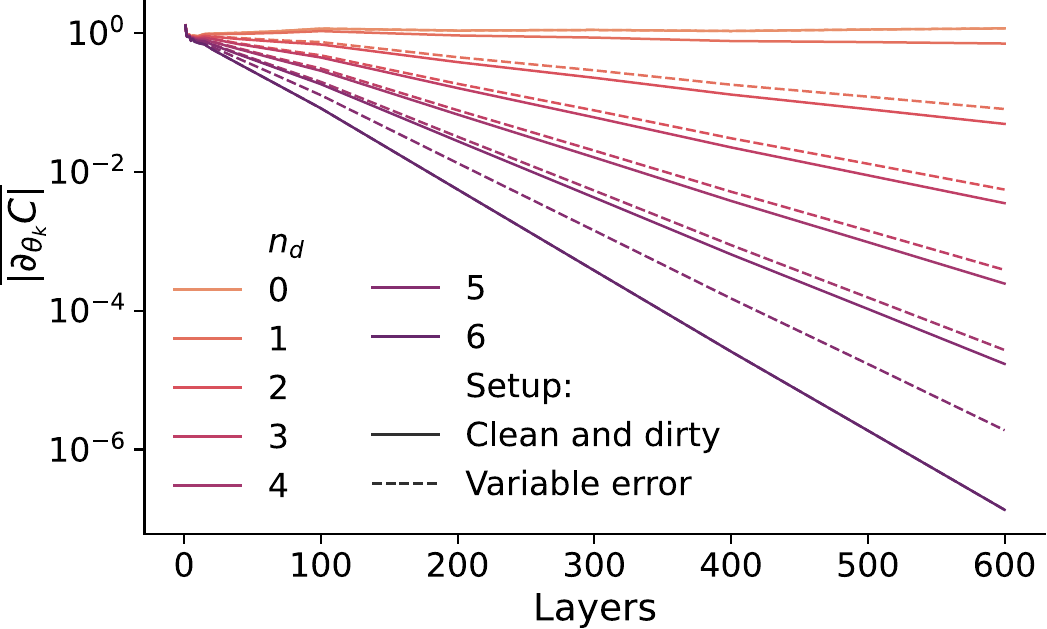}\label{fig:RealLayer6}}
    \caption{\textbf{HVA gradient scaling with respect to the ansatz depth.} Here we show the  mean absolute value of the partial derivative of the cost  function  with respect to a parameter $\overline{|\partial_{\theta_k} C|}$   for   4 (a,b) and 6-qubit (c,d)  systems.  We plot the results for both the clean and dirty and the  variable noise setups  in the same way as in Figure~\ref{fig:numericalResultLayer}. As for the 8-qubit case we find that $\overline{|\partial_{\theta_k} C|}$ decays exponentially with increasing $L$. Furthermore, we observe  that $\overline{|\partial_{\theta_k} C|}$ decays  exponentially with increasing $n_d$, similarly as for the  8-qubit HVA circuits. 
    }\label{fig:numericalResultappendix}
\end{figure}

\begin{figure}
    \centering
    \subfloat[Subfigure 1][Depolarizing noise model - 4 Qubits]{
    \includegraphics[width=0.48\textwidth]{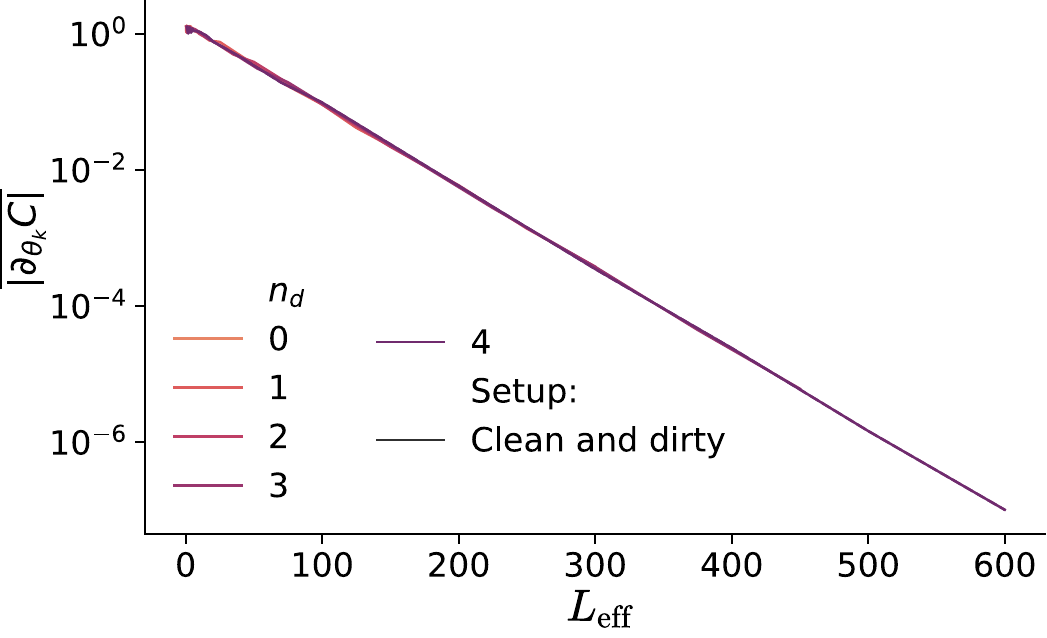}
    \label{fig:ScaledDepolLayer4}}
    \subfloat[Subfigure 2][Realistic noise model - 4 Qubits]{
    \includegraphics[width=0.48\textwidth]{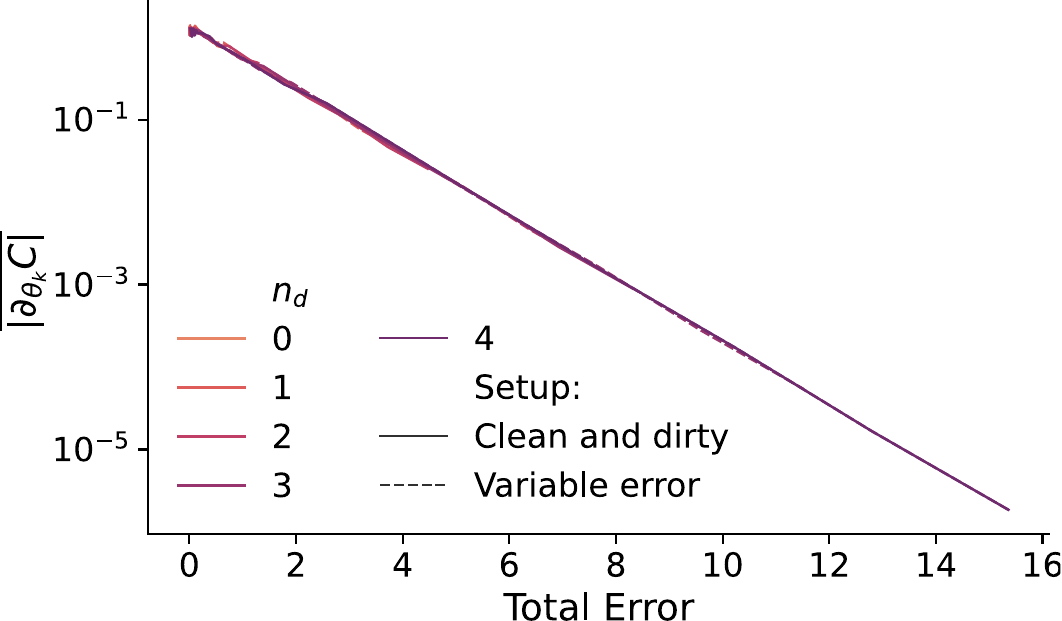}\label{fig:ScaledRealLayer4}}\hspace{0mm}
    \subfloat[Subfigure 3][Depolarizing noise model - 6 Qubits]{
    \includegraphics[width=0.48\textwidth]{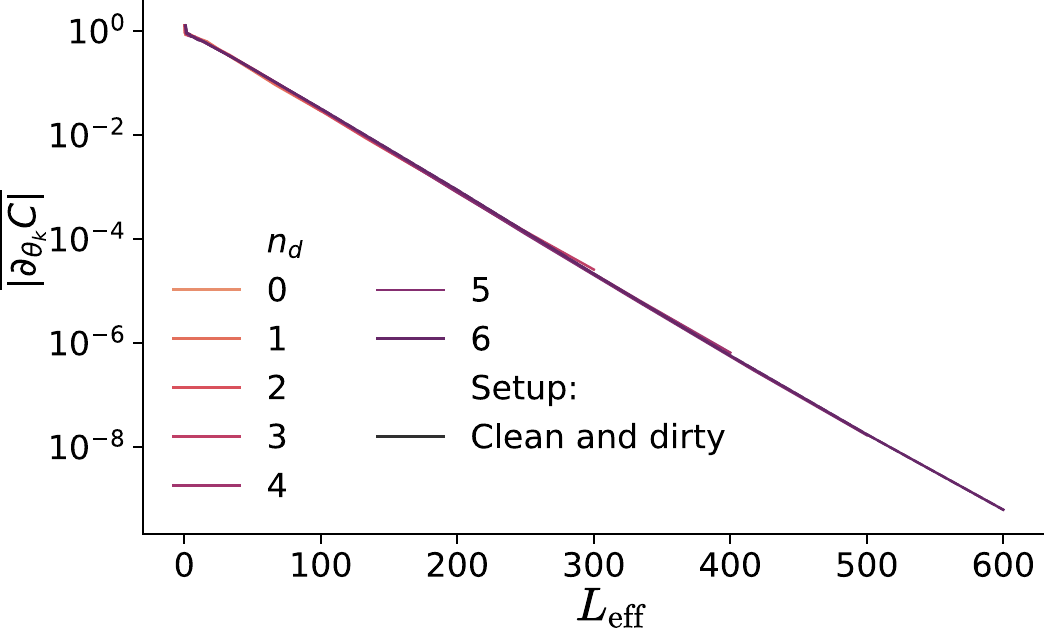}
    \label{fig:ScaledDepolLayer6}}
    \subfloat[Subfigure 4][Realistic noise model - 6 Qubits]{
    \includegraphics[width=0.48\textwidth]{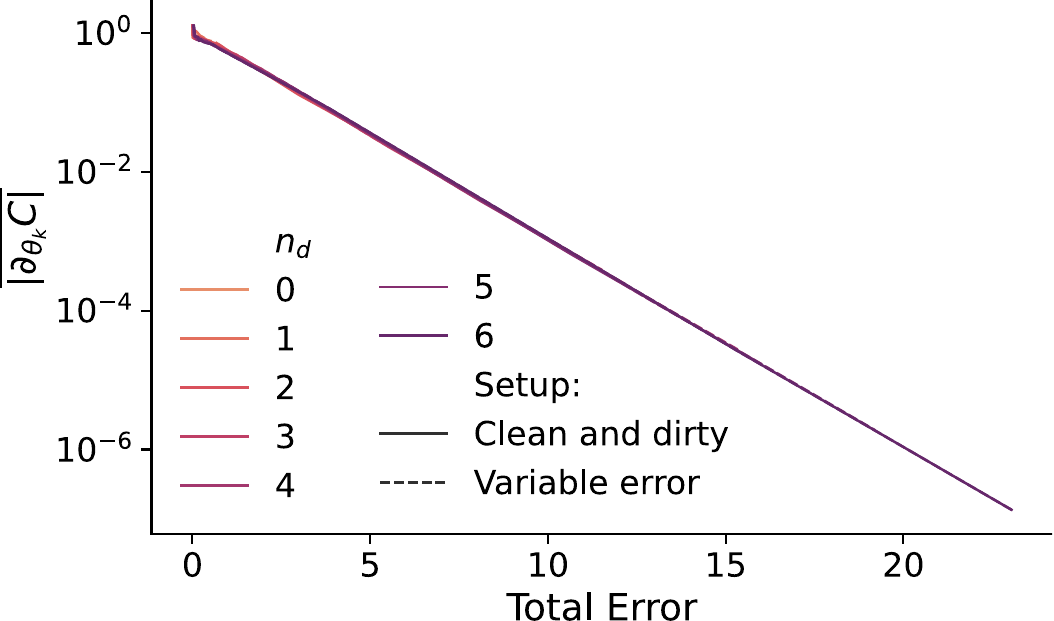}\label{fig:ScaledRealLayer6}}
    \caption{\textbf{HVA $\overline{|\partial_{\theta_k} C|}$ scaling with respect to  the total error rate.} We plot here the  mean absolute value of the partial derivative of the cost  function from Figure~\ref{fig:numericalResultappendix}  versus the total error rate. For  the local depolarizing noise  the total error rate is proportional to   $L_{\textrm{eff}}=L\frac{n_d}{n}$. As for the 8-qubit case (see Figure~\ref{fig:numericalResultScaled}) we observe very good collapse quality.
    }\label{fig:numericalResultappendix2}
\end{figure}

\begin{figure}
    \centering
    \subfloat[Subfigure 1][Realistic noise model - 4 Qubits]{
    \includegraphics[width=0.48\textwidth]{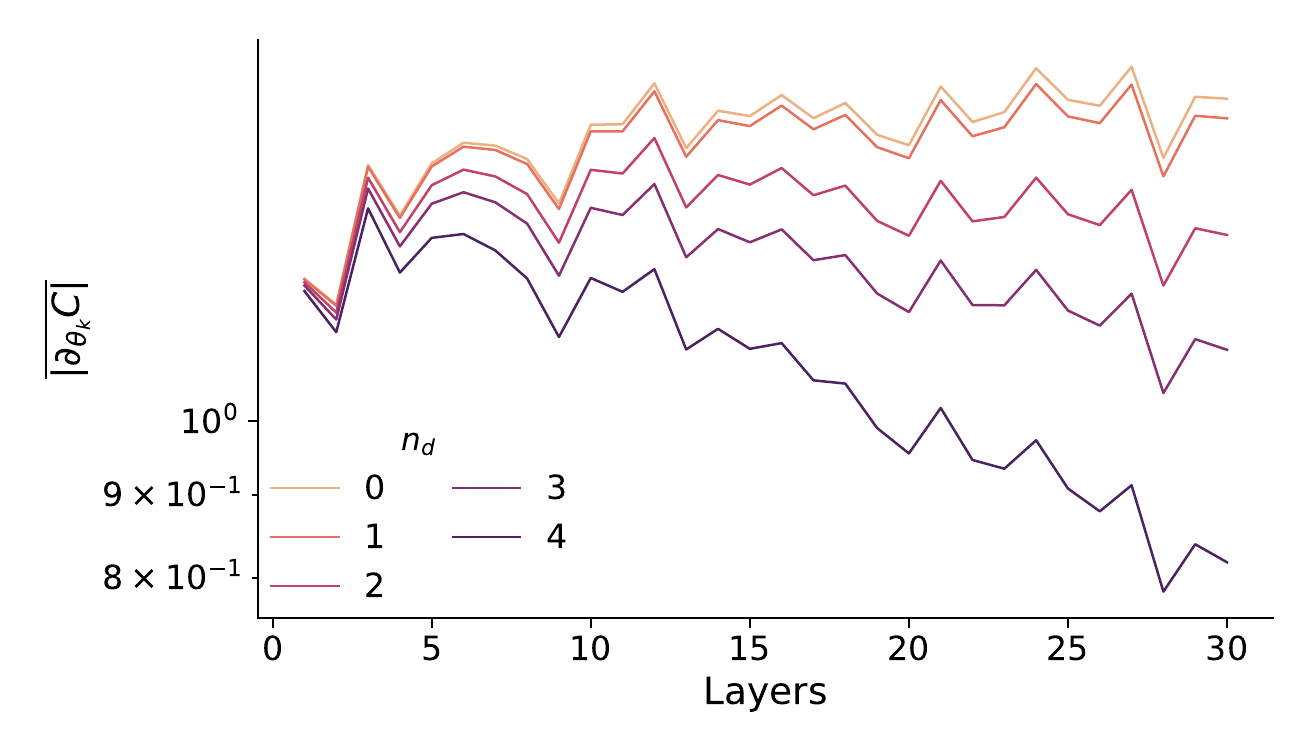}
    \label{fig:low4}}
    \subfloat[Subfigure 2][Realistic noise model - 6 Qubits]{
    \includegraphics[width=0.48\textwidth]{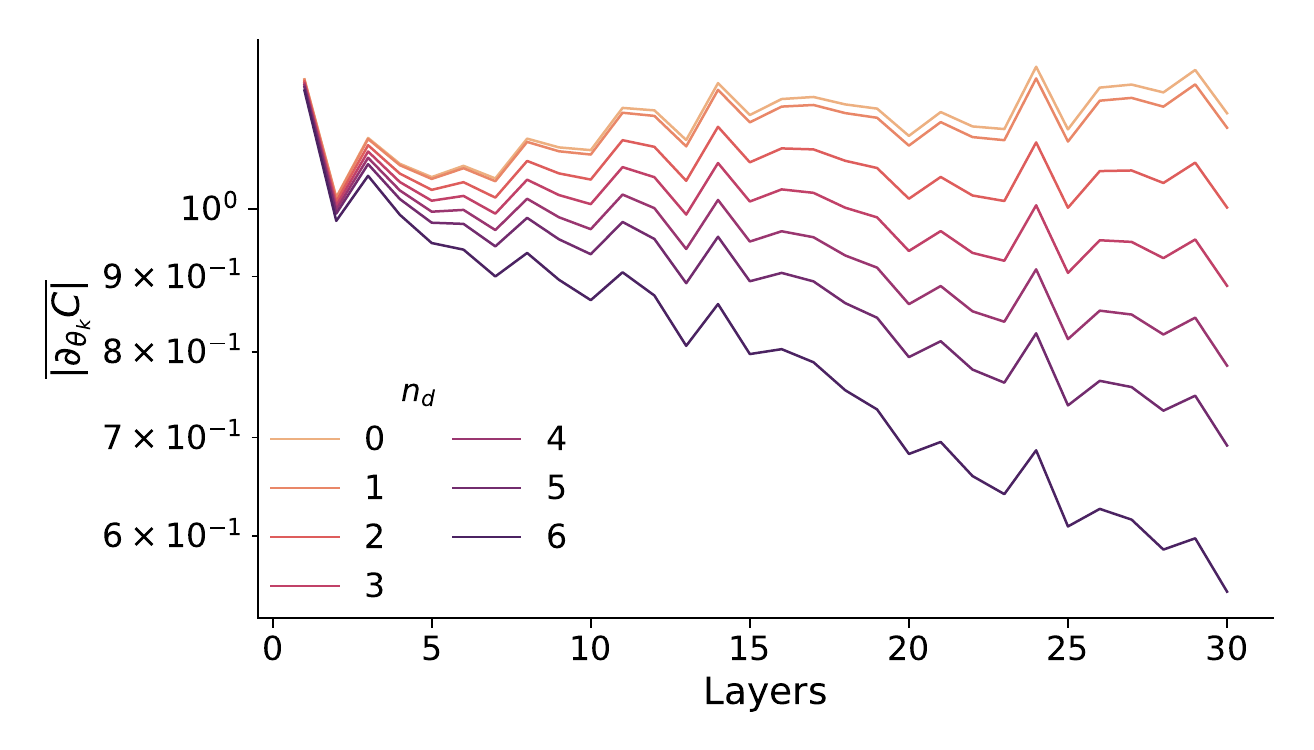}\label{fig:low6}}\hspace{0mm}
    \subfloat[Subfigure 3][Realistic noise model - 8 Qubits]{
    \includegraphics[width=0.48\textwidth]{figures/final_review/6.pdf}
    \label{fig:low8}}
    \subfloat[Subfigure 4][Realistic noise model - 10 Qubits]{
    \includegraphics[width=0.48\textwidth]{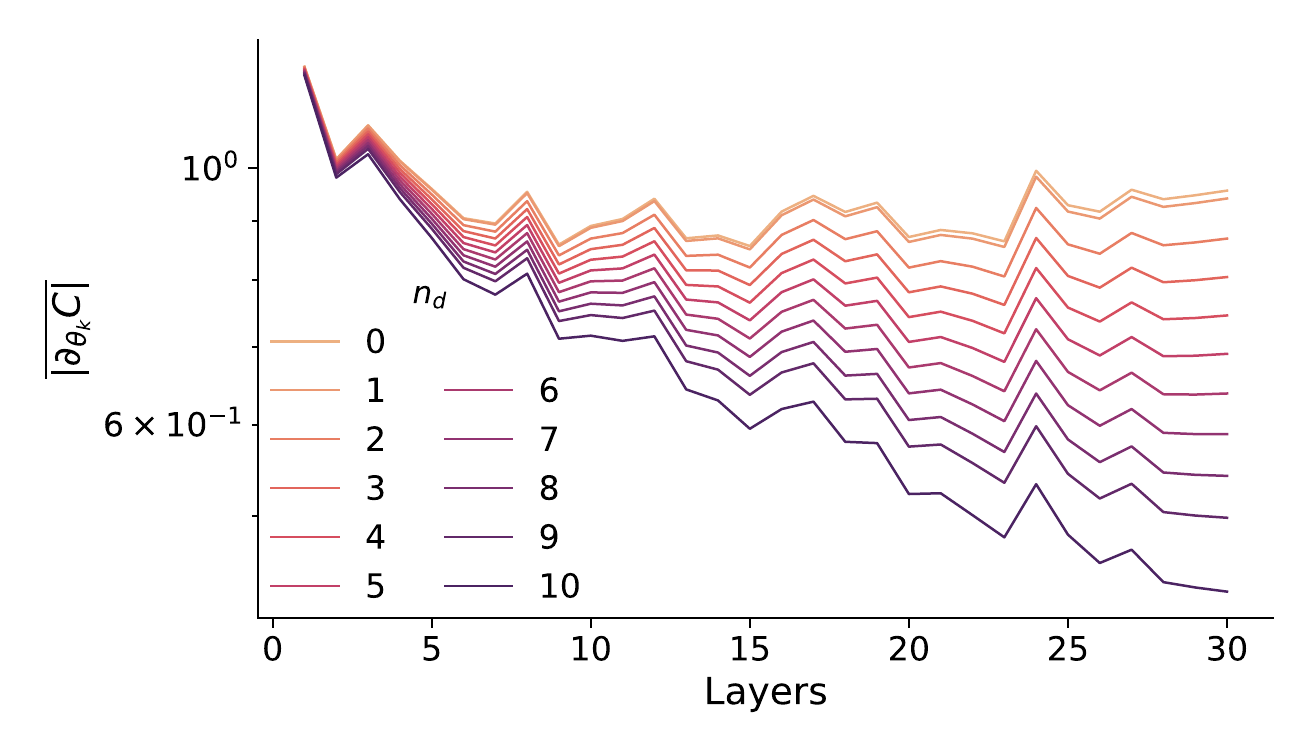}\label{fig:low10}}
    \caption{\textbf{HVA $\overline{|\partial_{\theta_k} C|}$ for 4,6 8, and 10 qubits at low layer depths.} Here we have plotted the realistic noise model simulations for qubit counts 4 in (a), 6 in (b), 8 in (c), and 10 in (d) at depths of up to 30 layers, computed at each layer count. We see that the general pattern seen in the main text holds, with higher noisy qubit counts decaying faster than lower ones. 
    }\label{fig:low_depth}
\end{figure}

\section{Repository}\label{app:repository}
We have prepared a repository which can be found at \href{https://github.com/danielbultrini/Clean\_Dirty\_Qubits}{https://github.com/danielbultrini/Clean\_Dirty\_Qubits}. This contains data frames with numerical results and functions required to plot all the figures shown. In addition, to reproduce the depolarizing noise model results, an open-source Qiskit-based code is provided. This code is not optimized but does implement the depolarizing clean and dirty model. Its aim is to make it easier for interested readers to expand or alter it for their own purposes. 

\end{document}